\documentclass[xcolor=svgnames,11pt,reqno]{amsart}

\usepackage[T1]{fontenc}
\usepackage[round]{natbib}
\usepackage{caption,subcaption}     
\usepackage{comment,dirtytalk,anyfontsize}
\usepackage{lmodern,lipsum,appendix}
\usepackage{amsfonts,dsfont,mathrsfs,nicefrac,yhmath}
\usepackage{amsmath,amssymb,amsthm,mathtools,bm,bbm}
\usepackage{enumitem}
\usepackage{anyfontsize}
\usepackage{microtype}
\usepackage{fancyhdr}
\usepackage{relsize}
\usepackage{cancel}
\usepackage{tikz,graphicx,xcolor}
\usepackage{booktabs}
\usepackage{siunitx}

\numberwithin{equation}{section}

\newtheorem{theorem}{Theorem}[section]
\newtheorem{lemma}[theorem]{Lemma}
\newtheorem{corollary}[theorem]{Corollary}
\newtheorem{proposition}[theorem]{Proposition}
\newtheorem{assumption}[theorem]{Assumption}

\theoremstyle{definition}

\newtheorem{definition}[theorem]{Definition}
\newtheorem{example}[theorem]{Example}

\newcommand{\cX}{\mathcal{X}}
\newcommand{\cA}{\mathcal{A}}

\newcommand{\cF}{\mathcal{F}}

\newcommand{\bbR}{\mathbb{R}}
\newcommand{\p}{\mathbb{P}}

\renewcommand{\(}{\left(}
\renewcommand{\)}{\right)}
\renewcommand{\tilde}{\widetilde}

\newcommand{\figref}[1]{\figurename~\ref{#1}}

\DeclareMathOperator*{\argmax}{\arg\max}

\allowdisplaybreaks

\usepackage{float}
\usepackage{subcaption}
\usepackage{multirow}
\usepackage[left=2.5cm,top=2.5cm,right=2.5cm,bottom=2.5cm,head=3cm,headsep=1cm, foot=3cm]{geometry}

\usepackage{hyperref}
\definecolor{darkblue}{rgb}{0.1,0.1,0.9}
\definecolor{darkred}{rgb}{0.9,0.1,0.1}

\hypersetup{colorlinks=true,linkcolor=blue,citecolor=blue,urlcolor=blue}

\begin{document}

\title[Betting under Common Beliefs: The Effect of Probability Weighting]{Betting under Common Beliefs:\vspace{0.3cm}\\The Effect of Probability Weighting\\\vspace{0.1cm}}


\begin{abstract}
This paper examines the impact of introducing a Rank-Dependent Utility (RDU) agent into a von Neumann-Morgenstern (vNM) pure-exchange economy with no aggregate uncertainty. In the absence of the RDU agent, the classical theory predicts that Pareto-optimal allocations are full-insurance, or no-betting, allocations. We show how the probability weighting function of the RDU agent, seen as a proxy for probabilistic risk aversion that is not captured by marginal utility of wealth, can lead to Pareto optima characterized by endogenous betting, despite common baseline beliefs. Such endogenous betting at an optimum leads to uncertainty-generating trade arising purely from heterogeneity in the perception of risk, rather than in beliefs. Our results formalize the intuitive understanding that probability weighting can act as an endogenous source of belief heterogeneity, and provide a new behavioral foundation for the coexistence of common beliefs and speculative behavior, in an environment with no initial aggregate uncertainty. Interpreting the RDU agent's nonlinear weighting function as an ``internality'' prompts the question of whether a social planner should intervene. We show how a benevolent social planner can nudge the RDU agent to behave closer to a vNM agent, through costly statistical or financial education, thereby  (partially) restoring the optimality of full-insurance allocations.
\end{abstract}


\author[Patrick Bei{\ss}ner, Tim J.\ Boonen, and Mario Ghossoub]{Patrick Bei{\ss}ner\vspace{0.15cm}\\Australian National University\vspace{0.6cm}\\
Tim J.\ Boonen\vspace{0.15cm}\\University of Hong Kong\vspace{0.6cm}\\
Mario Ghossoub\vspace{0.15cm}\\University of Waterloo\vspace{0.7cm}\\
This draft: 
\today\vspace{0.2cm}}

\address{{\bf Patrick Bei{\ss}ner}:  Research School of Economics -- The Australian National University -- Canberra, ACT 2600 -- Australia\vspace{0.15cm}}
\email{\href{mailto:patrick.beissner@anu.edu.au}{patrick.beissner@anu.edu.au}\vspace{0.4cm}}

\address{{\bf Tim J.\ Boonen}:  
Department of Statistics and Actuarial Science
 -- The University of Hong Kong -- Pokfulam  -- Hong Kong\vspace{0.15cm}}
\email{\href{mailto:tjboonen@hku.hk}{tjboonen@hku.hk}\vspace{0.4cm}}

\address{{\bf Mario Ghossoub}: University of Waterloo -- Department of Statistics and Actuarial Science -- 200 University Ave.\ W.\ -- Waterloo, ON, N2L 3G1 -- Canada\vspace{0.15cm}}
\email{\href{mailto:mario.ghossoub@uwaterloo.ca}{mario.ghossoub@uwaterloo.ca}}

\thanks{\textit{Key Words and Phrases:} Risk Sharing, Betting, Pareto Optimality, Full-Insurance Allocations, Rank-Dependent Utility, Probability Weighting.\vspace{0.15cm} }

\thanks{\textit{JEL Classification:} C02, D86, G22. \vspace{0.15cm} }

\thanks{\textit{2010 Mathematics Subject Classification:} 91B30. \vspace{0.15cm}}

\thanks{Mario Ghossoub acknowledges financial support from the Natural Sciences and Engineering Research Council of Canada (NSERC Grant No.\ 2018-03961 and 2024-03744).}

\maketitle


\section{Introduction}

A fundamental question in economic theory is when and why economic agents engage in speculative trade (or betting). Starting from an economic environment with no aggregate uncertainty, the classical prediction under Expected-Utility (EU) theory is that risk-averse agents will refrain from any uncertainty-generating trade unless they hold heterogeneous probabilistic beliefs. Indeed, as shown by \cite{Billotetal2000,Billotetal2002} and \cite{Rigottietal2008}, when all agents are EU maximizers sharing the same belief, every Pareto-efficient allocation must be a \emph{no-betting allocation} (sometimes called a \emph{full-insurance allocation}), that is, an allocation where each individual's consumption is deterministic and equal across states. Conversely, any form of betting can be Pareto improving only if agents disagree about probability assessments. In the EU framework, therefore, heterogeneity in beliefs is both a necessary and sufficient condition for betting to be Pareto improving when starting from a full-insurance allocation.

\medskip

The prediction of the classical literature is, however, empirically unconvincing. As emphasized by \cite{Billotetal2000,Billotetal2002}, real-world economies exhibit far less betting than the classical theory predicts. If agents' beliefs differ because their subjective probabilities arise from differences in preferences, as in the subjective expected utility (SEU) model of \cite{deFinetti1937} and \cite{Savage}, then divergent preferences would generically induce heterogeneous beliefs, thereby implying speculative trade. The observed paucity of betting in the real world would therefore suggest an implausibly high degree of homogeneity in preferences. Following \cite{Billotetal2000,Billotetal2002}, we interpret this discrepancy as evidence that the classical SEU model may be too restrictive a description of actual behavior.

\medskip

\cite{Billotetal2000,Billotetal2002} argue that the failure of the SEU model is due to Ellsberg-type behavior, and they re-examine the validity of the classical model's predictions under either a multiple-priors maxmin model of decision making \`a la \cite{GilboaSchmeidler1989maxmin} (as in \cite{Billotetal2000}) or a Choquet-type model of decision making \`a la \cite{schmeidler89} (as in \cite{Billotetal2002}). \cite{BBG24}, on the other hand, argue that this might be due to Allais-type behavior, as encapsulated in the Rank-Dependent Utility (RDU) model of \cite{quiggin82,Quiggin1991}, which is based on distorting objectively given probabilities through a probability weighting function. \cite{BBG24} demonstrate how betting can arise in a market with two RDU agents having different distortions of a common baseline objective probability measure on the space: sufficiently risk-seeking probability weighting may induce betting at Pareto-efficient allocations, even without differences in baseline beliefs.

\medskip

In this paper, we take a similar stance to that of \cite{BBG24}, by starting from the initial position that the empirical failure of the classical model in predicting betting behavior in real-world markets could very well stem from Allais-type behavior, rather than Ellsberg-type behavior. We aim to show how the introduction of a single RDU agent, into an otherwise classical setting with EU agents that have common beliefs, can lead to drastically different predictions; namely, that betting might be Pareto improving, despite the presence of a common baseline probability measure on the state space. 

\medskip
 
Specifically, we study an economy with $n > 1$ agents and a deterministic aggregate endowment, in which $n-1$ agents are EU maximizers and the $n^{th}$ agent is an RDU maximizer. All agents share the same underlying probability measure on the state space, but the RDU agent distorts this measure through a nonlinear probability weighting function seen as a proxy for probabilistic risk aversion, that is, an element of risk aversion that is not captured by the marginal utility of wealth. We provide a general characterization of Pareto-efficient allocations in this mixed EU-RDU economy. We then identify conditions under which no-betting allocations remain Pareto efficient, and we show how deviations from linear probability weighting translate into well-defined regions of Pareto-efficient betting. Thereby, we demonstrate how the RDU agent's probabilistic risk aversion fundamentally alters the structure of Pareto optima. When the RDU agent's weighting function is linear, we recover the classical no-betting result of \cite{Billotetal2000,Billotetal2002}. When the weighting function departs from linearity, however, Pareto-efficient allocations can exhibit endogenous betting, that is, uncertainty-generating trade arising purely from heterogeneity in the perception of risk, rather than in beliefs. In essence, our results formalize the intuitive understanding that probability weighting can act as an endogenous source of belief heterogeneity. Even when all agents agree on the underlying probabilities, distortions generate effective disagreements that restore the possibility of beneficial speculative trade. Our framework therefore provides a new behavioral foundation for the coexistence of common beliefs and speculative behavior in an environment with no initial aggregate uncertainty.

\medskip

We develop comparative statics for endogenous uncertainty when the RDU agent is endowed with inverse S-shaped or S-shaped probability weighting functions. Inverse S-shaped (S-shaped) weighting functions arise when agents overweight (underweight) extreme gains and losses, and are studied in \cite{tversky1992advances}, \cite{Prelec98}, and, more recently, \cite{bleichrodt2023testing}. We show that a shift from linear to inverse S-shaped or S-shaped probability weighting functions systematically splits efficient allocations into a risk-free component and a risky component. The analysis provides a simple principle for identifying when full insurance prevails and when betting emerges, and it links the extent of probability distortions to the shape of the distribution functions of optimal payoffs. To make these mechanisms transparent, we offer a tractable characterization under exponential utility and \cite{Prelec98} type weighting specifications. Moreover, we evaluate welfare via certainty equivalents and discuss how compensating transfers based on the Kaldor-Hicks criterion (e.g., \cite{kaldor1939}) can convert a specific efficient allocation into the set of all Pareto-optimal allocations. 

\medskip

When agents use Prelec weighting functions, we characterize the optimal risk allocations in closed form. Interestingly, the more inverse S-shaped the weighting function becomes, the larger the probability of the full-insurance event, hence implying that the risk allocation yields a deterministic payoff with a larger probability. However, if the Prelec weighting function is S-shaped, then making it more S-shaped renders the probability of the full-insurance event smaller. For the Hurwitz weighting function used in \cite{bleichrodt2023testing}, we find that the probability of the atom increases with the ambiguity index, but the slope is less steep when the perceived ambiguity is larger.

\medskip

The aforementioned emergence of endogenous betting despite common beliefs, as a result of probability distortions, prompts the question of whether a social planner should intervene. Guided by libertarian paternalism, in the sense of  \cite{thaler2009nudge} and \cite{bernheim2018behavioral}, we consider a benevolent social planner who ``nudges'' the RDU agent through statistical or financial education. Education  directly impacts  the probability weighting function, pushing it toward linearity, so that the distorted beliefs move back toward the baseline belief. Education is costly, resulting in a reduction of the aggregate endowment in the economy. The outcome is to nudge the RDU agent to behave closer to an EU agent, thereby  (partially) restoring the risk-free full-insurance (no-betting) allocation of the reduced constant aggregate endowment. 

\medskip

Our conceptual framework for nudging the RDU agent builds upon the distinction between experienced utility and decision utility, which has received renewed attention since \cite{kahneman1997back}. In the same vein, \cite{chetty2015behavioral} highlights how behavioral economics generates new welfare implications. Indeed, behavioral biases often lead to differences between a policymaker's perspective, typically rooted in an agent's experienced utility, and the agent's decision utility. Therefore, an expansion of the set of policy tools that accounts for such differences is needed. 
Our view is that the RDU agent's nonlinear weighting function is an ``internality'' (e.g., \cite{Herrnsteinetal93}), that is, a within-person externality stemming from the individual's failure to successfully pursue their own interests.\footnote{This interpretation relates our work to  a broader class of behavioral biases  that address welfare consequences, such as low saving rates (\cite{thaler2004save}), changes in reference points (\cite{reck2023welfare}), and deficits in financial literacy (\cite{hastings2013financial,bernheim2018behavioral}).} The probability weighting function becomes a measurable proxy for the degree of misalignment between experienced and decision utility, offering both diagnostic insight and a lever for policy intervention.

\medskip

The remainder of this paper is organized as follows. Section~~\ref{SecSetting} and Section~~\ref{SecBettingPO} develop the model and characterize Pareto optima. Section~~\ref{Section:noisePO} examines the endogenous uncertainty in detail, providing comparative-static results linking distortions in probability weighting to the variance and skewness of efficient allocations. Section~~\ref{Sec:nud} introduces the planner's intervention and evaluates its effectiveness in restoring full insurance. Section~~\ref{sec:6} concludes. All proofs appear in the \hyperlink{LinkToAppendix}{Appendix}.

\bigskip
\section{The Economy}
\label{SecSetting}

\subsection{Preferences}
Let $(\Omega, \cF,\p)$ be a nonatomic probability space and let $\cX \subset L^\infty(\Omega, \cF,\p)$ be a given collection of random variables containing  all payoffs available to the economic agents. Consider a pure-exchange economy under uncertainty, with $n$ economic agents and no aggregate uncertainty. The first agent is a Rank-Dependent Utility (RDU) maximizer (as in \cite{quiggin82,Quiggin1991}), with a preference representation given by the functional $U_1: \cX\to \mathbb{R}$ defined by
$$
U_1(X)=\int u_1(X) \, \textnormal{d}\,T\circ\p, \ \ \forall \, X \in \cX,
$$

\noindent for some utility function $u_1$ and a probability weighting function $T: [0,1] \to [0,1]$, where integration is in the sense of Choquet. 

\medskip

The agents $i = 2, \ldots, n$ are Expected-Utility (EU) maximizers, each with a utility function $u_i$ and preferences represented by the functional $U_i: \cX\to \mathbb{R}$, with 
$$U_i(X)=\int u_i(X) \, \textnormal{d}\p.$$

\noindent Hence, all agents have the same underlying beliefs  $\p$ on $(\Omega, \cF)$. We make the following standard assumption.
  
\medskip

\begin{assumption}
\label{AssumpUtDist}
The function $T$ is  twice differentiable, strictly increasing, and such that $T(0)=0$ and $T(1)=1$. Moreover, each utility function $u_i$ is  twice differentiable, strictly concave, and  increasing. 
\end{assumption}

\medskip

\noindent Note that $U_1$ fails to be concave in general, unless $T$ is convex. Additionally, concavity of $u_1$ and convexity of $T$ are tantamount to strong risk aversion of the RDU agent (e.g., \cite{Chewetal1987}).
 
\medskip

\subsection{Allocations}
The initial endowments of the agents are denoted by $\zeta_1,\ldots,\zeta_n \in \cX$, assumed nonzero. Additionally,  we assume that there is no aggregate uncertainty in the economy, that is, the aggregate  endowment is constant: 
$$\sum_{i=1}^n\zeta_i=\mathtt{w}\in\mathbb{R}.$$

\noindent The set of feasible allocations is given by
$$\mathcal A(\mathtt{w}):=\Big\{\mathbf X:=\(X_1, \ldots, X_n\)\in \cX^n:\sum_{i=1}^nX_i=\mathtt{w}\Big\},$$

\noindent that is, the set of all vectors in $\cX^n$ that add up to the constant aggregate endowment $\mathtt{w}$.

\medskip

\begin{definition}
An allocation $\mathbf X \in \mathcal A(\mathtt{w})$ is said to be Pareto Optimal (in short, PO) if there is no allocation $\mathbf X'\in \mathcal A(\mathtt{w})$ such that  $U_i( X'_i) \geq U_i(X_i),$ for all $ i \in \{1, \ldots, n\},$ with at least one strict inequality. 

\medskip

Similarly, for each $Y\in \cX$, an allocation 
$$
\mathbf{X}_{-1} \in 
{\mathcal A}_{-1}(Y):=\Big\{\mathbf{X}_{-1}:=\(X_2, \ldots, X_n\) \in \cX^{n-1}:\sum_{i=2}^nX_i=Y\Big\}
$$ 
is said to be $Y$-Pareto Optimal (in short, $Y$-PO) if there is no allocation $\mathbf X' \in {\mathcal A}_{-1}(Y)$ such that 
$U_i(X'_i) \geq U_i(X_i)$, for all   $ i \in \{2, \ldots, n\},$ with at least one strict inequality.
\end{definition}

\bigskip

\section{Betting and Efficiency}
\label{SecBettingPO}

\subsection{The Utilitarian Welfare Function}
Our focus is on the following social welfare maximization problem:
\begin{equation}
\label{eq:gen-prob}
\sup_{\mathbf{X}\in \mathcal A(\mathtt{w})}
\left[U_1(X_1)+ \sum_{i=2}^n\lambda_i \, U_i(X_i)\right],
\end{equation}

\medskip

\noindent for a given vector of weights $\lambda\in\bbR^{n-1}_{++}$. We now make the following observation (see \cite{Dana93,Dana2011}), in light of Assumption \ref{AssumpUtDist}.

\medskip

\begin{proposition}
\label{PropPOMax}
Solutions to \eqref{eq:gen-prob} are PO. Moreover, if $T$ is convex, then any PO is a solution to \eqref{eq:gen-prob}, for some vector of weights $\lambda\in\mathbb{R}_{++}^{n-1}$.
\end{proposition}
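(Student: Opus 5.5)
The first claim follows from the standard argument, using strict positivity of all weights. Let $\mathbf X$ solve \eqref{eq:gen-prob} for some $\lambda\in\bbR^{n-1}_{++}$, and suppose $\mathbf X'\in\mathcal A(\mathtt{w})$ Pareto dominates it. Then $U_i(X_i')\geq U_i(X_i)$ for all $i$, with at least one strict inequality. The RDU agent carries weight $1>0$ and every EU agent carries weight $\lambda_i>0$. Hence
$$U_1(X_1')+\sum_{i=2}^n\lambda_iU_i(X_i')>U_1(X_1)+\sum_{i=2}^n\lambda_iU_i(X_i),$$
which contradicts optimality of $\mathbf X$. This part needs no concavity and no structure on $T$.

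For the converse, assume $T$ is convex. I would use a supporting-hyperplane argument on the utility possibility set.

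\textbf{Step 1 (concavity of each $U_i$).} Every $U_i$ is concave on $\cX$. For $i\geq 2$ this holds because $u_i$ is concave and expectation is linear. For $i=1$, convexity of $T$ makes $T\circ\p$ a supermodular (convex) capacity. The Choquet integral with respect to a convex capacity is then concave and superadditive on $L^\infty$. Composing with the concave increasing $u_1$ gives, for $\alpha\in[0,1]$,
$$U_1(\alpha X+(1-\alpha)Y)\ge \int \big(\alpha u_1(X)+(1-\alpha)u_1(Y)\big)\,\textnormal{d}\,T\circ\p\ge \alpha U_1(X)+(1-\alpha)U_1(Y).$$
The first inequality uses monotonicity of the Choquet integral; the second uses superadditivity and positive homogeneity.

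\textbf{Step 2 (convexity of the utility set).} Define
$$\mathcal U:=\{v\in\bbR^n:\exists\,\mathbf X\in\mathcal A(\mathtt{w}),\ v_i\le U_i(X_i)\ \forall i\}.$$
Since $\mathcal A(\mathtt{w})$ is convex and each $U_i$ is concave, $\mathcal U$ is convex. If $\mathbf X$ is PO, then $(U_i(X_i))_i$ lies on the boundary of $\mathcal U$. Indeed, it cannot be an interior point, since some nearby point with all coordinates strictly larger would then be achievable.

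\textbf{Step 3 (separation).} The supporting hyperplane theorem gives a nonzero $\mu\in\bbR^n_+$ such that $\mathbf X$ maximizes $\sum_i\mu_iU_i(X_i)$ over $\mathcal A(\mathtt{w})$. The vector $\mu$ is nonnegative because $\mathcal U$ is comprehensive downward.

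\textbf{Step 4 (strict positivity of the weights).} This is the main obstacle: we must show every $\mu_i>0$, so that we can normalize $\mu_1=1$ and set $\lambda_i=\mu_i/\mu_1$. Suppose instead that $\mu_j=0$ for some $j$, and pick $k$ with $\mu_k>0$. Transfer a small constant $\varepsilon>0$ from agent $j$ to agent $k$. Since $u_k$ is strictly increasing, $U_k$ strictly increases under constant shifts, using $T(1)=1$ for $k=1$. So the weighted sum strictly increases unless $\mu_k=0$, a contradiction.

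A subtlety arises with $u_i$ that is merely increasing. Strict concavity together with being increasing forces $u_i'>0$ on the interior of the domain, so constant shifts do strictly increase utility. Hence all $\mu_i>0$, which completes the converse. I would cite \cite{Dana93,Dana2011} for the technical details of the separation in this infinite-dimensional commodity space. Note, however, that the separation itself is performed in $\bbR^n$, so no topological assumptions on $\cX$ are required.
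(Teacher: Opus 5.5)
Your proof is correct and takes the same route as the paper, which simply cites Chew et al.\ (1987) for concavity of $U_1$ when $T$ is convex and Carlier and Dana (2006, Proposition 3.4) for the separation argument. You prove both cited facts yourself: concavity of $U_1$ from supermodularity of $T\circ\p$ and superadditivity of the Choquet integral, and the separation by working in $\bbR^n$ and using constant transfers to make the weights strictly positive (which, like the paper, tacitly assumes $\cX$ is convex and closed under adding constants).
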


\medskip

Set $X={X}_1$ and define, for a given $\lambda\in\mathbb{R}_{++}^{n-1}$, the ``inner problem'':
\begin{align}
\label{InnerProb}
U_\lambda(\mathtt{w}-X):=\sup_{\mathbf{X}_{-1}\in{\mathcal A}_{-1}(\mathtt{w}-X)}\ U^\lambda(\mathbf{X}_{-1}),\ \  \text{where} \ \ U^\lambda(\mathbf{X}_{-1})
:=
\sum_{i=2}^n\lambda_i \, U_i(X_i).
\end{align}

\noindent Then problem \eqref{eq:gen-prob} can be reformulated as:
\begin{align}
\label{eq:gen-prob2}
\sup_{X\in \cX} \, \left[U_1(X)+U_\lambda(\mathtt{w}-X)\right].
\end{align}

\medskip

\noindent By standard results on PO allocations for risk-averse EU maximizers, solutions to \eqref{eq:gen-prob2} are characterized by the Borch rule (see \cite{borch1962} and \cite{Wilson1968}). Specifically:

\medskip

\begin{proposition} 
\label{PropInner1} 
The following holds, for any $\mathbf{X}_{-1} \in 
{\mathcal A}_{-1}(Y)$:

\medskip

\begin{enumerate}
\item $\mathbf{X}_{-1}$ is $(\mathtt{w}-X)$-PO if and only if there exists some $\lambda\in\mathbb{R}_{++}^{n-1}$ such that $\mathbf{X}_{-1}$ is optimal for \eqref{InnerProb}.

\medskip

\item $\mathbf{X}_{-1}$ is optimal for \eqref{InnerProb} if and only if 
$\lambda_i \, u_i'(X_i)=\lambda_j \, u_j'(X_j),$    for all $i,j \in \{2, \ldots, n\}$.
\end{enumerate}
\end{proposition}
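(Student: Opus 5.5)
Both parts rest on the concavity of the EU functionals $U_i$, $i\ge 2$, on $\cX\subset L^\infty$ (since each $u_i$ is strictly concave) and on the linear structure of ${\mathcal A}_{-1}(Y)$, with $Y=\mathtt{w}-X$ fixed.

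For part (1), the ``if'' direction is immediate. If $\mathbf{X}_{-1}$ maximizes $U^\lambda$ with $\lambda\in\mathbb{R}^{n-1}_{++}$ and some $\mathbf X'\in{\mathcal A}_{-1}(Y)$ weakly improved everyone and strictly improved one agent, then $U^\lambda(\mathbf X')>U^\lambda(\mathbf X_{-1})$, a contradiction.

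For the ``only if'' direction I will run the standard supporting-hyperplane argument. Define the utility possibility set
$$\mathcal U:=\{v\in\mathbb R^{n-1}: \exists\,\mathbf X'\in{\mathcal A}_{-1}(Y),\ v_i\le U_i(X_i')\ \forall i\}.$$
It is convex, because convex combinations of feasible allocations are feasible and each $U_i$ is concave. If $\mathbf X_{-1}$ is $Y$-PO, the point $U(\mathbf X_{-1})$ lies on the boundary of $\mathcal U$, so separation gives a nonzero $\lambda\ge 0$ that is maximized over $\mathcal U$ at $U(\mathbf X_{-1})$.

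The main obstacle is strict positivity of all $\lambda_i$. I will handle it by transferring a small constant amount of wealth. Suppose $\lambda_k=0$ and $\lambda_j>0$. Moving $\varepsilon>0$ of sure wealth from agent $k$ to agent $j$ strictly raises $U_j$ (by a first-order argument, since $u_j'>0$ on the bounded range, which I take increasing to mean strictly) and costs nothing in $\lambda$-weight, contradicting maximality. Since $U_k$ drops, this transfer does not contradict Pareto optimality, but it does contradict $\lambda$-maximality, so $\lambda_k>0$. Hence $\lambda\in\mathbb R^{n-1}_{++}$ and $\mathbf X_{-1}$ solves \eqref{InnerProb}.

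For part (2), sufficiency goes through concavity. Suppose $\lambda_iu_i'(X_i)=\Lambda$ (a common random variable) for all $i$. For any $\mathbf X'\in{\mathcal A}_{-1}(Y)$,
$$\sum_i\lambda_i\bigl(u_i(X_i')-u_i(X_i)\bigr)\le \Lambda\sum_i(X_i'-X_i)=0$$
pointwise, and taking expectations shows optimality.

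Necessity of the Borch rule goes through pairwise perturbations. Fix $i\neq j$ and an event $A$, and consider $X_i+t\mathbf 1_A$, $X_j-t\mathbf 1_A$. Differentiating at $t=0$, which is justified by dominated convergence since everything is bounded in $L^\infty$, gives
$$\mathbb E\bigl[(\lambda_iu_i'(X_i)-\lambda_ju_j'(X_j))\mathbf 1_A\bigr]=0\quad\text{for all }A\in\cF.$$
Hence $\lambda_iu_i'(X_i)=\lambda_ju_j'(X_j)$ a.s.
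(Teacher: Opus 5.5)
Your proposal is correct. The paper gives no proof of its own for this proposition: it invokes the classical results of Borch (1962) and Wilson (1968), together with the same kind of separation argument it cites from Carlier and Dana for Proposition~\ref{PropPOMax}. Your argument is exactly that standard route. It uses a supporting hyperplane for the convex utility-possibility set, strict positivity of the weights via sure-wealth transfers (strict concavity plus monotonicity on $\mathbb{R}$ already forces $u_i'>0$, so no extra assumption is needed there), the gradient inequality for sufficiency of the Borch rule, and pairwise perturbations $X_i+t\mathbf{1}_A$, $X_j-t\mathbf{1}_A$ for necessity.

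The only implicit requirement, which the paper shares, is that $\cX$ be rich enough (e.g.\ $\cX=L^\infty$) to contain the convex combinations, constant transfers, and indicator perturbations you use.
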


\medskip

In particular, if $\mathbf{X}_{-1}\in{\mathcal A}_{-1}(\mathtt{w}-X)$ is $(\mathtt{w}-X)$-PO then it must be comonotonic\footnote{$Y,Z \in \cX$ are said to be comonotonic if $\left[Y(\omega)-Y(\omega
')\right]\left[Z(\omega)-Z(\omega')\right]\ge0$, for all $\omega,\omega
'\in \Omega$. Moreover, a vector is comonotonic if it is pairwise comonotonic.}, by monotonicity of marginal utilities (Assumption \ref{AssumpUtDist}). Therefore, Pareto-optimal allocations are comonotonic vectors, and hence it suffices to look for solutions to \eqref{InnerProb} that are comonotonic. For a given $Y\in \cX$, let ${\mathcal A}_{-1}^C(Y)$ denote the collection of comonotonic allocations in ${\mathcal A}_{-1}(Y)$:
$$
\cA_{-1}^C(Y) 
:= \left\{
\mathbf{X}_{-1}\in \cA_{-1}(Y) : \, \mathbf{X}_{-1} \hbox{ is comonotonic} 
\right\}.
$$


\noindent Then in light of the above,
\begin{align}
\label{InnerProbComEq}
U_\lambda(\mathtt{w}-X)
=\sup_{\mathbf{X}_{-1}\in{\mathcal A}^C_{-1}(\mathtt{w}-X)}\ U^\lambda(\mathbf{X}_{-1}).
\end{align}

\medskip

\subsection{Quantile Formulation of Welfare Functions}

By nonatomicity, the probability space supports a random variable $\mathtt{U}$ that is uniformly distributed over $(0,1)$ \cite[Proposition D.16 \& Lemma D.17]{FollmerSchied2025}. The following result allows us to exploit an important property of comonotonic allocations that leads to a recasting of problem \eqref{InnerProbComEq} in terms of an optimization of social welfare functions over quantiles, rather than over random variable. This gives an analytically tractable formulation that will yield a closed-form characterization of Pareto optima.  

\medskip

For any $X\in \cX$, let $F_X$ and $F_X^{-1}$ denote the cumulative distribution function and (left-continuous) quantile function of $X$, respectively, defined by:
$$F_X(x) := \p\(s\in S: X(s) \leq x\)
\ \ \hbox{ and } \ \ 
F^{-1}_{X}\(t\) := \inf\Big\{x \in \mathbb{R} \ \Big | \ F_{X}\(x\) \geq t\Big\}.$$


\begin{proposition}
\label{PropComChar}
For a given $X \in \cX$, the following are equivalent: 

\medskip

\begin{enumerate}
\item[(1)] $\mathbf{X}_{-1}\in{\mathcal A}_{-1}^C(\mathtt{w}-X)$.

\medskip

\item[(2)] There exists some $\mathtt{U}\sim Uni(0,1)$ that is comonotonic with $\mathtt{w}-X$, such that $X_i = F_{X_i}^{-1}(\mathtt{U})$, a.s., for each $i \in \{2, \ldots, n\}$.
\end{enumerate}
\end{proposition}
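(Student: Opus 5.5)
The plan is to prove the two implications separately, relying on the standard characterization of comonotonic vectors through a common uniform randomization (Denneberg; F\"ollmer--Schied, Lemma~4.89/4.90). Set $Y:=\mathtt{w}-X$. The key fact is that a vector $(Z_1,\dots,Z_k)$ is comonotonic if and only if there exist nondecreasing functions $f_1,\dots,f_k$ and a random variable $V$ with $Z_i=f_i(V)$ for all $i$. In particular, if the vector sums to $Y$, one may take $V=Y$, with each $f_i$ nondecreasing and $\sum_i f_i=\mathrm{id}$ on the range of $Y$. Each $f_i$ is then $1$-Lipschitz, since every $f_i$ is nondecreasing and they add up to the identity.

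\textbf{(1)$\Rightarrow$(2).} Let $\mathbf{X}_{-1}\in\mathcal{A}^C_{-1}(Y)$. Because the vector $(X_2,\dots,X_n)$ is comonotonic, the vector $(X_2,\dots,X_n,Y)$ with $Y=\sum_{i\ge2}X_i$ is also comonotonic. The standard lemma on nonatomic spaces (F\"ollmer--Schied, Lemma A.32/Proposition D.16--D.17 style) gives, for any $Z\in\cX$, some $\mathtt{U}\sim Uni(0,1)$ with $Z=F_Z^{-1}(\mathtt{U})$ a.s. Apply this to $Z=Y$; when $F_Y$ has flat parts or atoms, the lemma constructs $\mathtt{U}$ by randomizing inside atoms. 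This $\mathtt{U}$ is comonotonic with $Y$, up to a null set, since $Y$ is a nondecreasing function of it. Then $X_i=f_i(Y)=f_i(F_Y^{-1}(\mathtt{U}))$ with $f_i\circ F_Y^{-1}$ nondecreasing and left-continuous (after adjusting on null sets). A nondecreasing left-continuous function $g$ of a uniform variable satisfies $g=F^{-1}_{g(\mathtt{U})}$ Lebesgue-a.e. Hence $X_i=F_{X_i}^{-1}(\mathtt{U})$ a.s.

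The main obstacle is the atom/flat-part bookkeeping. The lemma produces a $\mathtt{U}$ that is comonotonic with $Y$ only in the sense that $Y=F_Y^{-1}(\mathtt{U})$. We must also make sure that $f_i$, which is defined only on the range of $Y$, composes correctly, and that the identity $g=F_{g(\mathtt{U})}^{-1}$ holds a.s. My first attempt is to take $\mathtt{U}:=F_Y(Y-)+V\,(F_Y(Y)-F_Y(Y-))$ with an independent uniform $V$. Such a $V$ exists by nonatomicity, or by using $\mathtt{U}$ itself when $Y$ is continuous. I will then verify that $\mathtt{U}$ is uniform and that $Y=F_Y^{-1}(\mathtt{U})$ a.s.

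\textbf{(2)$\Rightarrow$(1).} If $X_i=F_{X_i}^{-1}(\mathtt{U})$ a.s. for every $i$, then each $X_i$ is a nondecreasing function of the single variable $\mathtt{U}$. Pairwise comonotonicity is then immediate: $[F^{-1}_{X_i}(u)-F^{-1}_{X_i}(u')][F^{-1}_{X_j}(u)-F^{-1}_{X_j}(u')]\ge0$ holds for all $u,u'$, after modifying on a null set as is customary. Membership in $\mathcal{A}_{-1}(Y)$ is part of the standing hypothesis $\mathbf{X}_{-1}\in\mathcal{A}_{-1}(\mathtt{w}-X)$. Comonotonicity of $\mathtt{U}$ with $\mathtt{w}-X$ then follows automatically, because $\mathtt{w}-X=\sum_i F^{-1}_{X_i}(\mathtt{U})$ is a nondecreasing function of $\mathtt{U}$, which makes that clause of (2) consistent.
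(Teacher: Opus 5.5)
Your proposal is correct and follows essentially the same route as the paper. Both arguments write $X_i=g_i(\mathtt{w}-X)$ with nondecreasing $1$-Lipschitz $g_i$ summing to the identity (F\"ollmer--Schied), choose $\mathtt{U}$ with $\mathtt{w}-X=F^{-1}_{\mathtt{w}-X}(\mathtt{U})$ a.s.\ by nonatomicity, and use that a nondecreasing continuous transform of a quantile function is the quantile function of the transformed variable. The converse is immediate under the standing hypothesis $\sum_{i\ge 2}X_i=\mathtt{w}-X$.

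One caution on your side construction: a uniform $V$ \emph{independent} of $Y$ need not exist on a nonatomic space. For example, on $[0,1]$ with Lebesgue measure, any $V$ independent of $Y(\omega)=\omega\,\mathbf{1}_{[0,1/2)}(\omega)+\mathbf{1}_{[1/2,1]}(\omega)$ is a.s.\ constant. Your construction only needs $V$ to be uniform under $\p(\cdot\mid Y=y)$ for each atom $y$ of $Y$, which nonatomicity does supply, and the Lemma~D.17 you cite already covers this step anyway.
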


\medskip

\noindent Therefore, for a given $\mathbf{X}_{-1}\in{\mathcal A}_{-1}^C(\mathtt{w}-X)$, and for each $i \in \{2, \ldots, n\}$, 
\begin{align*}
U_i(X_i)
=\int_0^1 u_i\left(F_{X_i}^{-1}(t)\right) \, \textnormal{d}t.
\end{align*}


\noindent Consequently,
\begin{align*}
U^\lambda(\mathbf{X}_{-1})
=  \int_0^1 \sum_{i=2}^n \lambda_i \, u_i(F_{X_i}^{-1}(t)) \,  \textnormal{d}t,
\end{align*}

\noindent and for each $t \in (0,1)$, since quantiles are additive over comonotonic random variables \cite[Lemma 4.96]{FollmerSchied2025}, we obtain
$$
\sum_{i=2}^n F_{X_i}^{-1}(t) 
=  F_{\mathtt{w}- {X}}^{-1}(t).
$$


The quantile reformulation  motivates the following problem:
\begin{align}
\label{InnerProbCom}
\sup_{f_2,\ldots, f_n\in \mathcal{Q}} \, \int_0^1 \sum_{i=2}^n \lambda_i \, u_i\left(f_i(t)\right) \,  \textnormal{d}t,  \quad \text{ such that  } \ \sum_{i=2}^n f_i =  F_{\mathtt{w} - {X}}^{-1},
\end{align}

\medskip

\noindent where $\mathcal{Q}$ is the collection of all quantile functions, i.e.,
\begin{equation*}
\mathcal{Q} := \Big\{f: (0,1) \rightarrow \mathbb{R} \ \Big | f \hbox{ is nondecreasing and left-continuous} \Big\}.
\end{equation*}


\begin{proposition}
\label{PropInnerQuant}
Fix $X \in \cX$ and let $\mathtt{U}\sim Uni(0,1)$ be such that $\mathtt{w}-{X} = F_{\mathtt{w}-{X}}^{-1}(\mathtt{U})$, a.s.\ Then $\{f_i^*\}_{i=2}^n$ is optimal for \eqref{InnerProbCom} if and only if $\{f_i^*(\mathtt{U})\}_{i=2}^n$ is optimal for \eqref{InnerProbComEq}.
\end{proposition}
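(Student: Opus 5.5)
The plan is to show that the two problems have the same value and that the map $\{f_i\}_{i=2}^n\mapsto\{f_i(\mathtt{U})\}_{i=2}^n$ carries feasible points of \eqref{InnerProbCom} to feasible points of \eqref{InnerProbComEq} while preserving the objective. Conversely, every feasible point of \eqref{InnerProbComEq} should give a feasible point of \eqref{InnerProbCom}, namely its vector of quantile functions, with the same objective. Optimality then transfers in both directions.

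\emph{Step 1 (from quantiles to random variables).} Fix $f_2,\ldots,f_n\in\mathcal{Q}$ with $\sum_{i\ge 2} f_i=F^{-1}_{\mathtt{w}-X}$, and set $X_i:=f_i(\mathtt{U})$.
\begin{itemize}
\item Since $\mathtt{U}\sim Uni(0,1)$ and $f_i$ is nondecreasing and left-continuous, $F^{-1}_{X_i}=f_i$ on $(0,1)$. This is the standard fact that $f(\mathtt{U})$ has quantile function $f$ (cf.\ \cite{FollmerSchied2025}).
\item The $X_i$ are pairwise comonotonic, being nondecreasing functions of the same $\mathtt{U}$.
\item Feasibility holds because $\sum_i X_i=F^{-1}_{\mathtt{w}-X}(\mathtt{U})=\mathtt{w}-X$ a.s. by the choice of $\mathtt{U}$.
\item Boundedness holds because $f_i$ is bounded on $(0,1)$ when the constraint sum is bounded and the $f_i$ are monotone; one should check this, or restrict to the bounded case, which is the relevant one since $\cX\subset L^\infty$.
\end{itemize}
Hence $\mathbf{X}_{-1}\in\mathcal{A}^C_{-1}(\mathtt{w}-X)$. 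Since $U_i$ is law invariant, $U_i(X_i)=\int_0^1u_i(f_i(t))\,dt$, so the two objectives coincide.

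\emph{Step 2 (from random variables to quantiles).} Take any $\mathbf{Y}_{-1}\in\mathcal{A}^C_{-1}(\mathtt{w}-X)$ and put $g_i:=F^{-1}_{Y_i}\in\mathcal{Q}$. Comonotonic additivity of quantiles (Lemma 4.96 in \cite{FollmerSchied2025}) gives $\sum_i g_i=F^{-1}_{\mathtt{w}-X}$, so $\{g_i\}$ is feasible for \eqref{InnerProbCom}. Law invariance gives $U^\lambda(\mathbf{Y}_{-1})=\int_0^1\sum_i\lambda_iu_i(g_i(t))\,dt$. Together with Step 1, this shows the two suprema are equal.

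\emph{Step 3 (conclusion).} Suppose $\{f_i^*\}$ is optimal for \eqref{InnerProbCom}. Then $\{f_i^*(\mathtt{U})\}$ is feasible for \eqref{InnerProbComEq} and attains the common supremum, so it is optimal. Conversely, suppose $\{f_i^*(\mathtt{U})\}$ is optimal for \eqref{InnerProbComEq}. By Step 1 its quantiles are exactly $f_i^*$, and its value equals $\int_0^1\sum_i\lambda_iu_i(f_i^*)\,dt$. Since the suprema coincide, $\{f_i^*\}$ is optimal for \eqref{InnerProbCom}. One implicitly needs $\{f_i^*\}$ to be feasible for \eqref{InnerProbCom}, which follows from Step 2 applied to $\mathbf{Y}_{-1}=\{f_i^*(\mathtt{U})\}$.

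\emph{Main obstacle.} The only delicate point is the identity $F^{-1}_{f(\mathtt{U})}=f$ for left-continuous nondecreasing $f$, together with $\sum_i f_i(\mathtt{U})=\mathtt{w}-X$ holding almost surely rather than everywhere. Both reduce to the choice of $\mathtt{U}$ in the hypothesis and to null-set considerations. I would handle this by citing Lemma A.23 of \cite{FollmerSchied2025}, or by the direct argument $\{f(\mathtt{U})\le x\}=\{\mathtt{U}\le \sup\{t:f(t)\le x\}\}$ up to null sets.
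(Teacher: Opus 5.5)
Your proposal is correct and follows essentially the same route as the paper. Both arguments pass between feasible quantile tuples and comonotonic allocations via $f_i\mapsto f_i(\mathtt{U})$ and $Y_i\mapsto F^{-1}_{Y_i}$, using comonotonic additivity of quantiles and law invariance of the $U_i$. The only difference is cosmetic: you first show the two suprema coincide, whereas the paper argues each implication by contradiction. Your additional check that each $f_i$ is automatically bounded (being nondecreasing with bounded sum), so that $f_i(\mathtt{U})\in L^\infty$, is a detail the paper leaves implicit.
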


\smallskip

\noindent Proposition \ref{PropInnerQuant} implies that solutions to \eqref{InnerProbCom} also yield solutions of the ``inner problem'' \eqref{InnerProb}.

\medskip
\subsection{Characterization of Efficient Allocations}
We can now  provide a closed-form characterization of efficient allocations, when one agent is an RDU-maximizer. For each $i \in \{2, \ldots, n\}$, let $I_i := \left(u_i^\prime\right)^{-1}$.  For a given vector of welfare weights $\lambda \in\mathbb{R}_{++}^{n-1}$, let $J_\lambda := I_\lambda^{-1}$, where $I_\lambda(x):= \sum_{i=2}^n I_i\left(\frac{x}{\lambda_i}\right)$, for all $x \geq 0$.

\medskip

\begin{lemma}\label{PropRepAg}
The $(n-1)$-tuple  $\{ f^\diamond_i\}_{i=2}^n$ of the quantile functions below is optimal for  Problem \eqref{InnerProbCom}:
\begin{equation}\label{eqftilde}
 f^\diamond_i := I_i\Big(\lambda_i^{-1}J_\lambda\big(F^{-1}_{\mathtt{w}-{X}}\big)\Big), \ \ \text{for all } i \in \{2, \ldots, n\}.  
\end{equation}

\medskip

\noindent Moreover, if $\mathtt{U}  \sim Uni(0,1)$ is such that $\mathtt{w}-{X} = F_{\mathtt{w}-{X}}^{-1}(\mathtt{U})$, a.s., then the following holds.

\medskip

\begin{enumerate}
\item $\lambda_i \, u_i^\prime( f^\diamond_i(\mathtt{U})) =  \lambda_j \, u_j^\prime( f^\diamond_j(\mathtt{U})) = u_\lambda^\prime(\mathtt{w}-{X})$, a.s., all $i,j \in \{2, \ldots, n\}$. 

\bigskip

\item $\left\{ f^\diamond_i(\mathtt{U})\right\}_{i=2}^n$ is optimal for \eqref{InnerProb} and hence $(\mathtt{w}-{X})$--PO, and $ f^\diamond_i(\mathtt{U}) = I_i\(\frac{J_\lambda\(\mathtt{w}-{X}\)}{\lambda_i}\)$, a.s., for each $i \in \{2, \ldots, n\}$. 

\bigskip

\item 
Additionally, 
\begin{align*}
U_\lambda(\mathtt{w}-X)
=\sup_{\mathbf{X}_{-1}\in{\mathcal A}_{-1}(\mathtt{w}-X)} \ \sum_{i=2}^n\lambda_i \, U_i(X_i)
= \sum_{i=2}^n\lambda_i \, U_i\( f^\diamond_i(\mathtt{U})\)
=  E^\p[ u_\lambda\(\mathtt{w}-{X}\) ].
\end{align*}
\end{enumerate}

\noindent Here, $u_\lambda$ is the pointwise aggregate utility given by
\begin{equation}
\label{eqAggUt-1}
u_\lambda(x) := \sum_{i=2}^n \lambda_i \, u_i\left(I_i\left(\tfrac{J_\lambda(x)}{\lambda_i}\right)\right).
\end{equation}

\end{lemma}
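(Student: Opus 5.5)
The plan is to solve Problem \eqref{InnerProbCom} pointwise in $t$ and then verify that the pointwise maximizers are admissible quantile functions. For fixed $t\in(0,1)$, set $y:=F^{-1}_{\mathtt{w}-X}(t)$ and consider the finite-dimensional problem
\[
\max_{x_2+\cdots+x_n=y}\ \sum_{i=2}^n \lambda_i u_i(x_i).
\]
By Assumption \ref{AssumpUtDist}, each $u_i$ is strictly concave and differentiable. Hence the objective is strictly concave, and the Lagrange/KKT conditions $\lambda_i u_i'(x_i)=\mu$ for all $i$ are necessary and sufficient.

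From these conditions, $x_i=I_i(\mu/\lambda_i)$. The budget constraint then reads $I_\lambda(\mu)=y$, which gives $\mu=J_\lambda(y)$. Since each $I_i$ is strictly decreasing and continuous, $I_\lambda$ is strictly decreasing and continuous. I will assume, as the statement implicitly does, that its range covers the values taken by $F^{-1}_{\mathtt{w}-X}$ (Inada-type behaviour, or bounded $X$ inside the relevant range), so that $J_\lambda$ is well defined.

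The pointwise maximizers are therefore $f^\diamond_i(t)=I_i(\lambda_i^{-1}J_\lambda(F^{-1}_{\mathtt{w}-X}(t)))$. Since $J_\lambda$ is decreasing and $I_i$ is decreasing, each $f^\diamond_i$ is a nondecreasing function of $F^{-1}_{\mathtt{w}-X}$. It is also left-continuous, because it is the composition of continuous maps with a left-continuous nondecreasing function. Thus $f^\diamond_i\in\mathcal Q$.

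For optimality, take any feasible $(f_i)$. For every $t$ it satisfies $\sum_i\lambda_iu_i(f_i(t))\le\sum_i\lambda_iu_i(f^\diamond_i(t))$, and integrating over $t$ gives optimality for \eqref{InnerProbCom}. Integrability is not an issue because $X\in L^\infty$, so all quantiles involved are bounded.

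For item (1), the first-order conditions give $\lambda_iu_i'(f^\diamond_i(\mathtt{U}))=J_\lambda(\mathtt{w}-X)$ a.s. It remains to check that this equals $u_\lambda'(\mathtt{w}-X)$. Differentiate \eqref{eqAggUt-1} with respect to $x$:
\[
u_\lambda'(x)=\sum_i \lambda_i u_i'\bigl(I_i(J_\lambda(x)/\lambda_i)\bigr)\,\tfrac{d}{dx}I_i\bigl(J_\lambda(x)/\lambda_i\bigr)=J_\lambda(x)\,\tfrac{d}{dx}I_\lambda(J_\lambda(x))=J_\lambda(x).
\]
This is an envelope-type computation. Differentiability of $J_\lambda$ follows from the inverse function theorem, since $u_i''<0$. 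Along the way it is enough to note that $u_i$ is twice differentiable.

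For item (2), Proposition \ref{PropInnerQuant} gives that $\{f^\diamond_i(\mathtt{U})\}$ is optimal for \eqref{InnerProbComEq}. I then pass from the comonotone problem to the full problem \eqref{InnerProb}. The supremum over $\mathcal A_{-1}$ is at least the supremum over $\mathcal A^C_{-1}$. Conversely, the same pointwise-in-$\omega$ argument applied directly to \eqref{InnerProb} shows that $\sum\lambda_iU_i(X_i)\le E^\p[u_\lambda(\mathtt{w}-X)]$ for every $\mathbf X_{-1}\in\mathcal A_{-1}(\mathtt{w}-X)$. This pointwise bound is cleaner than relying on the comonotonicity discussion alone.

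By Proposition \ref{PropInner1}, optimality implies $(\mathtt{w}-X)$-PO. The formula $f^\diamond_i(\mathtt{U})=I_i(J_\lambda(\mathtt{w}-X)/\lambda_i)$ holds because $F^{-1}_{\mathtt{w}-X}(\mathtt{U})=\mathtt{w}-X$ a.s.

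Item (3) follows by substituting these allocations into the objective and using the definition of $u_\lambda$. The main obstacle is the well-definedness of $J_\lambda$ on the relevant range together with its regularity. If no Inada conditions are available, I would restrict to the bounded range of $\mathtt{w}-X$ and note that $I_\lambda$ is a homeomorphism onto its image, which must then contain that range for interior solutions to exist. Otherwise corner solutions would appear, and the stated formula would have to be read with this caveat.
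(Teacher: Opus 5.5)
Your proposal is correct and follows essentially the paper's route: pointwise-in-$t$ maximization of a concave objective, where your KKT-sufficiency step is exactly the paper's gradient-inequality argument. Your direct pointwise-in-$\omega$ bound $\sum_i\lambda_iU_i(X_i)\le E^\p[u_\lambda(\mathtt{w}-X)]$, valid for every element of $\mathcal A_{-1}(\mathtt{w}-X)$, is a slightly cleaner route to \eqref{InnerProb} than the paper's reliance on the comonotone reduction. One small slip: Assumption \ref{AssumpUtDist} only gives $u_i''\le 0$, not $u_i''<0$, so drop the inverse function theorem and justify $u_\lambda'=J_\lambda$ differently — the same gradient inequality shows $J_\lambda(y)$ is a supergradient of $u_\lambda$ at every $y$, whence $J_\lambda(y+h)\,h\le u_\lambda(y+h)-u_\lambda(y)\le J_\lambda(y)\,h$ for $h>0$, and continuity of $J_\lambda$ finishes it.
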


\medskip

This representative agent formulation helps us now to solve Problem \eqref{eq:gen-prob2}. The proof of the following result is similar to \cite{BBG24}, who examine the two-person case. For completeness, we provide a proof in the Appendix. Recall first that the convex envelope of a function $f: [0,1]\to \mathbb{R}$ is the greatest convex function $\delta: [0,1]\to \mathbb{R}$ that satisfies $\delta\(x\) \leq f\(x\)$, for each $x \in [0,1]$. The function $\delta$ is affine on the set $\left\{x \in \left[0,1\right]: \delta(x) < f(x)\right\}$ (e.g., \cite{Heetal2017}).

\medskip

\begin{lemma}
\label{LemBBGnew}
A random variable ${X}_1\in \cX$ is optimal for Problem \eqref{eq:gen-prob2} if and only if
\begin{equation}
\label{eq:X1}
{X}_1 = m^{-1}_\lambda\(\delta^\prime(\mathtt{U})\), \ \text{a.s.},
\end{equation}
where $m_\lambda\(x\):=\frac{u_\lambda^\prime\(\mathtt{w}-x\)}{u_1^\prime\(x\)}, \ \forall x \in\mathbb{R}$, and $\delta$ is the convex envelope of $\tilde T(t):=1-T(1-t)$.
\end{lemma}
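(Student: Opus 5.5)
Using Lemma~\ref{PropRepAg}(3), I would first reduce Problem \eqref{eq:gen-prob2} to the two-agent problem
\[
\sup_{X\in\cX}\Big[\int u_1(X)\,\textnormal{d}\,T\circ\p + E^\p[u_\lambda(\mathtt{w}-X)]\Big].
\]
Here $u_\lambda$ is strictly concave and increasing. By the envelope identity $u_\lambda'(x)=J_\lambda(x)$, $u_\lambda'$ is decreasing, so $m_\lambda$ is strictly increasing and hence invertible. This has exactly the structure treated in \cite{BBG24}, so I would follow their quantile route.

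\textbf{Step 1 (anti-comonotonicity).} For any $X$, replace it by a rearrangement $\hat X$ with the same law that is a nondecreasing function of $\mathtt{U}$. The first term depends only on the law of $X$. The second term is a law-invariant expectation of $\mathtt{w}-X$, so it is unchanged. We may therefore write $X=F_X^{-1}(\mathtt{U})$ and reformulate the problem over quantile functions $f\in\mathcal Q$.

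\textbf{Step 2 (quantile objective).} The Choquet integral of the comonotonic payoff $u_1(f(\mathtt U))$ with respect to $T\circ\p$ equals $\int_0^1 u_1(f(t))\,\textnormal{d}\tilde T(t)$, where $\tilde T(t)=1-T(1-t)$. The objective becomes
\[
\Phi(f)=\int_0^1 u_1(f(t))\,\tilde T'(t)\,\textnormal{d}t+\int_0^1 u_\lambda(\mathtt{w}-f(t))\,\textnormal{d}t.
\]

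\textbf{Step 3 (convex envelope).} This is the main obstacle, because a pointwise maximization with weight $\tilde T'$ produces a non-monotone $f$ whenever $\tilde T$ is not convex. I would use the standard argument of \cite{Heetal2017}: integrate by parts, writing $\int u_1(f)\,\textnormal{d}\tilde T=u_1(f(1-))-\int \tilde T\,\textnormal{d}u_1(f)$. Since $u_1\circ f$ is nondecreasing, replacing $\tilde T$ by $\delta\le\tilde T$ can only increase the value, so $\Phi(f)\le\Phi_\delta(f)$. Equality holds whenever $f$ is constant on each interval of $\{\delta<\tilde T\}$, because $\delta=\tilde T$ at the endpoints of these intervals.

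Next, maximize $\Phi_\delta$ pointwise. The first-order condition is $u_1'(f)\delta'(t)=u_\lambda'(\mathtt{w}-f)$, i.e. $f(t)=m_\lambda^{-1}(\delta'(t))$. This $f$ is nondecreasing because $\delta'$ is nondecreasing and $m_\lambda$ is increasing. It is constant where $\delta$ is affine, in particular on $\{\delta<\tilde T\}$. It therefore attains the upper bound, which gives optimality of \eqref{eq:X1}. I would handle boundary or integrability issues using the boundedness of $\cX$, with endpoint conventions for $\delta'$ at $0$ and $1$.

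\textbf{Step 4 (uniqueness / ``only if'').} Pointwise strict concavity of $u_1(x)\delta'(t)+u_\lambda(\mathtt{w}-x)$ in $x$ makes the maximizer of $\Phi_\delta$ unique a.e. Any optimal $f$ must attain $\Phi=\Phi_\delta=\max\Phi_\delta$, so it equals $m_\lambda^{-1}\circ\delta'$ a.e.

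Finally, an optimal $X_1$ must be comonotonic with $\mathtt U$. Otherwise the rearrangement strictly improves the objective, or we can choose $\mathtt U$ adapted to $X_1$, using the freedom in the lemma's choice of $\mathtt U$. Hence $X_1=m_\lambda^{-1}(\delta'(\mathtt U))$ a.s.
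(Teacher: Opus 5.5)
Your proposal is correct and follows essentially the paper's proof: the representative-agent reduction, the quantile reformulation, and the bound $\int u_1(f_t)\tilde T'_t\,\textnormal{d}t\le\int u_1(f_t)\delta'_t\,\textnormal{d}t$ (which the paper imports from \cite[Lemma A.5]{BBG24} and you derive by integration by parts). The remaining steps also match: pointwise maximization against $\delta'$, attainment because $m_\lambda^{-1}\circ\delta'$ is flat on $\{\delta<\tilde T\}$, and your strict-concavity argument, which supplies the uniqueness in law that the paper cites from \cite[Lemma A.8]{BBG24}.

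One correction to your last step. Both terms of the objective are law-invariant, so a rearrangement never strictly improves it, and comonotonicity with a pre-fixed $\mathtt U$ cannot be forced. The ``only if'' therefore holds only with $\mathtt U$ chosen depending on $X_1$ (your second alternative), which is exactly the uniqueness in distribution that the paper asserts.
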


\medskip

Combining lemmata \ref{PropRepAg} and \ref{LemBBGnew} leads to the following result.

\medskip

\begin{theorem}
\label{ThMain}
An allocation $\mathbf{X} \in \mathcal A(\mathtt{w})$ is optimal for Problem \eqref{eq:gen-prob} if and only if
\begin{align*}
{X}_1 =m^{-1}_\lambda( \delta^\prime(\mathtt{U})) \ \ \hbox{ and } \ \  X_i= I_i\left(\lambda_i^{-1} J_\lambda\(\mathtt{w}-{X}_1\)\right), \ \ \text{for all } \, i\in\{2,\ldots,n\}.
\end{align*}
\end{theorem}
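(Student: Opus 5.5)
The plan is to split Problem \eqref{eq:gen-prob} into the outer problem \eqref{eq:gen-prob2} in the first agent's share $X_1$ and the inner problem \eqref{InnerProb} for the remaining agents. Lemma \ref{LemBBGnew} handles the outer problem and Lemma \ref{PropRepAg} handles the inner one.

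The first step is the decomposition itself. For any feasible $\mathbf X=(X_1,\mathbf X_{-1})\in\mathcal A(\mathtt{w})$ we have $\mathbf X_{-1}\in\mathcal A_{-1}(\mathtt{w}-X_1)$, and therefore
\[
U_1(X_1)+\sum_{i=2}^n\lambda_iU_i(X_i)\le U_1(X_1)+U_\lambda(\mathtt{w}-X_1)\le \sup_{X\in\cX}\left[U_1(X)+U_\lambda(\mathtt{w}-X)\right].
\]
Conversely, any candidate $X$ for \eqref{eq:gen-prob2}, paired with an optimizer of the inner problem at $\mathtt{w}-X$, gives a feasible allocation attaining $U_1(X)+U_\lambda(\mathtt{w}-X)$. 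Here Lemma \ref{PropRepAg} guarantees that the inner supremum is attained. Hence the two suprema coincide. Moreover, $\mathbf X$ is optimal for \eqref{eq:gen-prob} if and only if both inequalities above are equalities. That is, $X_1$ must be optimal for \eqref{eq:gen-prob2}, and $\mathbf X_{-1}$ must be optimal for the inner problem \eqref{InnerProb} at $\mathtt{w}-X_1$.

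The second step characterizes each component. By Lemma \ref{LemBBGnew}, $X_1$ is optimal for \eqref{eq:gen-prob2} if and only if $X_1=m_\lambda^{-1}(\delta'(\mathtt{U}))$ a.s. For the inner problem, Proposition \ref{PropInner1}(2) says $\mathbf X_{-1}$ is optimal if and only if $\lambda_iu_i'(X_i)=\lambda_ju_j'(X_j)$ for all $i,j$. Call this common value $Z$. Then $X_i=I_i(Z/\lambda_i)$, and summing gives $I_\lambda(Z)=\sum_{i=2}^n X_i=\mathtt{w}-X_1$. It follows that $Z=J_\lambda(\mathtt{w}-X_1)$ and $X_i=I_i(\lambda_i^{-1}J_\lambda(\mathtt{w}-X_1))$. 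This agrees with Lemma \ref{PropRepAg}(2). For sufficiency, the displayed $X_i$ satisfy the Borch condition and sum to $\mathtt{w}-X_1$, so they are inner-optimal.

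The main obstacle is making sure the same uniform variable $\mathtt{U}$ is used consistently in both lemmas. $X_1=m_\lambda^{-1}(\delta'(\mathtt{U}))$ requires $m_\lambda$ to be strictly increasing. This holds because $u_1'$ is decreasing and $u_\lambda'(\mathtt{w}-x)$ is increasing in $x$, the latter by concavity of $u_\lambda$ inherited from the $u_i$. With $m_\lambda$ monotone and $\delta'$ nondecreasing, $X_1$ is a nondecreasing function of $\mathtt{U}$. Then $\mathtt{w}-X_1$ is anti-monotone in $\mathtt{U}$, so the comonotonicity hypothesis of Lemma \ref{PropRepAg} would be met by $1-\mathtt{U}$ rather than $\mathtt{U}$. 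I would sidestep this by using the pointwise form in Lemma \ref{PropRepAg}(2), namely $f_i^\diamond(\cdot)=I_i(J_\lambda(\mathtt{w}-X)/\lambda_i)$. This form is a function of $\mathtt{w}-X_1$ alone, so the choice of uniform variable is irrelevant and the theorem follows.
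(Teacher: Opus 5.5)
Your proposal is correct and follows the paper's route. The paper obtains Theorem~\ref{ThMain} by splitting Problem~\eqref{eq:gen-prob} into the outer problem~\eqref{eq:gen-prob2}, which Lemma~\ref{LemBBGnew} characterizes, and the inner problem~\eqref{InnerProb}, which Lemma~\ref{PropRepAg} and the Borch rule of Proposition~\ref{PropInner1} characterize. Your version also fills in details the paper leaves implicit: the value-decomposition argument, the uniqueness of inner optimizers via Proposition~\ref{PropInner1}(2), and the observation that $\mathtt{w}-X_1$ is antitone in $\mathtt{U}$, so the pointwise form in Lemma~\ref{PropRepAg}(2) is the one to use.
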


\medskip

If $T$ is convex, then $\delta(t)=t$, for $t\in[0,1]$. We then immediately recover the special case of full insurance (no-betting). 

\medskip

\begin{corollary}
\label{corsec2}
If $T$ is convex, then $\mathbf{X}$ is PO if and only if it is a full-insurance allocation. 
\end{corollary}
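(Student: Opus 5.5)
We prove the two implications separately, using Proposition \ref{PropPOMax} to move between Pareto optimality and the welfare problem \eqref{eq:gen-prob}, and Theorem \ref{ThMain} to describe the solutions of \eqref{eq:gen-prob} explicitly.

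\emph{Step 1: the convex envelope is the identity.} If $T$ is convex, then $\tilde T(t)=1-T(1-t)$ is concave on $[0,1]$, with $\tilde T(0)=0$ and $\tilde T(1)=1$. A concave function lies above its chord, so $\tilde T(t)\ge t$ for all $t\in[0,1]$. Every convex $g\le \tilde T$ satisfies $g(0)\le 0$ and $g(1)\le 1$, hence $g(t)\le t$ by convexity. The identity is itself convex and lies below $\tilde T$, so $\delta(t)=t$ and $\delta'\equiv 1$.

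\emph{Step 2: PO allocations are full insurance.} Let $\mathbf X$ be PO. By Proposition \ref{PropPOMax}, convexity of $T$ gives weights $\lambda\in\mathbb{R}^{n-1}_{++}$ for which $\mathbf X$ solves \eqref{eq:gen-prob}. Theorem \ref{ThMain} then yields $X_1=m_\lambda^{-1}(1)$, a constant.
\begin{itemize}
\item[] For this we need $m_\lambda$ to be invertible. Since $u_1'$ is positive and strictly decreasing, and $u_\lambda'$ is strictly decreasing, the map $x\mapsto u_\lambda'(\mathtt w-x)$ is strictly increasing. Hence $m_\lambda$ is strictly increasing.
\item[] $u_\lambda'$ is strictly decreasing because $u_\lambda'=J_\lambda$, which follows from the envelope argument in Lemma \ref{PropRepAg}(1). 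Both $J_\lambda$ and each $I_i$ are decreasing.
\end{itemize}
Each remaining share $X_i=I_i(\lambda_i^{-1}J_\lambda(\mathtt w-X_1))$ is a deterministic function of the constant $\mathtt w-X_1$, so it is also constant. Thus $\mathbf X$ is a full-insurance allocation.

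\emph{Step 3: full-insurance allocations are PO.} The converse is the main obstacle, because it requires producing weights $\lambda$ that rationalize an arbitrary constant allocation $\mathbf X=(c_1,\ldots,c_n)$ with $\sum_i c_i=\mathtt w$. The plan is to choose the weights so that the Borch ratios match:
\[
\lambda_i = \frac{u_1'(c_1)}{u_i'(c_i)}, \qquad i=2,\ldots,n.
\]
This choice has three consequences:
\begin{itemize}
\item[] $\lambda_i u_i'(c_i)=u_1'(c_1)$ for every $i\ge 2$.
\item[] Inverting this gives $J_\lambda(\mathtt w-c_1)=u_1'(c_1)$, since $I_\lambda(u_1'(c_1))=\sum_{i\ge2} c_i=\mathtt w-c_1$.
\item[] Consequently $m_\lambda(c_1)=1=\delta'(t)$ for every $t$.
\end{itemize}
Therefore $\mathbf X$ satisfies the characterization in Theorem \ref{ThMain}, so it solves \eqref{eq:gen-prob} and is PO by Proposition \ref{PropPOMax}.

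To make this step valid, we also need that $\lambda_i>0$, which holds because marginal utilities are positive, and that $J_\lambda$ is well defined on the relevant range. This uses the strict monotonicity of the $u_i'$ guaranteed by Assumption \ref{AssumpUtDist}.

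Alternatively, one could bypass Theorem \ref{ThMain} in the converse and argue directly. Since $T$ is convex, $T\circ\p\le\p$ on events, so the RDU agent weakly dislikes any risk in the sense of strong risk aversion. A perturbation argument using Jensen's inequality, combined with comonotone improvements, would then show that no feasible allocation Pareto dominates a deterministic one.
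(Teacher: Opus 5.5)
Your proof is correct and follows the same route as the paper, which derives the corollary from Proposition \ref{PropPOMax}, Lemmas \ref{PropRepAg} and \ref{LemBBGnew} (combined in Theorem \ref{ThMain}), and the observation that $\delta(t)=t$ when $T$ is convex. Your explicit Borch weights $\lambda_i=u_1'(c_1)/u_i'(c_i)$ spell out the converse direction (full insurance $\Rightarrow$ PO), which the paper leaves implicit.
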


\medskip

When $T$ is convex, all agents in the market are risk averse, with common baseline beliefs. In this case, PO allocations are full-insurance allocations, as one would expect. If $T$ is not concave, then $\tilde T$ is not convex, and hence  $\delta$ will not coincide with $\tilde T$. Moreover $\delta$ will be affine on the set 
$$\left\{p \in \left[0,1\right]: \delta(p) < \tilde  T(p)\right\}.$$ 

\medskip

\noindent The probability mass of this event is a gauge of the degree of full insurance within a PO allocation. The probability mass of the complement of this event quantifies the degree of endogenous uncertainty, the principal source of betting in a Pareto-optimal outcome. The following section explores in more detail the impact of introducing an RDU agent, in an otherwise classical EU-economy, on the emergence of betting and the generation of endogenous uncertainty.

\bigskip

\section{Comparative Statics of  Endogenous Uncertainty}
\label{Section:noisePO}

We wish to understand how the RDU agent disrupts the full-insurance property of efficient allocations in an EU economy with common beliefs. Theorem \ref{ThMain} suggests that the derivative of the function $\delta$ describes and quantifies how endogenous uncertainty is introduced into the economy by the RDU agent in a Pareto-efficient allocation.

\medskip

\subsection{S-Shaped and Inverse S-Shaped Probability Weighting Functions}
For illustration, we consider the broad class of inverse S-shaped and S-shaped distortion functions, with a unique inflection point $1-\bar p \in (0,1)$. A distortion function $T: [0,1] \to [0,1]$ is inverse S-shaped if it is concave on $[0, 1-\bar p]$ and convex on $[1-\bar p, 1]$. It is S-shaped if it is convex on $[0, 1-\bar p]$ and concave on $[1-\bar p, 1]$. Note that if $\tilde T$ denotes the conjugate distortion function given by $\tilde T(t) := 1 - T(1-t)$, then $\tilde T$ is inverse S-shaped (resp.\ S-shaped) with inflection point $\bar p$ if and only if $T$ is inverse S-shaped (resp.\ S-shaped) with inflection point $1- \bar p$.

\medskip

For this class of distortion functions, the following result gives an explicit characterization of the convex envelope $\delta$ of the conjugate distortion function $\tilde T$. This is illustrated in Figure \ref{figDeltaEx_}.

\medskip

\begin{proposition}\label{Prop:MG} 
 \ 
 
\begin{enumerate}
\item Suppose that $T$ is S-shaped, with an inflection point of $1-\bar p \in (0,1)$, and let 
\begin{equation}
\label{EqTanPtS}
p^* := \inf \left\{p \in \left[0,1\): \tilde T^{\prime}\(p\) \geq \frac{1-\tilde T\(p\)}{1-p}\right\}.
\end{equation}

\noindent Then $p^* \in [0, \bar p]$, and the convex envelope $\delta$ of $\tilde T$ is given by
\begin{equation}\label{eq:S}
\qquad\quad\delta(p) \:=\:
\begin{cases}
\tilde T(p) &\quad \text{if } p \leq p^*, \\[4pt]
\tilde T(p^*) + \frac{1 - \tilde T(p^*)}{1 - p^*} \cdot (p - p^*) & \quad\text{otherwise.}
\end{cases}
\end{equation}

\medskip

\item Suppose that $T$ is inverse S-shaped, with an inflection point of $1-\bar p \in (0,1)$, and let
\begin{equation}
\label{EqTanPtInvS}
p^* := \sup \left\{p \in \left[0,1\): \tilde T^{\prime}\(p\) \leq \frac{\tilde T\(p\)}{p}\right\}.
\end{equation}

\noindent Then $p^* \in [\bar p,1]$, and the convex envelope $\delta$ of $\tilde T$ is given by
\begin{equation}\label{eq:inverseS}
\delta(p) \:=\:
\begin{cases}
\frac{\tilde T(p^*)}{p^*} \cdot p & \quad\text{if } p \leq p^*, \\[6pt]
\tilde T(p) &\quad \text{otherwise.}
\end{cases}\qquad\qquad\:\:
\end{equation}
\end{enumerate}
\end{proposition}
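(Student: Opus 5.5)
We prove the result by directly verifying that the candidate function $\delta$ in \eqref{eq:S} (resp.\ \eqref{eq:inverseS}) is the convex envelope of $\tilde T$. Concretely, we show three things: (a) $\delta$ is convex on $[0,1]$; (b) $\delta \le \tilde T$ on $[0,1]$; (c) every convex $g \le \tilde T$ satisfies $g \le \delta$. We use the remark following the statement: $T$ is S-shaped (resp.\ inverse S-shaped) with inflection point $1-\bar p$ if and only if $\tilde T$ is inverse S-shaped (resp.\ S-shaped) with inflection point $\bar p$. Hence in case (1), $\tilde T$ is concave on $[0,\bar p]$ and convex on $[\bar p,1]$. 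Note that $\tilde T(0)=0$ and $\tilde T(1)=1$.

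\medskip

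\noindent\textbf{Case (1): locating $p^*$.} Define
$$h(p):=\tilde T'(p)(1-p)-(1-\tilde T(p)),$$
so that $p^*=\inf\{p: h(p)\ge 0\}$. We have $h'(p)=\tilde T''(p)(1-p)$, so $h$ is nonincreasing on $[0,\bar p]$ and nondecreasing on $[\bar p,1]$. Because $\tilde T$ is concave on $[0,\bar p]$, it lies above its chord from $(\bar p,\tilde T(\bar p))$ to $(1,1)$ there. The mean value theorem gives some $\xi\in(\bar p,1)$ with $\tilde T'(\xi)=\frac{1-\tilde T(\bar p)}{1-\bar p}$. Using convexity on $[\bar p,1]$, one checks that $h(\bar p)\le 0$. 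Combined with the monotonicity of $h$, this shows the set $\{h\ge0\}\cap[0,1)$ meets $[0,\bar p]$ in an initial interval (possibly empty), and is otherwise contained in $(\bar p,1)$.

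The natural candidate is the tangency point: the point where the line through $(1,1)$ touches $\tilde T$ on the concave part, or $p^*=0$ if no such point exists. I would check carefully that the $\inf$ definition in \eqref{EqTanPtS} picks out this point and that $p^*\le\bar p$. This requires using that $h$ is nonincreasing on $[0,\bar p]$, so that $\{h\ge0\}\cap[0,\bar p]=[0,q]$ for some $q$.

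\textbf{This bookkeeping is the main obstacle.} With the $\inf$ as literally written, an initial interval where $h\ge0$ gives $p^*=0$. One must then show this is consistent, namely that $\tilde T$ already lies above the chord from the origin to $(1,1)$. Alternatively the intended $p^*$ is the last such point, and the proof must reconcile the two readings. I would treat the degenerate boundary cases separately, showing in each that $\delta$ still works.

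\medskip

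\noindent\textbf{Case (1): verifying (a)--(c).} Once $p^*$ is fixed so that the line $\ell(p)=\tilde T(p^*)+\frac{1-\tilde T(p^*)}{1-p^*}(p-p^*)$ is tangent to or supports $\tilde T$ at $p^*$:

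(b) On $[p^*,\bar p]$, concavity of $\tilde T$ places it above the chord toward any point, and in particular above $\ell$. On $[\bar p,1]$, convexity of $\tilde T$ together with $\tilde T(1)=\ell(1)$ and $\tilde T(\bar p)\ge\ell(\bar p)$ gives $\tilde T\ge\ell$.

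(a) Convexity of $\delta$ on $[0,p^*]$ needs care, because $\tilde T$ is concave there. Since $\delta$ must be convex, this forces $p^*=0$ whenever $\tilde T$ is strictly concave near $0$. So in case (1) the conclusion is effectively that $\delta$ is linear on $[0,1]$, or $p^*$ lies where $\tilde T$ is locally convex. I would verify exactly which configuration the definition yields and confirm convexity through the derivative condition at $p^*$.

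(c) Any convex $g\le\tilde T$ satisfies $g\le\tilde T=\delta$ on $[0,p^*]$. On $[p^*,1]$, convexity of $g$ together with $g(p^*)\le\tilde T(p^*)$ and $g(1)\le 1$ gives $g\le\ell=\delta$.

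\medskip

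\noindent\textbf{Case (2).} This case is symmetric, under the map $p\mapsto 1-p$ combined with $\tilde T\mapsto 1-\tilde T(1-\cdot)$, which swaps concave and convex regions. Here $\tilde T$ is concave on $[0,\bar p]$ and convex on $[\bar p,1]$, and we use the line through the origin. Set $k(p):=p\tilde T'(p)-\tilde T(p)$; then $k'(p)=p\tilde T''(p)$, so $k$ decreases on $[0,\bar p]$ and increases on $[\bar p,1]$, with $k(0)=0$. Hence $k<0$ on $(0,\bar p]$, and $\{k\le 0\}$ is an interval $[0,p^*]$ with $p^*\ge\bar p$. At $p^*<1$ we have $k(p^*)=0$, which is the tangency condition.

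The three checks then go as in case (1). The line $\frac{\tilde T(p^*)}{p^*}p$ lies below $\tilde T$: on $[0,\bar p]$ this follows from concavity together with $\tilde T(0)=0$, and on $[\bar p,p^*]$ from convexity and tangency. Convexity of $\delta$ follows because its slope jumps up at $p^*$ to $\tilde T'(p^*)$ and is nondecreasing thereafter. Maximality follows from a chord argument on $[0,p^*]$. If $p^*=1$, then $\delta$ is the diagonal and the same checks apply.
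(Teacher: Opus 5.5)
Your case (1) rests on a misreading of the conjugation remark, and the argument fails as a result. The remark says the shape is \emph{preserved}, not swapped. Since $\tilde T''(t)=-T''(1-t)$, if $T$ is convex on $[0,1-\bar p]$ and concave on $[1-\bar p,1]$, then $\tilde T$ is convex on $[0,\bar p]$ and concave on $[\bar p,1]$. In case (1) you instead take $\tilde T$ concave on $[0,\bar p]$ and convex on $[\bar p,1]$. That is the case (2) shape, and you use it again in case (2), where it is correct, which contradicts the swapped rule you quoted. Every problem in your case (1) comes from this:
\begin{itemize}
\item Step (b) on $[\bar p,1]$ is invalid. A convex function whose endpoint values dominate $\ell$ need not dominate $\ell$, because convexity bounds a function from \emph{above} by its chords.
\item Step (a) cannot hold with $\tilde T$ concave on $[0,p^*]$. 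This is why you are pushed to conclude that $p^*=0$ and that $\delta$ is ``effectively linear.'' That contradicts the proposition you are proving; it is not a degenerate sub-case of it.
\item The ``main obstacle'' about the $\inf$ in \eqref{EqTanPtS} is an artifact of the wrong shape. It does not arise with the correct one.
\end{itemize}

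With the correct shape, case (1) goes through cleanly along your own lines.
\begin{itemize}
\item \textbf{Locating $p^*$.} Since $h'(p)=(1-p)\tilde T''(p)$, $h$ is nondecreasing on $[0,\bar p]$ and nonincreasing on $[\bar p,1)$. Also $h(p)\to 0$ as $p\uparrow 1$, because $\tilde T'$ is nonincreasing and nonnegative on $[\bar p,1)$, hence bounded. So $h\ge 0$ on $[\bar p,1)$, and $\{p\in[0,1): h(p)\ge 0\}=[p^*,1)$ with $p^*\le \bar p$. Moreover $h(p^*)\ge 0$, with equality (tangency) when $p^*>0$. If $p^*=0$, then $\ell$ is the diagonal, consistent with \eqref{eq:S}.
\item \textbf{(a)} This holds because $\tilde T$ is convex on $[0,p^*]\subseteq[0,\bar p]$, and when $p^*>0$ its slope $\tilde T'(p^*)$ equals the slope of $\ell$.
\item \textbf{(b)} The roles are reversed relative to what you wrote. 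On $[p^*,\bar p]$, convexity gives $\tilde T(p)\ge \tilde T(p^*)+\tilde T'(p^*)(p-p^*)\ge \ell(p)$, using $h(p^*)\ge 0$. On $[\bar p,1]$, concavity puts $\tilde T$ above its chord from $\bar p$ to $1$. That chord lies above $\ell$ because $\tilde T(\bar p)\ge \ell(\bar p)$ and $\tilde T(1)=\ell(1)$.
\item \textbf{(c)} Your chord argument works as written.
\end{itemize}

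Your case (2) is correct. One small point: there is no slope jump at $p^*<1$, since $k(p^*)=0$ means $\tilde T(p^*)/p^*=\tilde T'(p^*)$; nondecreasing slope is all you need anyway. Once case (1) is repaired, your direct verification of the three defining properties of the envelope is a valid, more self-contained alternative to the paper's route. The paper instead posits that $\delta$ follows $\tilde T$ up to a break point $z^*$ and is affine thereafter. It then identifies $z^*$ with the $\inf$ through the equality of the ``strictly-above-the-chord'' set and the derivative-condition set.
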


\medskip

We  can now identify the type of betting and the event on which full insurance is prevalent. When $T$ is inverse S-shaped, $\delta^\prime$ is constant on $[0, p^*]$, and hence full insurance occurs on the event $\{\mathtt{U}\in[0, p^*]\}$; see the right part of Figure \ref{figDeltaEx_}. On the other hand, for an  S-shaped distortion function $T$, $\delta^\prime$ is constant on $[p^*,1]$. This  yields a full-insurance event $\{\mathtt{U}\in[p^*,1]\}$; see the left part of  Figure \ref{figDeltaEx_}. 

\medskip

\begin{center}
\begin{figure}[!htpb]
\includegraphics[width=12cm,height=5cm,clip,trim={0.25cm 0.25cm 0.25cm 0.5cm}]{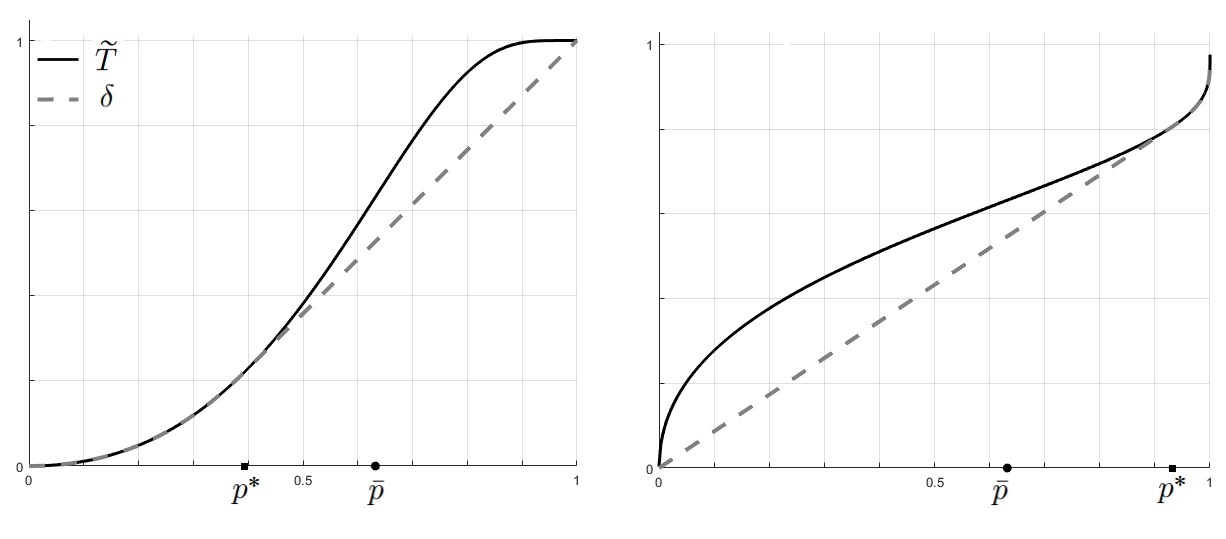}
\caption{{  Plot of an S-shaped  (left) function $\tilde T$ and an inverse S-shaped (right) function $\tilde T$, based on Prelec with $\alpha=\frac{1}{2}, 2$. In both cases the  convex envelope (dashed graph) $\delta$ with tangent point $p^*$ and inflection point $\bar p$ are highlighted.}} 
\label{figDeltaEx_}
\end{figure}
\end{center} 

\medskip
 
Additionally, by Corollary \ref{corsec2}, if $T$ is linear (or convex), then efficient allocations are full-insurance allocations. In sum, the probability mass of the full-insurance event, denoted by $\mathbb{FI}$, is 
\begin{eqnarray*}    
\mathbb{FI} = \left\{
\begin{array}{l l}
p^* & \quad \mbox{if $T$ is inverse S-shaped;}\\[4pt]
1 & \quad \mbox{if $T$ is linear (or convex);}\\[4pt]
1- p^* & \quad \mbox{if $T$ is  S-shaped.}
\end{array} \right.
\end{eqnarray*}

\medskip

\noindent In other words, $1-\mathbb{FI}$ captures the degree of endogenous uncertainty. 

\bigskip

By Theorem \ref{ThMain}, an efficient allocation $\mathbf{X}$ consists of random variables that have a mixed distribution, combining a continuous density and a discrete point mass $\mathbb{FI}$, corresponding to the full-insurance part of the allocation. The following result provides a characterization of the (defective) probability density function of the endogenous uncertainty in closed form, for the case of inverse S-shaped distortion functions.

\medskip

\begin{proposition}[Inverse S-Shaped Distortion]
\label{prop:pdf}
Suppose that the RDU agent's distortion function $T$ is inverse S-shaped, and let $p^*$ be as in \eqref{EqTanPtInvS}. Let $I(x) :=m_\lambda^{-1}( T^\prime(1-x))$,  where $m_\lambda(x)=\frac{u_\lambda^\prime(\mathtt{w}-x)}{u_1^\prime(x)}$. Let $\mathtt{U} \sim Uni(0,1)$, and let $$X_1 = m^{-1}_\lambda( \delta^\prime(\mathtt{U}))$$
be the optimal payoff of the RDU agent, where $\delta$ is the convex envelope of $\tilde T$. Then $X_1$ has a mixed distribution consisting of:

\medskip

\begin{enumerate}
\item An atom of mass $p^*$ at the point
$$x_0 := m_\lambda^{-1}\left(\frac{\tilde T(p^*)}{p^*}\right).$$

\item An absolutely continuous component supported on $I([p^*,1])$, with probability density given by
\begin{equation}
\label{fX}
f_1(x)
:=
\frac{- m_\lambda'(x)}
{T''\left((T')^{-1}(m_\lambda(x))\right)},
\ \ \forall \, x\in I([p^*,1]).
\end{equation}
\end{enumerate}
\end{proposition}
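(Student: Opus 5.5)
The plan is to read off the distribution of $X_1=m_\lambda^{-1}(\delta'(\mathtt{U}))$ directly from the explicit form of $\delta$ in Proposition \ref{Prop:MG}(2), splitting the sample space into the events $\{\mathtt{U}\le p^*\}$ and $\{\mathtt{U}>p^*\}$. The density will then come from a change of variables on the second event.

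\textbf{Step 1: monotonicity of $m_\lambda$.} By Lemma \ref{PropRepAg}, $u_\lambda'=J_\lambda$. Since each $I_i$ is decreasing, $I_\lambda$ is decreasing, and hence so is $J_\lambda$; thus $u_\lambda$ is strictly concave. It follows that $x\mapsto u_\lambda'(\mathtt{w}-x)$ is increasing. Since $u_1$ is strictly concave, $x\mapsto u_1'(x)$ is positive and decreasing. Hence $m_\lambda$ is strictly increasing, so $m_\lambda'>0$ and $m_\lambda^{-1}$ is well defined and increasing on the relevant range.

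\textbf{Step 2: the atom.} On $[0,p^*]$, formula \eqref{eq:inverseS} gives $\delta'(p)=\tilde T(p^*)/p^*$. Therefore $X_1=x_0$ on $\{\mathtt{U}\le p^*\}$, an event of probability $p^*$. This yields the atom in part (1).

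\textbf{Step 3: the continuous part.} On $(p^*,1]$ we have $\delta=\tilde T$, so
$$\delta'(p)=\tilde T'(p)=T'(1-p).$$
Hence $X_1=m_\lambda^{-1}(T'(1-\mathtt{U}))=I(\mathtt{U})$ there. Because $p^*\ge\bar p$, the interval $(p^*,1]$ lies in the region where $\tilde T$ is convex, i.e. where $T$ is concave on $[0,1-p^*]$. There $p\mapsto T'(1-p)$ is nondecreasing, so $I$ is nondecreasing on $[p^*,1]$.

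Assuming $T''<0$ on the interior of $[0,1-\bar p]$ (which holds except possibly on a null set), $T'$ is invertible on that interval. For $x\in I([p^*,1])$ this gives
$$\p(X_1\le x,\ \mathtt{U}>p^*)=\p\big(p^*<\mathtt{U}\le 1-(T')^{-1}(m_\lambda(x))\big).$$
Differentiating in $x$ by the chain rule, with $((T')^{-1})'(y)=1/T''((T')^{-1}(y))$, yields
$$f_1(x)=-\frac{m_\lambda'(x)}{T''\big((T')^{-1}(m_\lambda(x))\big)}.$$
This is nonnegative since $m_\lambda'>0$ and $T''<0$, and its total mass is $1-p^*$, consistent with the atom.

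\textbf{Main obstacle.} The delicate point is the boundary behaviour at $p^*$ and the invertibility of $T'$ on the concave region. When $p^*<1$, the convex envelope is $C^1$ at the tangency point, so $\tilde T'(p^*)=\tilde T(p^*)/p^*$. Hence $I(p^*)=x_0$: the continuous component starts exactly at the atom and adds no extra mass there, and the mixed distribution is as stated. To handle invertibility I would first use the strict concavity of $T$ on $(0,1-\bar p)$. If $T''$ vanishes at isolated points, I would argue that the density formula holds almost everywhere, which suffices because such points form a null set.
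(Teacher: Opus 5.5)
Your proposal is correct and follows essentially the same route as the paper. You split on $\{\mathtt{U}\le p^*\}$ versus $\{\mathtt{U}>p^*\}$ using the form of $\delta$ in Proposition~\ref{Prop:MG}(2), read off the atom at $x_0$, and obtain the density by a change of variables on the convex branch. The paper differentiates $(\tilde T')^{-1}(m_\lambda(x))$ and then converts via $\tilde T''(t)=-T''(1-t)$, whereas you differentiate the CDF written directly in terms of $T$; this is only a cosmetic difference.

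One caveat: $T''<0$ on $(0,1-p^*)$ does not follow from concavity alone, since an affine piece of $T$ there would produce a second atom. So it is an implicit hypothesis rather than something that ``holds except possibly on a null set''. The paper makes the same assumption when it asserts $\tilde T''>0$ on $[p^*,1]$.
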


\medskip

Clearly, since $T$ is inverse S-shaped, $1-\mathbb{FI}= 1 - p^* = \int \mathbf{f}_1(x) \, dx.$ By Theorem \ref{ThMain}, we obtain the densities $\mathbf{f}_j$ also for $X_j$, $j\geq 2$. Figure  \ref{fig:1b} below provides an illustration of the risky part  and the full-insurance part.

\medskip
\subsection{The Case of a Prelec Probability Weighting Function}\label{Section:noisePO.2}
Consider the case of Prelec probability weighting functions with parameter $\alpha >0$, 
\begin{equation}
T\(p\) =\exp(-(-\ln(p))^{\alpha}), \ \ \forall p \, \in [0,1], 
\label{eq:prelec}
\end{equation} 
introduced and characterized axiomatically by \cite{Prelec98}. 
This function is inverse S-shaped when $\alpha\in(0,1)$, linear when $\alpha=1$, and S-shaped when $\alpha>1$. It then follows that $\tilde T$ from Lemma \ref{LemBBGnew} is given by
$
\tilde T(p) = 1- \exp(-(-\ln(1-p))^{\alpha}).
$  Moreover, if $T$ is (inverse-) S-shaped, then so is $\tilde T$. Hence, if $\alpha\in(0,1)$, i.e., inverse S-shaped, then $\delta$ has the form in \eqref{eq:inverseS}; and if $\alpha>1$ then $\delta$ has the form in \eqref{eq:S}; see again   \figref{figDeltaEx_}.

\medskip

Based on  the unique tangent  point $p^*=p^*(\alpha)$ where the convex envelope binds,  we can now identify the magnitude of the $\mathbb{FI}$ event depending on the parameter $\alpha$. We  have
\begin{eqnarray*}
\mathbb{FI}(\alpha) &=&
\begin{cases}
p^*(\alpha), & \quad\text{if } \alpha < 1,\\[2pt]
1,           & \quad\text{if } \alpha = 1,\\[2pt]
1 - p^*(\alpha), & \quad\text{if } \alpha > 1,
\end{cases}\qquad\qquad\qquad\qquad\qquad\qquad
\end{eqnarray*}
where $p^*(\alpha)$ can be derived from the following result.

\medskip

\begin{corollary}\label{cor:palpha} Let the  RDU agent have a Prelec weighting function $T$.
\smallskip
\begin{enumerate}
\item If $T$ is $S$-shaped, i.e., $\alpha>1$, then  
$p^*(\alpha)
= 1 - 
\exp(-{\alpha}^{-\frac{1}{\alpha-1}}).$

\medskip

\item If $T$ is inverse $S$-shaped, i.e., $\alpha<1$, then
$p^*(\alpha) = 1 - \exp(-x),$  where $x$ is the unique solution of the equation
$
x = \ln \left( \alpha x^{\alpha-1} e^x (1 - e^{-x}) + 1 \right)^{\frac{1}{\alpha}}$.
\end{enumerate}
\end{corollary}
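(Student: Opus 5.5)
The plan is to specialize Proposition~\ref{Prop:MG} to the Prelec family by computing the tangency conditions \eqref{EqTanPtS} and \eqref{EqTanPtInvS} explicitly. I will use the change of variables $x := -\ln(1-p)$, which maps $p\in[0,1)$ increasingly and bijectively onto $x\in[0,\infty)$, with $1-p=e^{-x}$. In this variable $\tilde T(p) = 1-\exp(-x^{\alpha})$. Since $\mathrm{d}x/\mathrm{d}p = (1-p)^{-1} = e^{x}$, we obtain
\begin{equation*}
\tilde T'(p) = \alpha x^{\alpha-1} e^{x} \exp(-x^\alpha).
\end{equation*}
Both conditions then become scalar inequalities in $x$. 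The sets in \eqref{EqTanPtS} and \eqref{EqTanPtInvS} are determined by where these inequalities hold, and monotonicity of $p\mapsto x$ transfers infima and suprema between the two variables.

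For the S-shaped case ($\alpha>1$), the right-hand side of \eqref{EqTanPtS} is $(1-\tilde T(p))/(1-p) = \exp(-x^\alpha)e^{x}$. The condition $\tilde T'(p)\ge (1-\tilde T(p))/(1-p)$ therefore reduces to $\alpha x^{\alpha-1}\ge 1$. Because $\alpha>1$, the map $x\mapsto \alpha x^{\alpha-1}$ is strictly increasing from $0$ to $\infty$. The set where the inequality holds is thus $[\alpha^{-1/(\alpha-1)},\infty)$, and its infimum is $x^*=\alpha^{-1/(\alpha-1)}$. Translating back through $p=1-e^{-x}$ gives $p^*(\alpha)=1-\exp(-\alpha^{-1/(\alpha-1)})$.

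For the inverse S-shaped case ($\alpha<1$), I will multiply $\tilde T'(p)\le \tilde T(p)/p$ through by the positive quantity $p\, e^{x^\alpha}$, writing $p=1-e^{-x}$. This gives the equivalent inequality $\alpha x^{\alpha-1}e^{x}(1-e^{-x}) \le e^{x^\alpha}-1$. Taking logarithms and $\alpha$-th roots, this is $x \ge \ln\big(\alpha x^{\alpha-1}e^{x}(1-e^{-x})+1\big)^{1/\alpha}$, so the supremum $p^*$ corresponds to a root of the stated equation.

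The main obstacle is showing that this root is unique, so that the supremum is exactly the equality point. Working directly with the transcendental equation is messy, so I would argue geometrically in the $p$ variable. Set $g(p):=\tilde T(p)-p\tilde T'(p)$; then $g(0)=0$ and $g'(p)=-p\tilde T''(p)$. On $(0,\bar p)$, where $\tilde T$ is concave, $g$ is increasing, hence positive. On $(\bar p,1)$, where $\tilde T$ is convex, $g$ is strictly decreasing. As $p\to1$ we have $x\to\infty$, so $\tilde T'(p)=\alpha x^{\alpha-1}e^{x-x^\alpha}\to\infty$ for $\alpha<1$, and hence $g\to-\infty$. Consequently $g$ has exactly one zero on $(0,1)$, lying in $(\bar p,1)$. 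The set $\{g\ge 0\}=\{\tilde T'\le \tilde T/p\}$ is therefore an interval $[0,p^*]$, and $p^*$ is this unique zero. Since $p\mapsto x$ is a bijection, the equation in $x$ also has a unique positive solution, and $p^*(\alpha)=1-e^{-x}$, which completes the case.
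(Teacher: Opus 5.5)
Your proposal is correct and follows the same route as the paper: you specialize the tangency conditions \eqref{EqTanPtS} and \eqref{EqTanPtInvS} of Proposition~\ref{Prop:MG} to the Prelec family, substitute $x=-\ln(1-p)$, and rearrange, reading the equation as $x^\alpha=\ln(\alpha x^{\alpha-1}e^x(1-e^{-x})+1)$ exactly as the paper's derivation does. The one difference is that you prove the single-crossing property the paper only asserts --- in part (1) via monotonicity of $x\mapsto\alpha x^{\alpha-1}$, and in part (2) via the sign pattern of $g(p)=\tilde T(p)-p\,\tilde T'(p)$ --- and this also makes precise that uniqueness refers to the positive root, since $x=0$ solves the equation in the limit.
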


\medskip

This result is illustrated in Figure \ref{fig:1}. We observe a counter-intuitive implication of the Prelec weighting functions. Apart from the two jumps around $\alpha=1$, we observe, for the inverse S-shaped region, that an increase in the nonlinearity leads to an increase in the probability mass of the full-insurance event.
This pattern also holds true for the inverse S-shaped function:
\begin{equation}\label{TK-w}
T\(p\) =\frac{p^\gamma}{(p^\gamma+(1-p)^\gamma)^{1/\gamma}}, \ \forall p \in [0,1], 
\end{equation} 

\noindent which is introduced by \cite{tversky1992advances}, and is inverse S shaped for $\gamma\in (0.279,1)$, as per \cite{riegerwang2006}.
In Section 4.4, we consider another class of weighting functions, introduced in    \cite{bleichrodt2023testing} with different comparative statics of $\mathbb{FI}$.

\medskip

\begin{figure}[!htpb]
        \centering
    \begin{subfigure}[b]{0.45\textwidth}
      \includegraphics[width=8cm,height=5cm,clip]{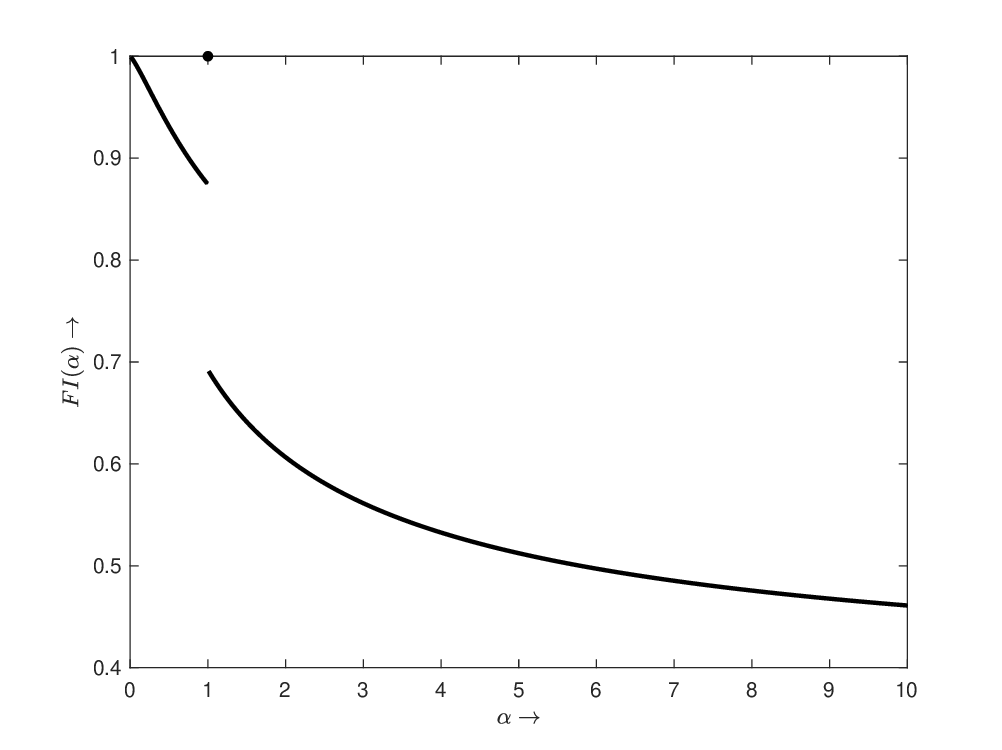}
      \end{subfigure}
    \begin{subfigure}[b]{0.45\textwidth}
       \includegraphics[width=8cm,height=5cm,clip]{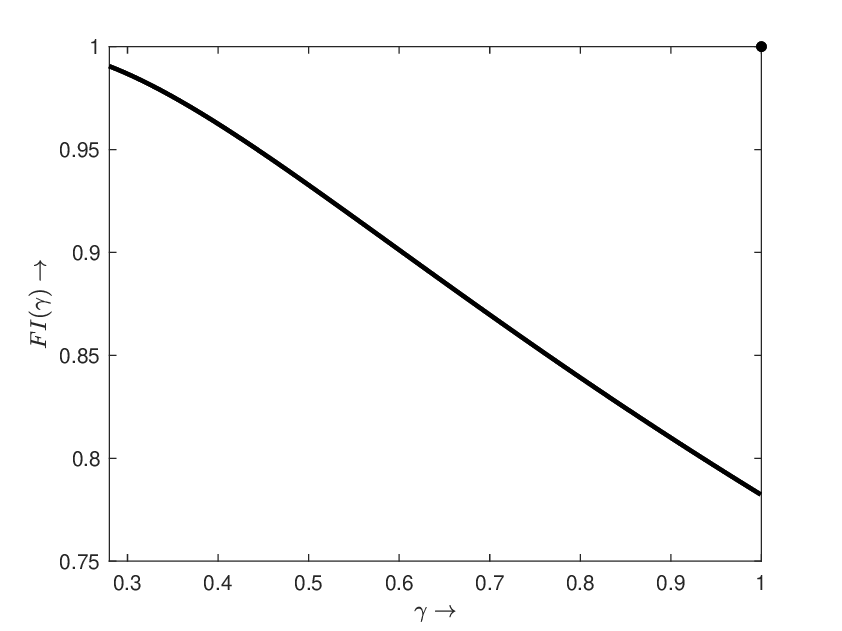}
       \end{subfigure}
\caption{\small{ Figure (a) shows the probability of the full-insurance event $\mathbb{FI(\alpha)}$ as a function of  $\alpha\in [0,10]$, for the Prelec weighting function. Figure (b) shows the probability of the full-insurance event $\mathbb{FI}=p^*$ as a function of  $gamma\in [0,10]$, for the Tversky-Kahneman weighting function, defined in \eqref{TK-w}.}
}
\label{fig:1}
\end{figure}

\medskip

\subsection{Closed-Form Solutions for a Prelec RDU Agent}
In the following, we present closed-form solutions for Pareto efficiency. Consider therefore  a more specific setup with aggregate wealth $\mathtt{w}=0$ and CARA utility for each agent $i\in\{1,2,\ldots,n\}$ given by
$$u_i(x)=-\frac{1}{\beta_i}\exp({-\beta_i x}), \text{ with } \beta_i>0.$$

\medskip

\begin{corollary}\label{Cor_explict}
Let $\overline\beta :=(\beta_2^{-1}+\cdots+\beta_n^{-1})^{-1}$, and define the rescaled risk preference parameters $\overline \beta_j := \frac{\beta_j}{\overline{\beta}}>0$ , for each $j$. Let $I(x):=\ln(\delta^\prime(x))$ and let $\ln(\sum_{i=2}^n\lambda_i)$ be a deterministic side payment. Then  
\begin{equation}
X_1
= \frac{1}{\beta_1+\overline \beta}\Big(I\(\mathtt{U}\)-\ln(\sum_{i=2}^n\lambda_i)\Big),\qquad\qquad\qquad\qquad\qquad\qquad\quad
\qquad
\label{eqYStarEx}
\end{equation}  
\begin{equation}\label{eqYStarEx2}
\text{and} \quad\qquad X_j
= \underbrace{-\frac{1}{\overline\beta_j}\cdot \frac{1}{\beta_1+\overline \beta} \ I\(\mathtt{U}\)}_{ X^\sim_j \:\text{ random }}\:+ \:\underbrace{\frac{1}{\beta_j}\Big(\ln(\lambda_j)- \sum_{k=2}^n \frac{1}{\overline\beta_k}\ln(\lambda_k)\Big)}_{X_j^\bullet\:\text{ side payment }},\quad
j=2,3,\ldots,n.
\end{equation}  
\end{corollary}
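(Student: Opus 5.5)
The plan is to specialize Theorem \ref{ThMain} to the CARA case with $\mathtt{w}=0$. Every ingredient there, namely $I_i$, $J_\lambda$, $u_\lambda'$ and $m_\lambda$, has a closed form in this case, so the corollary should follow by direct substitution.

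First, compute the inverse marginal utilities. Since $u_i'(x)=e^{-\beta_i x}$, we get $I_i(y)=-\beta_i^{-1}\ln y$. Summing gives
\[
I_\lambda(x)=\sum_{i=2}^n I_i(x/\lambda_i)=-\overline\beta^{-1}\ln x+c, \qquad c:=\sum_{k=2}^n \beta_k^{-1}\ln\lambda_k ,
\]
so inverting yields $J_\lambda(y)=\exp\big(-\overline\beta (y-c)\big)$. Next, identify $u_\lambda'$. By Lemma \ref{PropRepAg}(1), $u_\lambda'(\mathtt{w}-X)=\lambda_i u_i'(f_i^\diamond(\mathtt{U}))=J_\lambda(\mathtt{w}-X)$. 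Alternatively, differentiate \eqref{eqAggUt-1} directly: using $\lambda_i u_i'(I_i(J_\lambda/\lambda_i))=J_\lambda$ and $\sum_i \frac{d}{dx}I_i(J_\lambda(x)/\lambda_i)=1$, one gets $u_\lambda'=J_\lambda$. Hence, with $\mathtt{w}=0$,
\[
m_\lambda(x)=\frac{J_\lambda(-x)}{u_1'(x)}=\exp\big((\beta_1+\overline\beta)x+\overline\beta c\big).
\]
This is strictly increasing, and its inverse is $m_\lambda^{-1}(y)=(\ln y-\overline\beta c)/(\beta_1+\overline\beta)$.

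Plugging into Theorem \ref{ThMain} then gives
\[
X_1=\frac{1}{\beta_1+\overline\beta}\big(\ln\delta'(\mathtt{U})-\overline\beta c\big),
\]
which is \eqref{eqYStarEx} with $I=\ln\delta'$. The deterministic shift $\overline\beta c=\sum_k \overline\beta_k^{-1}\ln\lambda_k$ plays the role of the side payment, which the statement abbreviates as a function of $\lambda$ alone. For $j\ge2$, Theorem \ref{ThMain} gives
\[
X_j=I_j\big(J_\lambda(-X_1)/\lambda_j\big)=-\frac{\overline\beta}{\beta_j}(X_1+c)+\frac{\ln\lambda_j}{\beta_j}.
\]
Substituting $X_1$, the $\mathtt{U}$-dependent part is
\[
-\frac{1}{\overline\beta_j}\cdot\frac{1}{\beta_1+\overline\beta}\,I(\mathtt{U}),
\]
which matches $X_j^\sim$ in \eqref{eqYStarEx2}. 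Everything else is a constant depending only on $\lambda$ and the $\beta$'s, and this constant is the side payment $X_j^\bullet$. As a consistency check, I would verify that $\sum_{j\ge2}\overline\beta_j^{-1}=1$. This makes the random parts sum to $-X_1^\sim$, so feasibility $\sum_i X_i=0$ holds exactly.

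The main obstacle is bookkeeping the deterministic constants, not the random part. The weights $\lambda_i$ enter through $c$ both in $m_\lambda^{-1}$ and in $J_\lambda(-X_1)$, and these two contributions must be collected carefully. I would first normalize by absorbing $\overline\beta c$ into the side-payment term. I would then check the constant in \eqref{eqYStarEx2} against the feasibility identity $\sum_i X_i=0$, rewriting the constants into the displayed form only if that identity confirms it. The random components, which carry the economic content, follow mechanically from the computations above.
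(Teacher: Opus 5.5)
Your route is the paper's: specialize Theorem~\ref{ThMain} to CARA utilities with $\mathtt{w}=0$, write $I_i$, $J_\lambda$, $u_\lambda'=J_\lambda$ and $m_\lambda$ in closed form, invert, and substitute. Your algebra is right, and in fact more accurate than the paper's proof. The paper asserts $u_\lambda(x)=-\overline\beta^{-1}e^{-\overline\beta x}\sum_{i=2}^n\lambda_i$, and hence $m_\lambda(x)=e^{(\beta_1+\overline\beta)x}\sum_{i=2}^n\lambda_i$. But since $u_i(I_i(z))=-z/\beta_i$, one has $u_\lambda=-J_\lambda/\overline\beta$. Its prefactor is $e^{\overline\beta c}=\prod_{k=2}^n\lambda_k^{\overline\beta/\beta_k}$, a weighted geometric mean that equals $\sum_k\lambda_k$ only when $n=2$. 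For example, take $n=3$, $\beta_1=\beta_2=\beta_3=1$ and $\lambda_2=\lambda_3=1$. The inner value is $-2e^{-x/2}$ and $X_1=\tfrac23\ln\delta'(\mathtt{U})$, whereas \eqref{eqYStarEx} would subtract $\tfrac23\ln 2$.

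The gap is your last step. You never actually derive \eqref{eqYStarEx}--\eqref{eqYStarEx2}. Instead you say the statement ``abbreviates'' $\overline\beta c=\sum_k\overline\beta_k^{-1}\ln\lambda_k$ as $\ln\sum_k\lambda_k$, and that the leftover constant in $X_j$ ``is'' $X_j^\bullet$. Neither identification holds.

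Your own formula gives $X_j=-\overline\beta_j^{-1}X_1+X_j^\bullet$, with $X_j^\bullet$ exactly as displayed. So the constant in $X_j$ differs from $X_j^\bullet$ by $\overline\beta c/\big(\overline\beta_j(\beta_1+\overline\beta)\big)$: the display \eqref{eqYStarEx2} applies the factor $-\overline\beta_j^{-1}$ only to the random part of $X_1$.

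The feasibility check you propose would refute the displayed form, not confirm it. Since $\sum_{j\ge2}\overline\beta_j^{-1}=1$ and $\sum_{j\ge2}X_j^\bullet=0$, the displays give $\sum_iX_i=-\ln\big(\sum_k\lambda_k\big)/(\beta_1+\overline\beta)$. This vanishes only if $\sum_k\lambda_k=1$, while agreement of \eqref{eqYStarEx2} with your $X_j$ requires $\overline\beta c=0$. For $n\ge3$ these conditions are incompatible, because the first forces every $\lambda_k<1$ and hence $\overline\beta c<0$. So no $\lambda$ makes both displays exact.

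What your computation actually establishes is
\[
X_1=\frac{1}{\beta_1+\overline\beta}\Big(I(\mathtt{U})-\sum_{k=2}^n\frac{\ln\lambda_k}{\overline\beta_k}\Big),\qquad X_j=-\frac{1}{\overline\beta_j}\,X_1+\frac{1}{\beta_j}\Big(\ln\lambda_j-\sum_{k=2}^n\frac{\ln\lambda_k}{\overline\beta_k}\Big),\quad j\ge2.
\]
This has the displayed random parts $X_j^\sim$ and zero-sum side payments. It coincides with the corollary exactly under the normalization $\prod_k\lambda_k^{\overline\beta/\beta_k}=1$, in which case agent 1's side payment is $0$ rather than $\ln\sum_k\lambda_k$. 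You should state this corrected version, or that normalization, explicitly rather than absorbing the discrepancy into the word ``abbreviates''.
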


\medskip

This result delivers an explicit decomposition of optimal consumption into an atomless random variable $X^{\sim}_j$ and an atom $X^{\bullet}_j$. Moreover, the role of the vector $\lambda$ is to determine the zero-sum deterministic side payments to the agents. An increase in the aggregate risk aversion $\beta_1+\overline \beta$ moves $m_\lambda^{-1}$ from Lemma \ref{LemBBGnew} closer to the $x$-axis, and thus PO allocations are in turn closer to zero.

\medskip

\begin{figure}[!htpb]
\includegraphics[width=14cm,height=6cm,clip]
{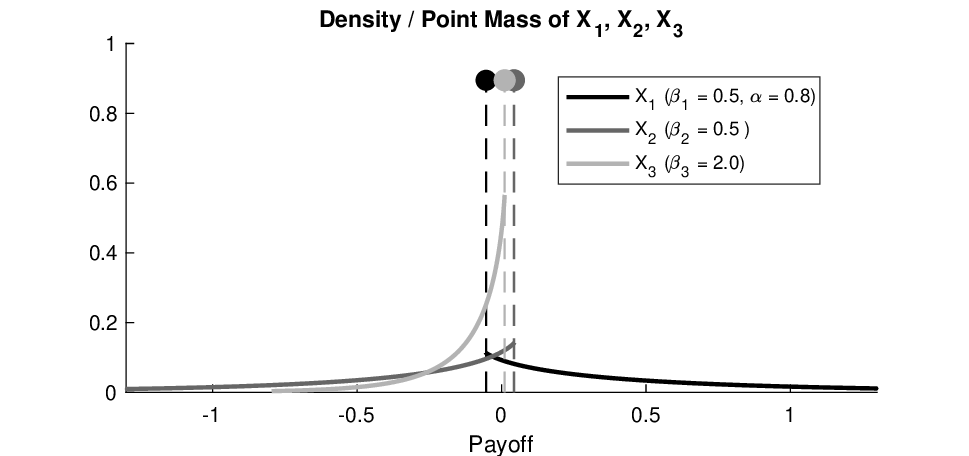}
\caption{\small{  Baseline case density function,  with $\alpha=0.8$, $\beta_1=0.5=\beta_2$,  and $\beta_3=2$.}}
\label{fig:1b}
\end{figure}

\medskip

Figure \ref{fig:1b} illustrates the density parts $$(\mathtt{f}^\alpha_1,\mathtt{f}^\alpha_2,\mathtt{f}^\alpha_3)$$ 

\noindent of the optimal allocations, when the RDU agent has an inverse S-shaped distortion function with $\alpha=0.8$.  $\lambda_2$ and $\lambda_3$ are chosen  to guarantee no side payments, i.e., 
$$X_1=\frac{1}{\beta_1+\overline \beta} \ I(\mathtt{U}), \quad \textnormal{ and }\quad  X_j=-\frac{ \overline \beta}{\beta_j}\cdot X_1, \quad j=2,3.$$ 
This implies that $\lambda_2+\lambda_3=1$, and $\frac{\beta_2-\overline\beta}{\beta_2} \ \ln(\lambda_2)=\frac{\overline\beta}{\beta_3} \ \ln(\lambda_3)$.
This is without loss, as any zero-sum side payments can be obtained by specific choices of $\lambda$. For the exponential utilities, we assume that $\beta_1=\frac{1}{2}=\beta_2$ and $\beta_3=2$. For $\lambda_2+ \lambda_3=1$, $\ln(\lambda_2+ \lambda_3)=0$ and thus the side payment to Agent 1 is equal to zero, implying that $X_1=\frac{1}{\beta_1+\overline \beta} \ I(\mathtt{U})$. Specifically, on the interval $[0,p^*]\approx[0,\frac{9}{10}]$, PO allocations are full-insurance allocations, which holds true because $\delta$ is linear on the domain $[0,p^*]$, as in the left part of Figure \ref{figDeltaEx_}. This finding holds true for any inverse S-shaped distortion function $\tilde T$ (see equation \eqref{eq:inverseS}).  Moreover, $\beta_i$ is a factor that is multiplied with the risk exposure: smaller values of $\beta_i$ lead to larger absolute values of the risk exposure. Moreover, for this RDU agent, the distribution of $X_1$ is positively (right-)skewed, due to the over-weighting of small probabilities (and absence of loss aversion). The other two EU agents show a negative skewness in their optimal payoffs $X_2$ and $X_3$, due to risk aversion and  no over-weighting of small probabilities.

\medskip
\subsection{Examples}

The following four subsections present some more  perspective on the role of the RDU agent in the structure of sharing rules: apart from the risk preference parameters $(\beta_1, \beta_2, \beta_3)$, we modify the parameter of the Prelec weighting function $\alpha$. Each example captures a three-agent economy. 
We can identify optimal consumption in closed form for the RDU agent, as in \eqref{eqYStarEx}, and also capture the underlying probability density function.

\smallskip

\subsubsection{Robustness Check}
We begin with some sensitivity analysis of the main parameter choices. Figure \ref{fig:b12}(a)  shows the case where $\beta_3$ is lowered to 0.7. Due to a lower risk aversion of Agent 3, we find that Agent 3 obtains a larger risk exposure in the tail (in absolute value), and that Agent 1 is the main counterparty of this larger risk exposure. 
Figure \ref{fig:b12}(b) shows that if $\beta_1$ is increased to 2, the risk exposures for all agents, captured by the variance, become closer to 0 due to the larger risk aversion of the RDU agent.

\medskip

\begin{figure}[!htpb]
    \centering
    \begin{subfigure}[b]{0.5\textwidth}\includegraphics[width=8.5cm,height=5cm,clip]{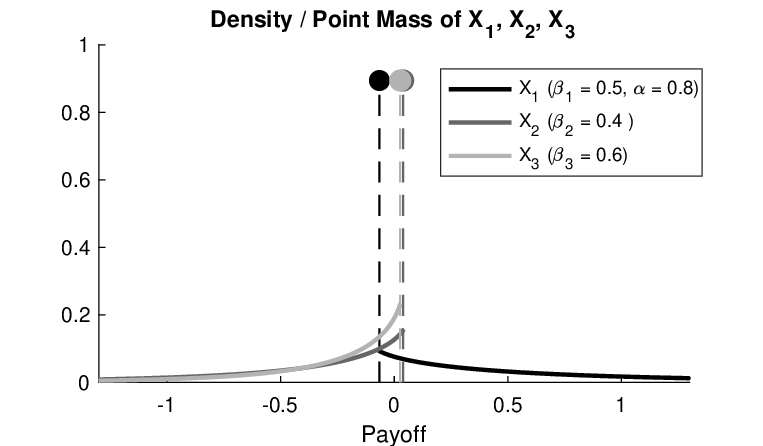}
    \end{subfigure}%
    \begin{subfigure}[b]{0.5\textwidth}
\includegraphics[width=9cm,height=5cm,clip]{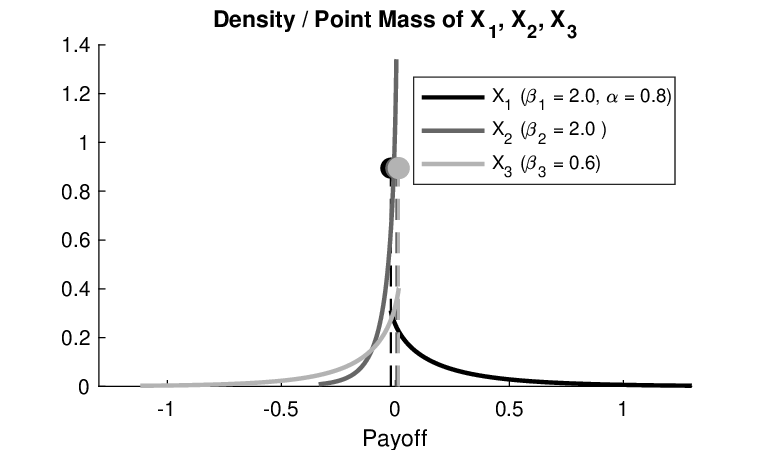}
    \end{subfigure}
    \caption{Robustness: (a) Case with $\beta_1=0.5, \beta_2=0.5$, $\beta_3=0.7$, and $\alpha=0.8$. 
    (b) Case with $\beta_1=2=\beta_3$, $\beta_2=0.5$ and  $\alpha=0.8$. 
    \medskip
    }
    \label{fig:b12}
\end{figure}

\subsubsection{Concave and Convex Case}

Next, we illustrate the risk allocations when we have a concave or convex probability weighting function. We fix $\alpha=0.5$, and re-adjust the distortion function to make it concave/convex. That is, we use the convex $\tilde T_1(p)=\frac{\tilde T( p/4)}{\tilde T(1/4)}$  and  the concave $T_2(p)=\frac{T( p/4)}{T(1/4)}$. The corresponding risk allocations are shown in Figure \ref{fig:concave}. Convex distortion functions lead to full-insurance  allocations, as per Corollary \ref{corsec2}. Moreover, concave distortion functions lead to $\delta=\tilde T_2$, and thus to strictly monotonic functions of $U$. This implies that the risk allocation does not contain an atom and there is no full-insurance event, i.e., $\mathbb{FI}=0$.

 \medskip
 
\begin{figure}[!htpb]
    \centering
    \begin{subfigure}[b]{0.5\textwidth}
        \includegraphics[width=9.5cm,height=5.cm,clip]{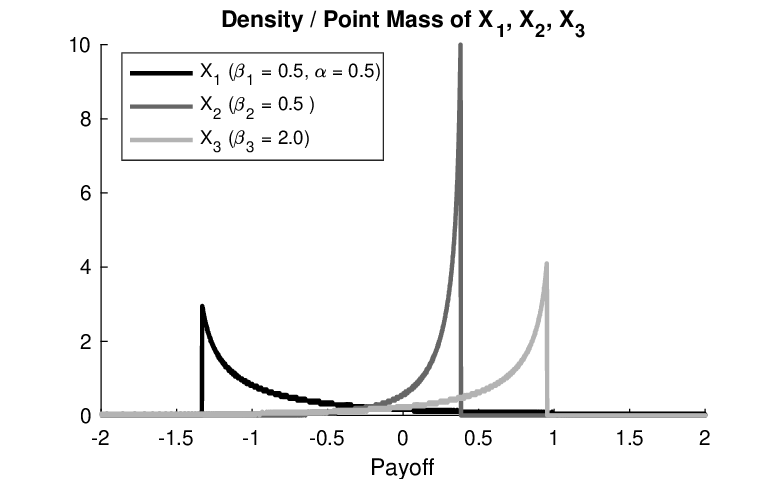}
        
    \end{subfigure}%
    \begin{subfigure}[b]{0.5\textwidth}
        \includegraphics[width=8cm,height=5.cm,clip]{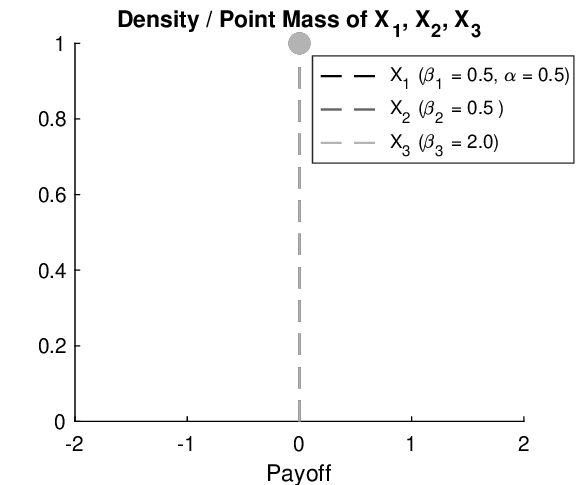}
    \end{subfigure}
    \caption{(a) Heavy betting with  concave  $T_2$. (b) Full insurance  with  convex $T_1$.}
\label{fig:concave}
\end{figure}

\smallskip
\subsubsection{From S-Shaped to Inverse S-Shaped}

Figure \ref{fig:a12} shows the optimal risk allocations when $\alpha=1.2$. Since $\alpha>1$, the corresponding probability weighting function is S-shaped. We see that there is a switch in the risk allocations: in the atomless component, the density of Agents 2 and 3, which was concentrated in the left tail when $\alpha<1$ (Figure \ref{fig:1b}), is now concentrated in the right tail when $\alpha>1$ (Figure \ref{fig:a12}). The risk allocation can be viewed as a swap of densities $\mathtt{f}^\alpha_1$ and $(\mathtt{f}^\alpha_2,\mathtt{f}^\alpha_3)$ in the atomless component, and this swap changes the role of Agent 1 from the ``buyer'' of a bet (a small chance of large gains) to a ``seller'' of a bet (a small chance of large losses). In fact, this phenomenon occurs if we change from $\alpha<1$ (inverse S-shaped) to $\alpha>1$ (S-shaped).  


\begin{figure}[!htpb]
\includegraphics[width=14cm,height=6cm,clip]{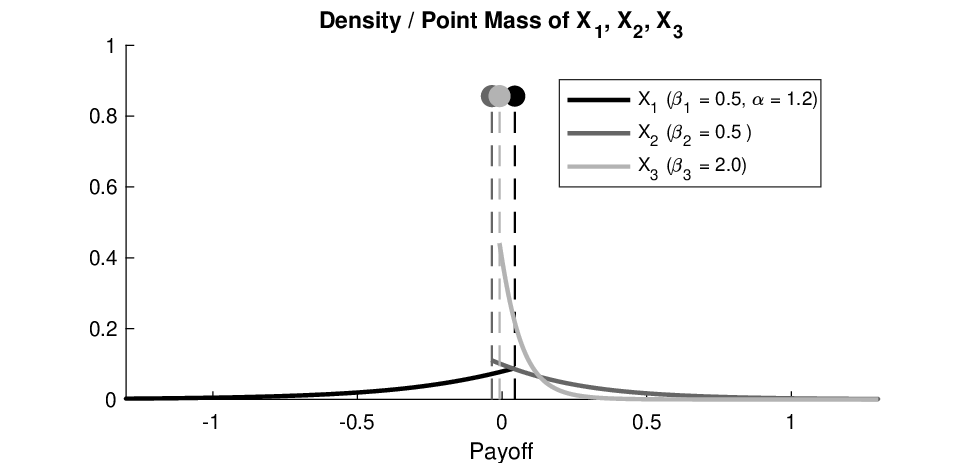}
\caption{\small{S-shaped  weighting function with $\beta_1=\beta_2=0.5$, $\beta_3=2$, $\alpha=1.2$.}}
\label{fig:a12}
\end{figure}

\medskip

This also clarifies that an S-shaped probability weighting function  creates more ``unpleasant'' left-tail uncertainty for the RDU agent, whereas she receives a positive payoff in the full-insurance event. As for the EU agents, the right-tail uncertainty is associated with a negative payoff in the full-insurance event, as shown in Figure \ref{fig:a12}. In view of Figure \ref{fig:1b} with an inverse S-shaped RDU agent, the sign of the skewness flips for each agent.

\smallskip

\subsection{Alternative Weighting Function}
Thus far, we have discussed the emergence of endogenous uncertainty  under the assumption of a Prelec-type weighting function. In this subsection, we discuss the class of weighting functions 
\begin{equation}
\label{eq:HEU}
T(p) 
= 
\gamma \ \frac{(1-\kappa)p}{(1+\kappa)(1-p)+(1-\kappa)p}
\, + \, 
(1-\gamma) \ \frac{(1+\kappa)p}{(1-\kappa)(1-p)+(1+\kappa)p},
\end{equation}
    
\noindent induced in the context of Hurwicz Expected Utility (HEU) by \cite{bleichrodt2023testing}. The parameter $\gamma\in[0,1]$ is an ambiguity index, with $\gamma=1$ corresponding to ambiguity aversion and $\gamma=0$ to ambiguity-seeking behavior. The parameter $\kappa\in[0,1]$ captures ambiguity perception, with larger values of $\kappa$ indicating greater perceived ambiguity. The probability weighting function $T$ in \eqref{eq:HEU} is a weighted average of a convex and a concave function, and it can therefore exhibit an inverse S shape.
\begin{figure}[!htpb]
\includegraphics[width=14cm,height=6cm,clip]{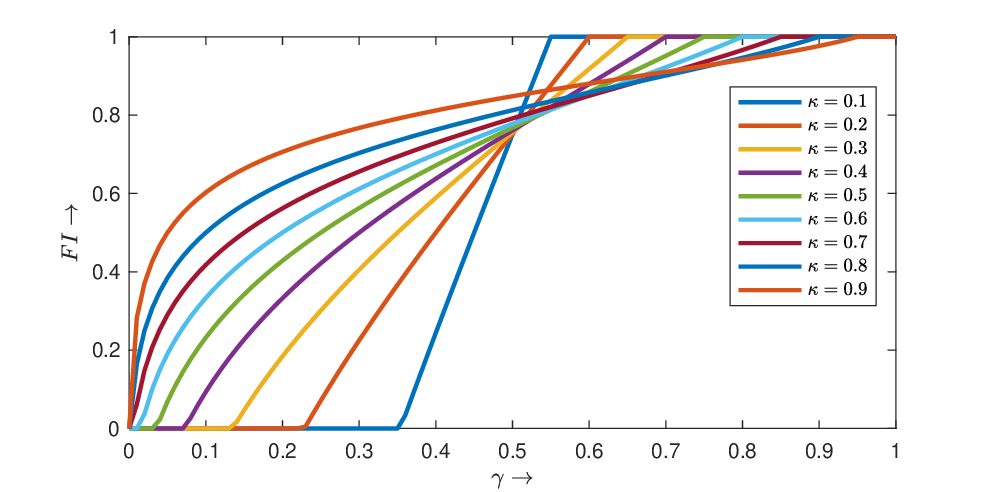}
\caption{\small{ The case of Hurwicz expected utility;  the probability of the discrete full-insurance event $\mathbb{FI}$ as a function of  $\gamma\in [0,1]$, for various values of $\kappa$. }}
\label{fig:HEU}
\end{figure}

\medskip

Figure \ref{fig:HEU}, as a counterpart of Figure \ref{fig:1} for the Prelec case, displays the probability of the discrete full-insurance event as a function
of $\gamma$, for various values of $\kappa$. We find that as $\gamma$ increases, implying more ambiguity aversion, the probability of the discrete full-insurance event increases, except when $\kappa$ is small (low perceived ambiguity). Moreover, for larger perceived ambiguity, this effect is stronger for small $\gamma$, but increases more slowly for larger $\gamma$. 

\medskip

\begin{figure}[!htpb]
\includegraphics[width=14cm,height=6cm,clip]{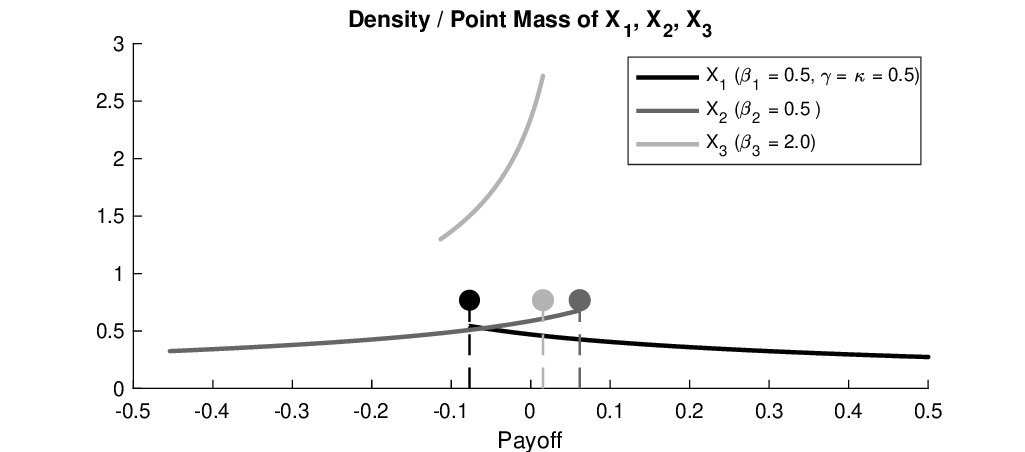}
\caption{\small{ Risk allocations with HEU;  we set $\gamma=\kappa=0.5$, $\beta_1=\beta_2=0.5$ and $\beta_3=2$. }}
\label{fig:HEU2}
\end{figure}

\medskip

For the risk allocation, we provide an example in Figure \ref{fig:HEU2}. In particular, with $\gamma=\kappa=0.5$, we observe that the probability weighting function is inverse S-shaped, and $p^*\approx0.768$. That is, the probability of the full-insurance event is approximately $76.8\%$. Moreover, because $\tilde T'(1-)=1.75$ is finite, we find an abrupt jump in the density function to/from zero in the tail. This is in contrast to Prelec probability weighting functions. {Since $\tilde T'(p^*)\approx 0.95$ and $\tilde T=\delta$ on $[p^*,1]$, it follows that the distribution of  $\ln(\delta'(U))$ is supported on $[-0.05, 0.56]$. By Corollary \ref{Cor_explict} and as displayed in Figure \ref{fig:HEU2}, the continuous density component of $X_1$ has a range of width $(0.56+0.05)/(\beta_1+\bar \beta)\approx0.68$. The continuous density component of $X_2$ has a range of width $\frac{\bar\beta }{\beta_2}0.68=0.54$, and the continuous density component of $X_3$ has the smallest range of width $\frac{\bar\beta }{\beta_3}0.68=0.135$, due to larger risk aversion. }

\medskip
\subsection{Side Payments}

In the context of side payments, the Kaldor-Hicks criterion (e.g., \cite{kaldor1939}) offers a useful efficiency benchmark. An alternative allocation is a potential Pareto improvement  if there exists a feasible transfer scheme that  compensates all agents who lose from the change. In our framework, the surplus gained by agents advantaged through RDU-induced probability distortions could hypothetically finance such compensation, implying that the resulting allocation satisfies the Kaldor-Hicks criterion even without the transfers being implemented.

\medskip

We next display the certainty equivalents for the three agents as a function of $\alpha$ from \eqref{eq:prelec}. We start with the setting of  Figure \ref{fig:1b}.  For Agent 1, the certainty equivalent of any payoff $X$ with cumulative distribution function $F_{X}$ is given by:
$$CE_1(X):=u_1^{-1} \left(\displaystyle\int_0^1 u_1\left(F_{X}^{-1}(p)\right) \tilde T^\prime(p) \, \textnormal{d}p\right),\qquad\quad
$$  
and for the other agents it is given by:
$$CE_i(X):=u_i^{-1} \left(\displaystyle\int_0^1 u_i\left(F_{X}^{-1}(p)\right)  \, \textnormal{d}p     \right),\quad  i=2,\ldots,n.$$  

\medskip

By construction, certainty equivalents satisfy $CE_j(c)=c$, for $c\in\mathbb{R}$. Moreover, for exponential utilities, $CE_j$ is cash additive: $CE_j(X+c)=CE_j(X)+c$, and it is also called an entropic risk measure if $T=id$.  Here, we assume that all initial endowments are 0. By Corollary \ref{corsec2}, if $T=id$, then $(X_1,X_2,X_3)$  is a full-insurance allocation, and we have  $\sum_{i=1}^n CE_i(X_i)=\mathtt{w}=0$, so that a positive certainty equivalent corresponds to a strict improvement over no trading. That is, if $T\neq id$, then $CE=\sum_i CE_i(X_i) > 0$.

\begin{figure}[!htbp]
  \centering
  \begin{subfigure}[b]{0.5\textwidth}
    
\includegraphics[width=8cm,height=6cm,clip]{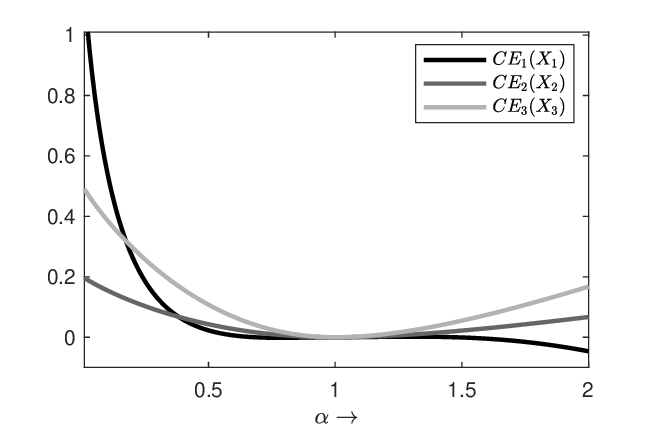}
    \caption{\small The individual certainty equivalents  \newline $CE_i(X_i)$ of the three agents, for  varying $\alpha$.}
    \label{fig:CEQ}
  \end{subfigure}%
  \hfill
  \begin{subfigure}[b]{0.5\textwidth}
    \includegraphics[width=8cm,height=6cm,clip]{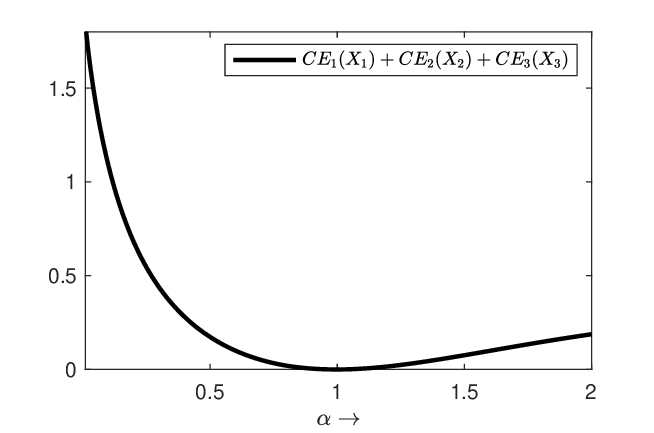}
    \caption{\small The aggregate certainty equivalents\newline\qquad $CE$, for varying $\alpha$.}
    \label{fig:CEQtotal}
  \end{subfigure}
  \caption{Here, we use $\beta_1=\frac{1}{2}=\beta_2$, and $\beta_3=2$.  }
  \label{fig:CEQ-examples}
\end{figure}

\medskip

We show the certainty equivalents for our baseline example with various choices of $\alpha$ in Figure \ref{fig:CEQ}. Notice that the certainty equivalents are equal to 0 when $\alpha=1$. In this case, there are no risk exposures, corresponding to no betting. Just like in \cite{BBG24}, we can argue that zero-sum deterministic cash transfers (side payments) can be arbitrarily selected, and thus a negative value of $CE_1(X_1)$ for the RDU agent 1 when $\alpha>1.5$ can be increased to a positive value via side payments from the other agents. Thus, only the sum of certainty equivalents is relevant, and a positive sum of certainty equivalents means that all agents may be better off from betting in combination with some appropriately selected side payments. We display the sum of certainty equivalents in Figure \ref{fig:CEQtotal}.
We see that for $\alpha$ further away from 1, the sum of certainty equivalents becomes larger, which indicates that betting is attractive due to the probability distortion  of the RDU agent.

\bigskip
\section{Nudging the RDU agent}\label{Sec:nud}

Consider an economy populated by one RDU agent and one EU agent with a given distortion function $T$.\footnote{By Lemma \ref{PropRepAg}, this single EU agent can be interpreted as the representative agent for the $n-1$ EU agents in the market, endowed with the aggregate utility function $u_\lambda$.} As shown in Section \ref{SecBettingPO} and Section \ref{Section:noisePO}, the degree of nonlinearity of the RDU agent's distortion function directly impacts the degree of non-idiosyncratic risk within any efficient allocation.  

\medskip

We assume that a sophisticated social planner is aware of the deterministic aggregate endowment $\mathtt{w}$, and wishes to control the level of non-idiosyncratic risk within efficient allocations resulting from the RDU agent's probability weighting. The social planner can exert a costly effort to nudge the RDU agent, and this nudging brings the RDU agent's probability weighting function closer to linearity. The effort $M$ is exerted by the social planner,  at a monetary cost $M>0$, which reduces the aggregate endowment to $\mathtt{w}-M$. This impacts the curvature of the ``nudged'' weighting function $T_M$.\footnote{One illustrative real-world example is to \emph{educate} an RDU agent in probability theory and statistics. In this perspective, $T$ models a \emph{lack of information}, rather than a probabilistic form of ``risk aversion.'' Meanwhile, EU agents behave much like econometricians, and the social planner aspires to elevate the RDU agent to the same level of expertise. However, this ``education'' or ``nudging'' comes at a cost, which explains why it does not occur automatically. }

\medskip

The welfare maximization problem with nudging then reads as follows:
\begin{eqnarray}
\label{nudge}
\max_{M\in [0,\mathtt{w}]} \ \sup_{X_1+X_2=\mathtt{w}-M} \ \underbrace{ U_1(X_1, T_M)+U_2(X_2)}_{=:W(X_1, M)},
\end{eqnarray}

\noindent with a ``nudged'' weighting function $T_M: [0,1]\to [0,1]$ that depends on the effort level $M$ and is given by 
\begin{eqnarray}\label{TM}
T_M(p):= (1-f(M)) \, T(p)+f(M) \, p.
\end{eqnarray}

\medskip

\noindent We assume that the function $f$ is a strictly increasing and  concave function $f:[0,\mathtt{w}]\to [0,1]$ that satisfies the following:
\medskip
\begin{enumerate}
\item The outer  problem is well-defined. 
\medskip
\item If $M>M'$ then  $T_M$ is more linear than $T_{M'}$.
\medskip
\item $f(0) = 0$, so that $T_0=T$.
\end{enumerate}

\medskip

In the numerical illustration of Example \ref{ex_nudge}, we adopt the specific functional form
$f(M) = 1-(1-M)^k$, for $k>1$, so that higher values of $k$ correspond to stronger curvature, i.e., a faster convergence of $T_M$ toward linearity as $M$ increases.

\medskip

\begin{example} 
We continue studying the case with a Prelec weighting function, $n=3$, and exponential utilities. Thus, we now use two EU agents, which can be summarized as a representative EU agent via Lemma \ref{PropRepAg}.
Consider the case in which $\alpha=0.4$, $\beta_1=\frac{1}{2}
=\beta_2$,  $\beta_3=2$. 
In Figure \ref{fig:convex}, we illustrate how the optimal allocation changes with $M$. Specifically, we plot the atom and the atomless density $\mathtt{f}^M_1$ of $X^M_1$. By Corollary \ref{Cor_explict}, this distribution of $X^M_1$ coincides with the one obtained in the three-agent economy when the two EU agents are aggregated into a single representative EU agent with risk-aversion parameter $\bar \beta=0.4$.

\medskip

\begin{figure}[!htpb]
\includegraphics[width=14cm,height=6cm,clip]{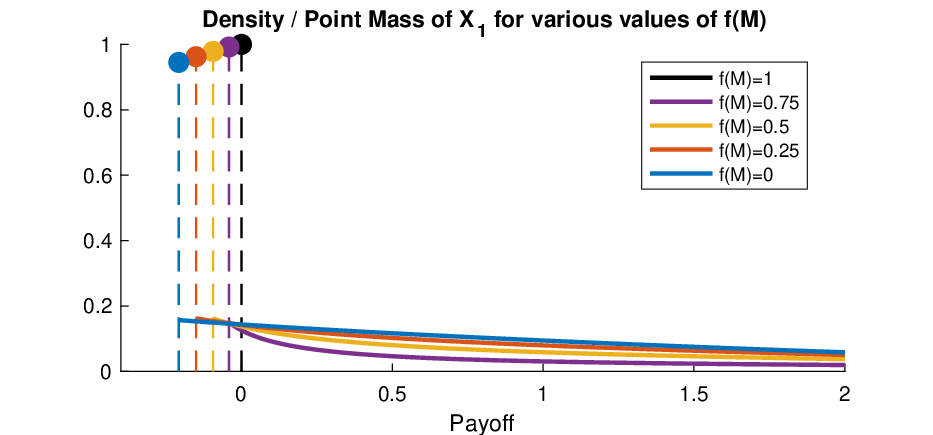}
\caption{\small{Examples of the density $\mathtt{f}^M_1$ for various choices of $f(M)$ under the base case. We set $f(M)=0, 0.25, 0.5, 0.75, 1$.}} 
\label{fig:convex}
\end{figure}

We find that the atom becomes more likely when $f(M)$ increases, and moreover, the densities become lower as $f(M)$ takes lower values. This means that increases in $f(M)$ make the optimal risk allocation ``closer'' to the full insurance allocation.
\end{example}

\medskip

The following result gives a handy representation of the RDU functional after nudging at level $M$.

\medskip

\begin{lemma}\label{L:TM}
The RDU functional with distortion function $T_M$ from \eqref{TM} is given by 
\begin{eqnarray}
U_1(X, T_M)=\left[1 - f(M)\right] \, U_1(X, T) + f(M) \, \int u_1(X) \, \textnormal{d}\p . 
\end{eqnarray}
\end{lemma}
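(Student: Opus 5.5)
The plan is to exploit the fact that the Choquet integral is linear in the capacity, as long as the integrand is held fixed. The capacity here is $T_M\circ\p = (1-f(M))\,T\circ\p + f(M)\,\p$, which is a convex combination of two capacities. First I will check that $T_M$ is an admissible distortion. It satisfies $T_M(0)=0$ and $T_M(1)=1$, and it is strictly increasing, because $T$ and the identity are strictly increasing and $f(M)\in[0,1]$. It is also twice differentiable. Hence $U_1(X,T_M)$ is a well-defined RDU functional in the sense of Assumption \ref{AssumpUtDist}.

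Next I will write the Choquet integral of the bounded random variable $Y:=u_1(X)$ with respect to a distorted capacity $\nu=g\circ\p$ in its standard layer form:
\begin{equation*}
\int Y\,\textnormal{d}\nu=\int_0^\infty g\big(\p(Y>y)\big)\,\textnormal{d}y+\int_{-\infty}^0\Big[g\big(\p(Y>y)\big)-1\Big]\,\textnormal{d}y .
\end{equation*}
Since $X\in L^\infty$ and $u_1$ is continuous, $Y$ is bounded, so both integrals run over bounded ranges and are finite. Substituting $g=T_M=(1-f(M))T+f(M)\,\mathrm{id}$ gives the integrands pointwise as the same convex combination. On the negative half-line I will write $-1=-(1-f(M))-f(M)$ so that this decomposition also holds there. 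Linearity of the Lebesgue integral then splits the expression into $(1-f(M))\int u_1(X)\,\textnormal{d}T\circ\p$ plus $f(M)\int u_1(X)\,\textnormal{d}\p$.

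The last term is the Choquet integral for the identity distortion, which coincides with the ordinary expectation $E^\p[u_1(X)]$. This yields exactly the formula in Lemma \ref{L:TM}.

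As an alternative, I could use the quantile form $\int_0^1 u_1(F_X^{-1}(p))\,\tilde T_M'(p)\,\textnormal{d}p$, as in the certainty-equivalent formulas. This route works because $\tilde T_M'=(1-f(M))\tilde T'+f(M)$ and $u_1$ is increasing, so $F^{-1}_{u_1(X)}=u_1\circ F^{-1}_X$. The main point requiring care is that additivity holds here only because the integrand $u_1(X)$ is the same in every term. The Choquet integral is not additive in the integrand, but it is additive in the capacity, and that is all this argument needs.
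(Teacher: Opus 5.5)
Your proof is correct. Your main argument works at the level of the capacity. You write the Choquet integral in layer-cake form and observe that $T_M\circ\p=(1-f(M))\,T\circ\p+f(M)\,\p$ holds event by event. On the negative half-line you split the constant as $-1=-(1-f(M))-f(M)$, and then use linearity of the Lebesgue integral in $y$.

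The paper instead takes exactly your ``alternative'' route. It first computes $\tilde T_M(p)=\tilde T(p)-f(M)\,[\tilde T(p)-p]$ and differentiates to get $\tilde T_M'=(1-f(M))\,\tilde T'+f(M)$. It then inserts this into the quantile representation $U_1(X,T_M)=\int_0^1 u_1(F_X^{-1}(p))\,\tilde T_M'(p)\,\textnormal{d}p$, splits the integral, and identifies $\int_0^1 u_1(F_X^{-1}(p))\,\textnormal{d}p$ with $\int u_1(X)\,\textnormal{d}\p$.

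Your layer-cake version needs neither differentiability of $T$ nor the quantile representation of the Choquet integral, which the paper borrows from earlier work. So the identity holds for any distortion and follows directly from the definition. The paper's version has the advantage of producing the explicit forms of $\tilde T_M$ and $\tilde T_M'$, which it reuses right away in Lemma \ref{L:delta} and in the first-order condition of Theorem \ref{welfareTHM}. Your preliminary check that $T_M$ is an admissible distortion is not needed for the identity itself, but it is harmless.
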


\medskip

These results highlight the functional dependency of the planer's effort level $M$ in Problem   \eqref{nudge}  with value function $M\mapsto V(M)$. We can apply all insights from Sections 2-4.
In particular $V(M)$ can be analyzed in a pointwise manner. In view of Lemma \ref{L:TM}, the social planner is assigning a total welfare to the RDU agent that consists of two parts: (i) a fraction $1-f(M)$ coming from the prior preferences of the RDU agent; and (ii) a fraction $f(M)$  coming from discarding the probability weighting of the RDU agent and treating this agent as an EU-maximizer. 

\medskip

In this case, the characterization in the economy with an RDU having distortion  $T_M$  follows again by Theorem \ref{ThMain}. However,  we need to replace $\delta$ by $\delta_M$,  the convex envelope of $\tilde T_M$.
The following result shows that the modified convexification depends in an affine linear way on $f$. 

\medskip

\begin{lemma}\label{L:delta}
For $T_M$ as in \eqref{TM},  we have  $\delta_M\(p\) = \delta(p)- f(M) (\delta\(p\)-p)$.
\end{lemma}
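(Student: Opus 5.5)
The plan is to compute the conjugate of the nudged distortion explicitly and then use two elementary properties of the convex envelope: it commutes with multiplication by a positive constant, and it commutes with adding an affine function. Write $c := f(M)\in[0,1]$ for brevity. From \eqref{TM},
\[
\tilde T_M(p) = 1 - T_M(1-p) = 1-(1-c)\,T(1-p) - c\,(1-p) = (1-c)\,\tilde T(p) + c\,p ,
\]
so $\tilde T_M$ is a convex combination of $\tilde T$ and the identity. The claim $\delta_M(p)=\delta(p)-c(\delta(p)-p)$ is equivalent to $\delta_M = (1-c)\,\delta + c\,\mathrm{id}$, where $\delta$ is the convex envelope of $\tilde T$ from Lemma \ref{LemBBGnew}. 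We then verify the two defining properties of the convex envelope for the candidate $h:=(1-c)\delta + c\,\mathrm{id}$.

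First, $h$ is a minorant. It is convex, since $1-c\ge 0$, $\delta$ is convex and $\mathrm{id}$ is affine. It lies below $\tilde T_M$, because $\delta\le\tilde T$ pointwise gives $h\le (1-c)\tilde T + c\,\mathrm{id}=\tilde T_M$.

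Second, $h$ is the greatest such minorant. Take any convex $g\le \tilde T_M$ on $[0,1]$. If $c<1$, set $\tilde g := (g - c\,\mathrm{id})/(1-c)$. This function is convex, because subtracting an affine function and dividing by a positive constant preserve convexity. It also satisfies $\tilde g\le \tilde T$, since $g\le(1-c)\tilde T+c\,\mathrm{id}$. By maximality of $\delta$ we get $\tilde g\le\delta$, which rearranges to $g\le (1-c)\delta + c\,\mathrm{id}=h$. Hence $h$ is the greatest convex minorant, i.e.\ $h=\delta_M$.

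The degenerate case $c=f(M)=1$ has to be handled separately, because there we cannot divide by $1-c$. In that case $\tilde T_M=\mathrm{id}$ is affine and therefore equals its own convex envelope. The formula gives $\delta - (\delta - \mathrm{id}) = \mathrm{id}$, so it holds as well.

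I do not expect a genuine obstacle. The only points needing care are this separate treatment of $c=1$ and checking that nonnegativity of $1-c$ is what allows convexity and the inequalities to pass through the rescaling.
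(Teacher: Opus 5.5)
Your proof is correct and matches the paper's argument essentially step for step: the paper likewise shows that $(1-f(M))\,\delta + f(M)\,t$ is a convex minorant of $\tilde T_M$, proves maximality by rescaling an arbitrary convex minorant $h$ to $h_0=(h-f(M)\,t)/(1-f(M))\le \tilde T$, and handles $f(M)=1$ separately. The paper additionally records that $\delta_M'=(1-f(M))\,\delta'+f(M)$ a.e., which it uses afterwards but which goes beyond the lemma as stated.
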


\medskip

It directly follows  that $ \delta'_M\(p\)=\delta'(p) -f(M)(\delta'(p)-1)$. 
Lemma \ref{L:delta} gives a useful decomposition of $\delta_M$. In view of the optimal sharing rule under $T_M$, the resulting   properties of  $M\mapsto \delta_M'$ clarify  the relationship between $\delta^\prime$ and $\delta^\prime_M$. 
For instance,  if $T$ is inverse S-shaped, then by Proposition \ref{Prop:MG}(1)  there exists a unique tangent point $p_M^* \in \(0,\bar p\)$ such that $\delta_M$ is given by
\begin{equation*}
\delta_M\(p\) = \left\{
\begin{array}{l l}
\frac{\tilde T_M\(p^*_M\)}{p^*_M} \cdot p & \quad \mbox{if $p < p^*_M$;}\\
\tilde T_M\(p\) & \quad \mbox{otherwise. }\\
\end{array} \right.
\end{equation*}
If  $T$ is S-shaped,  the argument is similar (see Proposition \ref{Prop:MG}(2)).

\medskip

We now formulate for every $M$ the optimal allocation in closed form.
We focus first on the inner part of Problem \eqref{nudge}, the weighted-sum risk-allocation problem under given $M$:
\begin{equation}\label{eqgen-prob2}
V(M)=\sup_{X_1+X_2=\mathtt{w}-M} \, \Big\{U_1(X_{1},T_M)+ \lambda_2 \, U_2(X_{2})\Big\}, \ \quad \lambda_2 > 0.
\end{equation}
Moreover, we show the (smooth) sensitivity of the optimal solution with respect to the effort level $M$ in closed form.

\medskip

\begin{proposition}\label{ThMain2}
\

\begin{enumerate}
\item  For a given effort level $M$,  $\mathbf{X}^M=(x_M(\mathtt{U}) , \mathtt{w}-M-x_M(\mathtt{U} ))$ solves Problem \eqref{eqgen-prob2} if and only if
\begin{align*}
x_M(\mathtt{U}) =m^{-1}_{\lambda_2}\Big(  \delta'(\mathtt{U}) -f(M)(\delta'(\mathtt{U})-1) \Big), 
\end{align*}
where $m_{\lambda_2}(x):=\lambda_2 \, \frac{u_2^\prime\(\mathtt{w}-M-x\)}{u_1^\prime\(x\)}$.

\bigskip

\item Moreover, the optimal allocation depends smoothly on the effort level $M$:
\begin{equation}
\label{eqxprime}
\begin{split}
x'_p(M)
&:=\frac{\partial }{\partial M}X_M(p)\\
&=
\frac{-f^\prime(M)\left(\delta^\prime(p)-1\right)
+\lambda_2\,\dfrac{u_2^{\prime\prime}\left(\mathtt{w}-M-m_{\lambda_2}^{-1}\left(\delta^\prime(p)-f(M)\left(\delta^\prime(p)-1\right)\right)\right)}{u_1^\prime\left(m_{\lambda_2}^{-1}\left(\delta^\prime(p)-f(M)\left(\delta^\prime(p)-1\right)\right)\right)}}
{\Lambda\left(\delta^\prime(p)-f(M)\left(\delta^\prime(p)-1\right)\right)},
\end{split}
\end{equation}

\noindent where
\begin{align*}
\Lambda(x)
&:=\frac{d}{dx}m_{\lambda_2}\,\big(m_{\lambda_2}^{-1}(x)\big)\\
&=\frac{- \lambda_2}{u_1'\left(m_{\lambda_2}^{-1}(x)\right)}
\left[
u_2''\left(\mathtt{w} - M - m_{\lambda_2}^{-1}(x)\right)
+
u_2'\left(\mathtt{w} - M - m_{\lambda_2}^{-1}(x)\right)
\frac{u_1''\left(m_{\lambda_2}^{-1}(x)\right)}
{u_1'\left(m_{\lambda_2}^{-1}(x)\right)}
\right].
\end{align*}
\end{enumerate}
\end{proposition}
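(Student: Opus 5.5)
Part (1) is a direct specialization of the machinery already developed. I would apply Lemma \ref{LemBBGnew} (equivalently Theorem \ref{ThMain}) to the two-agent economy with aggregate endowment $\mathtt{w}-M$ and distortion $T_M$. With a single EU agent, the representative-agent construction of Lemma \ref{PropRepAg} is trivial: $I_\lambda(x)=I_2(x/\lambda_2)$, so $J_\lambda(x)=\lambda_2 u_2'(x)$ and $u_\lambda'(x)=\lambda_2u_2'(x)$. This gives $m_\lambda(x)=\lambda_2 u_2'(\mathtt{w}-M-x)/u_1'(x)=m_{\lambda_2}(x)$.

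Lemma \ref{LemBBGnew} then states that the optimal $X_1$ equals $m_{\lambda_2}^{-1}(\delta_M'(\mathtt{U}))$ a.s., where $\delta_M$ is the convex envelope of $\tilde T_M$. Since $\tilde T_M(p)=1-T_M(1-p)=(1-f(M))\tilde T(p)+f(M)p$, Lemma \ref{L:delta} gives $\delta_M'=\delta'-f(M)(\delta'-1)$. Substituting this yields the formula for $x_M$. I would briefly justify that $T_M$ still satisfies Assumption \ref{AssumpUtDist}: it is a convex combination, and $f(M)\in[0,1]$.

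I also need $m_{\lambda_2}$ to be invertible. Strict concavity of $u_1,u_2$ makes $x\mapsto u_2'(\mathtt{w}-M-x)$ increasing and $x\mapsto u_1'(x)$ decreasing, so $m_{\lambda_2}$ is strictly increasing. Its derivative is $\Lambda$ evaluated at the preimage, and $\Lambda>0$.

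For part (2), I fix $p$ and write $y(M):=\delta'(p)-f(M)(\delta'(p)-1)$. The solution $x=x_p(M)$ is defined implicitly by $G(x,M):=\lambda_2u_2'(\mathtt{w}-M-x)-y(M)\,u_1'(x)=0$. Differentiability of $u_i$ and $f$ together with the implicit function theorem give $\partial_M x=-G_M/G_x$. The partial derivatives are
\[
G_M=-\lambda_2u_2''(\mathtt{w}-M-x)+f'(M)(\delta'(p)-1)\,u_1'(x),
\]
\[
G_x=-\lambda_2u_2''(\mathtt{w}-M-x)-y\,u_1''(x).
\]
Dividing numerator and denominator by $u_1'(x)$ and using $y=\lambda_2u_2'/u_1'$ at the solution, the denominator becomes exactly $\Lambda(y)$, because $-G_x/u_1'=\Lambda$. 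The numerator becomes $-f'(M)(\delta'(p)-1)+\lambda_2u_2''/u_1'$, which matches \eqref{eqxprime}. Note that $M$ enters both through $y(M)$ and through the endowment inside $m_{\lambda_2}$; the term $u_2''$ in the numerator comes from the second channel.

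The main obstacle is this bookkeeping of the two channels through which $M$ acts, together with a sign check. I must keep track of the sign conventions so that $-G_M/G_x$ reproduces the stated numerator and the $\Lambda$ form without an extra sign flip, and I should verify $\Lambda\neq0$ so that the implicit function theorem applies. The latter holds because $u_2''<0$ and $u_1''<0$, which makes the bracket in $\Lambda$ negative and hence $\Lambda>0$.
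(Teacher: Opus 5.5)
Your proposal is correct and follows essentially the paper's route. Part (1) specializes Theorem \ref{ThMain} to the two-agent economy with endowment $\mathtt{w}-M$ and then inserts Lemma \ref{L:delta}. Part (2) differentiates the first-order identity in $M$: your $G(x,M)$ is just $u_1'(x)\bigl[m_{\lambda_2}(x;M)-\phi(M)\bigr]$, the paper's relation, and your use of the implicit function theorem with $G_x\neq 0$ makes the paper's chain-rule step rigorous.

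One small sign slip needs fixing. With your $G$ one has $G_x/u_1'=\Lambda>0$, not $-G_x/u_1'=\Lambda$ (indeed $G_x>0$). It is $\partial_M x=(-G_M/u_1')/(G_x/u_1')$ that yields the stated numerator over $\Lambda$.
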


\bigskip

Since $\Lambda(x)\geq 0$ and the second summand in the numerator of \eqref{eqxprime} is negative, Proposition \ref{ThMain2}(2) implies that the mapping $M\mapsto x_M(p)$ is increasing only for those values  of $p$ for which the term  $f^\prime(M)\left(\delta^\prime(p)-1\right) 
$ is sufficiently negative, i.e., when $\delta'(p)<1$, and $f'(M)>0$ is large. By concavity of the function $f$, this happens for small values of $M$. This is consistent with Figure \ref{fig:convex}, 
which shows that the  density components of $X_1^M =x_M(\mathtt{U})$ (for varying $M$)  are predominantly  decreasing in $M$ in the right tails, and they increase only in a small region around the corresponding  atoms.

\medskip

Proposition \ref{ThMain2}(2) quantifies the sensitivity of  $X_M$ to changes in $M$ and  will also  help in determining the optimal  level of effort. This is the content of the following result. 
{Without loss of generality, we set $\lambda_2=1$.}

\medskip

\begin{theorem}\label{welfareTHM}
The optimal level of effort $M^*=M^*(T,f)$ can be characterized through FOC of \eqref{nudge}. That is,
\begin{eqnarray*}
W(X_{M^*},M^*) = \max_{M} \, W(X_M,M) = \max_{M} \, \max_{X} \, W(X,M),
\end{eqnarray*}
 where  $M^*$ solves the equation 
$$\int_0^1 \frac{\partial}{\partial M} \, \left[u_1(x_M(t)) \, \tilde T^\prime_M(t) \right] \, dt = - \int_0^1  \frac{\partial}{\partial M} \, u_2(\mathtt{w}-M - x_M(1-t))  \, dt.$$
\end{theorem}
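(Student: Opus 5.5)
The plan is to treat \eqref{nudge} as a two-stage problem and reduce the outer maximisation to a one-dimensional problem in $M$. First I fix $M$ and solve the inner problem with Proposition \ref{ThMain2}(1) (with $\lambda_2=1$). This gives the optimal sharing rule $x_M(\mathtt{U})=m_{1}^{-1}\big(\delta_M'(\mathtt{U})\big)$, where, by Lemma \ref{L:delta}, $\delta_M'=\delta'-f(M)(\delta'-1)$. Because $\delta_M'$ is nondecreasing (as $\delta_M$ is convex) and $m_1$ is increasing, $x_M$ is a nondecreasing function of $t$. So $X_1=x_M(\mathtt{U})$ has quantile function $x_M$, and $X_2=\mathtt{w}-M-x_M(\mathtt{U})$ is anti-comonotonic with it.

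Next I write the value function $V(M)=W(X_M,M)$ in quantile form. For the RDU agent, the Choquet integral of a nondecreasing function of $\mathtt{U}$ is $\int_0^1 u_1(x_M(t))\,\tilde T_M'(t)\,dt$; this is the same representation used for $CE_1$ in Section \ref{Section:noisePO}. For the EU agent, the quantile function of $X_2$ is $t\mapsto \mathtt{w}-M-x_M(1-t)$, so $U_2(X_2)=\int_0^1 u_2(\mathtt{w}-M-x_M(1-t))\,dt$. Hence
\[
V(M)=\int_0^1 u_1(x_M(t))\,\tilde T_M'(t)\,dt+\int_0^1 u_2\big(\mathtt{w}-M-x_M(1-t)\big)\,dt .
\]
Then $\max_M\max_X W(X,M)=\max_M V(M)=V(M^*)$, and $M^*$ maximises a function of one real variable on $[0,\mathtt{w}]$.

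I then differentiate $V$ under the integral sign. Proposition \ref{ThMain2}(2) gives the explicit derivative $\partial_M x_M(t)$, which is continuous in $(t,M)$ away from points where $\delta'$ is unbounded. Moreover, $\tilde T_M'=(1-f(M))\tilde T'+f(M)$ is affine in $f(M)$ by Lemma \ref{L:TM}, and hence smooth in $M$. Setting $V'(M^*)=0$ and moving the second term to the right-hand side gives exactly the displayed equation. Assumption (1) on $f$ (the outer problem is well-defined) is used to guarantee an interior maximiser, or one attained where the FOC holds. One could add that, by the envelope theorem, the indirect effect through $x_M$ cancels, but the statement keeps the full derivatives, so I do not need that.

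The main obstacle is justifying the interchange of $\partial_M$ and $\int_0^1$. The trouble is that $\delta'(t)$ may blow up as $t\to1$ or $t\to0$; for example, the Prelec $\tilde T'$ is unbounded at the endpoints. To handle this, I would bound $|\partial_M x_M(t)|$ using the formula in \eqref{eqxprime}. The numerator grows at most like $|\delta'(t)-1|$ times bounded factors, and $\Lambda$ is bounded below on compact ranges. Integrability of $\delta'$ over $[0,1]$ (it integrates to $1$), together with $\cX\subset L^\infty$, then supplies a dominating function on $[0,1]\times$ a neighbourhood of $M^*$. If this bound fails in the tails, I would first truncate to $[\varepsilon,1-\varepsilon]$ and then pass to the limit with monotone convergence.
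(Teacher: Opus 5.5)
Your proposal is correct and follows the paper's own argument essentially step for step: fix $M$, take the optimal rule $x_M=m_{\lambda_2}^{-1}(\delta_M')$ (with $\lambda_2=1$) from Proposition \ref{ThMain2}(1) and Lemma \ref{L:delta}, rewrite $W(X_M,M)$ in quantile form as $\int_0^1 u_1(x_M(t))\,\tilde T_M'(t)\,dt+\int_0^1 u_2(\mathtt{w}-M-x_M(1-t))\,dt$, differentiate under the integral sign, and set the derivative to zero. The paper does this differentiation purely formally, so your remarks on domination and interiority go beyond it; note, however, two problems with those remarks: $X_M$ is typically unbounded (e.g.\ Prelec with $\alpha<1$, where $\delta'\to\infty$ at $1$), so $\cX\subset L^\infty$ cannot supply the dominating function, and well-definedness of the outer problem does not exclude the corner $M^*=0$ (which occurs in Figure \ref{fig:ex56} for $\alpha>0.76$), where one only gets $V'(0)\le 0$ and the displayed equation need not hold.
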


\bigskip

In the following example, we illustrate this result by returning to the setup of Section \ref{Section:noisePO.2}, with two CARA agents. 

\medskip

\begin{example} \label{ex_nudge}
Let $\mathtt{w}=1$ and $\lambda_2=1$.  The RDU agent has a distortion function $T(p) =\exp(-(-\ln(p))^{\alpha})$, with conjugate distortion function  $\tilde T(p) = 1- \exp(-(-\ln(1-p))^{\alpha})$.
The  two agents have CARA utility functions $u_i(x) = -\frac{1}{\beta_i} e^{-\beta_i x}$, with first derivatives $u_i'(x) = e^{-\beta_i x}$, for $i=1,2$. In the numerical illustration we set $\beta_1 = \tfrac{1}{2}$ and $\beta_2 = \bar\beta = 0.4$, so that the EU agent coincides with the representative agent from Corollary \ref{Cor_explict}.  Using $\tilde T^\prime_M(t)=f(M) + (1-f(M)) \, \tilde T^\prime(t)$ and $\frac{\partial}{\partial M} \, \tilde T^\prime_M(t)=f^\prime(M) \, (1 - \tilde T^\prime(t))$, it follows that the  FOC from Theorem \ref{welfareTHM} is given by
\begin{equation}
\begin{split}
\label{inteq} 
&\displaystyle\int_0^1 e^{-\beta_1{x_M(t)} } \left[ x_M^\prime(t) \, \Big(f(M) + (1-f(M)) \, \tilde T^\prime(t)\Big) - \frac{1}{\beta _1} f'(M)( \tilde T^\prime(t) -1)\right]dt \\
&\qquad= \displaystyle\int_0^1 e^{-\beta_2(1 - M- x_M(1-t))}  \left[1 + x_M^\prime(1-t)\right]  dt.
\end{split}
\end{equation}

\medskip

In the following, we aim to  derive the function $M\mapsto X_M$ in closed form, by applying  Proposition \ref{ThMain2}(1). With $\beta_1+\beta_2=1$,  we obtain first 
$$
m(x)
=\frac{u_2'(\mathtt{w}-M-x)}{u_1'(x)}
=e^{-\beta_2(\mathtt{w}-M-x)+\beta_1 x}
=e^{-\beta_2(\mathtt{w}-M) + (\beta_1 + \beta_2) x},
$$

\noindent with inverse function $m^{-1}(x)=\frac{\ln(x)+\beta_2(\mathtt{w}-M)}{\beta_1+\beta_2} = \ln(x) + \beta_2 (1-M)$. Substituting $m^{-1}(x)$  into  $X_M$   from Proposition \ref{ThMain2}(1) yields 
$$x_M(t) = \ln \left( \delta'(t) - f(M)\left(\delta'(t) - 1\right) \right) + \beta_2(1 - M).$$

\medskip

We now compute $X_M'$. Based on Proposition \ref{ThMain2}(2), we aim  to find $\Lambda(x) =\frac{d}{dx}m\,\big(m^{-1}(x)\big)$ in closed form. Now, letting $y:=m^{-1}(x)$, we have
\begin{align*}
\Lambda(x)
&=\frac{- \lambda_2}{u_1'\left(m^{-1}(x)\right)}
\left[
u_2''\left(\mathtt{w} - M - m^{-1}(x)\right)
+
u_2'\left(\mathtt{w} - M - m^{-1}(x)\right)
\frac{u_1''\left(m_{\lambda_2}^{-1}(x)\right)}
{u_1'\left(m_{\lambda_2}^{-1}(x)\right)}
\right]\\
&=\frac{-\lambda_2}{e^{-\beta_1 y}}
\left[
-\beta_2 e^{-\beta_2(\mathtt{w}-M-y)}
+
e^{-\beta_2(\mathtt{w}-M-y)}\,
\frac{-\beta_1 e^{-\beta_1 y}}{e^{-\beta_1 y}}
\right]
=\frac{-\lambda_2}{e^{-\beta_1 y}}
\left[
-(\beta_1+\beta_2)\,e^{-\beta_2(\mathtt{w}-M-y)}
\right]\\
&=\lambda_2(\beta_1+\beta_2)\,
e^{-\beta_2(\mathtt{w}-M)+(\beta_1+\beta_2)\,y}
=e^{-\beta_2(1-M)+ y}
=e^{-\beta_2(1-M)+\ln(x) + \beta_2 (1-M)}
=x.
\end{align*}

\medskip

\noindent Consequently, for each $t$, 
\begin{align*}
x'_t(M)
&:=\frac{\partial }{\partial M}X_M(t)
=
\frac{-f^\prime(M)\left(\delta^\prime(t)-1\right)
+\lambda_2\,\dfrac{u_2^{\prime\prime}\left(\mathtt{w}-M-m^{-1}\left(\delta^\prime(t)-f(M)\left(\delta^\prime(t)-1\right)\right)\right)}{u_1^\prime\left(m^{-1}\left(\delta^\prime(t)-f(M)\left(\delta^\prime(t)-1\right)\right)\right)}}
{\Lambda\left(\delta^\prime(t)-f(M)\left(\delta^\prime(t)-1\right)\right)}\\
&=
\frac{-f^\prime(M)\left(\delta^\prime(t)-1\right)
+\dfrac{u_2^{\prime\prime}\left(\mathtt{w}-M-m^{-1}\left(\delta^\prime(t)-f(M)\left(\delta^\prime(t)-1\right)\right)\right)}{u_1^\prime\left(m^{-1}\left(\delta^\prime(t)-f(M)\left(\delta^\prime(t)-1\right)\right)\right)}}
{\delta^\prime(t)-f(M)\left(\delta^\prime(t)-1\right)}\\
&=\frac{
- f'(M)\left(\delta'(t)-1\right)
-
\left[\delta'(t)-f(M)(\delta'(t)-1)\right] \beta_2
}
{
\delta'(t)-f(M)(1-\delta'(t))
}
=\frac{f'(M)\left(1-\delta'(t) \right)}
{{(1-f(M))\delta'(t)+f(M)}}- \beta_2.
\end{align*}

\bigskip

Substituting $x_M$ and $x_M'$ into \eqref{inteq} gives an integral equation to identify the optimal effort level $M^*(k,\alpha)\in (0,1)$ that maximizes the social welfare of the nudging planner. If we now take the specific form  $f(M):=1-(1-M)^k$, with $k>1$, then in view of Section \ref{Section:noisePO}, we see that $M^*$ reduces the riskiness of the allocation, since the full-insurance term $\mathbb{FI}$ increases with $M^*$, at the cost of decreasing aggregate wealth $1-M^*$. 

\medskip

With sufficient curvature in $f$, say $k=20$ and $\alpha=0.4$, the integral equation in \eqref{inteq} can be solved numerically. This gives $M^* \approx 6.57\%$ of the aggregate wealth $\mathtt{w}=1$ as the optimal fraction of the (constant) aggregate endowment that is invested in nudging. However, a distortion with less linearity, say $\alpha=0.2$, results in a  significant increase in  the  optimal fraction to $M^* \approx 9.25\%$. See Figure \ref{fig:ex56} for an overview.  

\begin{figure}[!htpb]
\includegraphics[width=12cm,height=6cm,clip]{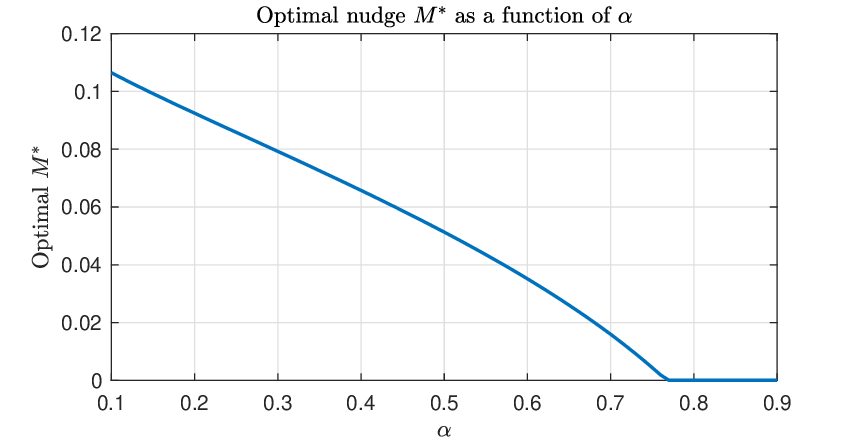}
\caption{\small{Optimal investment in nudging as a function of the Prelec parameter $\alpha$. For $\alpha> 0.76$, we have $M ^*=0$.}} 
\label{fig:ex56}
\end{figure}
\end{example}

\medskip

Figure \ref{fig:ex56} shows that the optimal effort $M^*$ is zero whenever the Prelec parameter $\alpha$ is sufficiently close to its EU benchmark value $\alpha=1$. In this region, the probability weighting function is almost linear, so the welfare gains from correcting the internality are too small to justify the reduction in aggregate wealth, and the planner will not intervene. Only once probability distortions become sufficiently pronounced (that is, when $\alpha$ is much smaller than $1$) is $M^*$ strictly positive. This illustrates that the social planner is willing to tolerate a nontrivial amount of endogenous uncertainty in equilibrium, before spending resources on nudging the RDU agent toward more linear probability weighting.

\bigskip
\section{Conclusion}\label{sec:6}

In pure-exchange economies with no aggregate uncertainty populated by Subjective-Expected Utility (SEU) maximizers, Pareto-efficient allocations are no-betting (full-insurance) allocations if and only if the agents have common beliefs. In this paper, we show how the introduction of a single Rank-Dependent Utility (RDU) agent, into an otherwise classical setting with EU agents that have common beliefs, can lead to drastically different predictions. We demonstrate how betting, that is, uncertainty-generating trade, can be Pareto improving despite the presence of a common baseline probability measure on the state space. That is, probability weighting endogenously generates betting at an optimum, even under common baseline beliefs, thereby showing that uncertainty-generating trade can arise purely from heterogeneity in the perception of risk, rather than in beliefs. 

\bigskip

Our analysis provides a micro-founded explanation for the coexistence of common beliefs and speculative behavior in an environment with no initial aggregate uncertainty. We quantify this behavior by providing a crisp closed-form characterization of Pareto optima that shows precisely how betting behavior emerges at optima. Additionally, we provide comparative statics for some popular classes of probability weighting functions. Finally, we examine how an appropriately designed intervention, such as statistical or financial education of the RDU agent, can attenuate undesirable speculative trade and partially re-establish the optimality of full-insurance allocations.

\newpage

\begin{appendices}
\hypertarget{LinkToAppendix}{\ }
\setlength{\parskip}{0.5ex}
\vspace{-0.5cm}

\section{Proofs for Section \ref{SecBettingPO}}
\label{AppProofs}

\medskip

By a slight abuse of notation, set $\displaystyle\int \cdot \,  \textnormal{d}t =\int_0^1 \cdot \, \textnormal{d}t$ and  $\displaystyle\int \cdot \, \textnormal{d}\p=\int_\Omega \cdot \, \textnormal{d}\p$.

\bigskip

\noindent{\bf{Proof of Proposition \ref{PropPOMax}:}} (1) Immediate. (2) When $T$ is convex, it follows from \cite{Chewetal1987} that $U_1$ is concave. Moreover, by assumption, the functionals $U_i$ are concave, for each $i \in \{2, \ldots, n\}$. The rest follows from a classical separation argument \cite[Proposition 3.4]{CarlierDana2006a}.\qed

\bigskip

\noindent{\bf{Proof of Proposition \ref{PropComChar}:}}  \ 

{$(1) \implies (2)$:} 
Fix $\mathbf{X}_{-1}\in{\mathcal A}_{-1}^C(\mathtt{w}-X)$. Then by a classical result \cite[Lemma 4.95]{FollmerSchied2025}, 
there are nondecreasing and $1$-Lipschitz functions $g_i$ that sum to the identity function, such that for each $i \in \{2, \ldots, n\}$, $X_i = g_i(\mathtt{w}-X)$. Since the probability space is nonatomic, it follows from \cite[Lemma D.17]{FollmerSchied2025} that there exists a random variable $\mathtt{U}\sim Uni(0,1)$  such that $\mathtt{w}-{X} = F_{\mathtt{w}-{X}}^{-1}(\mathtt{U})$, a.s. In particular, $\mathtt{w}-{X}$ is comonotonic with $\mathtt{U}$. Moreover, for each $i \in \{2, \ldots, n\}$, 
$$
X_i = g_i(\mathtt{w}-X) 
= g_i\left(F_{\mathtt{w}-{X}}^{-1}(\mathtt{U})\right)
= F_{g_i(\mathtt{w}-{X})}^{-1}(\mathtt{U})
=F_{X_i}^{-1}(\mathtt{U}), \ \hbox{a.s.,}
$$
by monotonicity of each function $g_i$ \cite[Lemma D.12]{FollmerSchied2025}.

\medskip

{$(2) \implies (1)$:} 
Suppose that there exists some $\mathtt{U}\sim Uni(0,1)$ comonotonic with $\mathtt{w}-X$, such that $X_i = F_{X_i}^{-1}(\mathtt{U})$, a.s., for each $i \in \{2, \ldots, n\}$. Then clearly the vector $\(X_2, \ldots, X_n\)$ is comonotonic.\qed

\bigskip

\noindent{\bf{Proof of Proposition \ref{PropInnerQuant}:}} 
By nonatomicity of the space, let $\mathtt{U}\sim Uni(0,1)$ be such that $\mathtt{w}-{X} = F_{\mathtt{w}-{X}}^{-1}(\mathtt{U})$, a.s. Let $\{f_i^*\}_{i=2}^n$ be optimal for \eqref{InnerProbCom}. For each $i \in \{2, \ldots, n\}$, let $X_i^*:= f_i^*(\mathtt{U})$. Then $\{X_i^*\}_{i=2}^n$ is comonotonic since $f_i^*$ is nondecreasing for all $i \in \{2, \ldots, n\}$. Moreover, by feasibility of $\{f_i^*\}_{i=2}^n$ for \eqref{InnerProbCom},
$$\sum_{i=2}^n X_i^*:= \sum_{i=2}^n f_i^*(\mathtt{U}) = F_{\mathtt{w} - {X}}^{-1}(\mathtt{U}) = \mathtt{w}-{X}, \ a.s.$$
Thus, $(X_i^*)_{i=2}^n\in{\mathcal A}^C_{-1}(\mathtt{w}-{X})$, and hence it is feasible for \eqref{InnerProb}. Suppose that $\{X_i^*\}_{i=2}^n$ is not optimal for \eqref{InnerProb}. Then there exists some $(Z_i)_{i=2}^n\in{\mathcal A}^C_{-1}(\mathtt{w}-{X})$ such that
$$\sum_{i=2}^n\lambda_i \, U_i(Z_i) > \sum_{i=2}^n\lambda_i \, U_i(X_i^*).$$
For each $i \in \{2, \ldots, n\}$, let $g_i := F_{Z_i}^{-1}$. Then since $(Z_i)_{i=2}^n\in{\mathcal A}^C_{-1}(\mathtt{w}-{X})$,
$$\sum_{i=2}^n g_i= F_{\sum_{i=2}^n Z_i}^{-1} =  F_{\mathtt{w} - {X}}^{-1}.$$
Thus $\{g_i\}_{i=2}^n$ is feasible for \eqref{InnerProbCom}, and so
$$\sum_{i=2}^n \lambda_i \, U_i(Z_i) = \int \sum_{i=2}^n \lambda_i \, u_i\left(g_i(t)\right) \,  \textnormal{d}t \leq \int \sum_{i=2}^n \lambda_i \, u_i\left(f_i^*(t)\right) \,  \textnormal{d}t = \sum_{i=2}^n \lambda_i \, U_i(X_i^*),$$
a contradiction. Therefore, $\{X_i^*\}_{i=2}^n$ is optimal for \eqref{InnerProb}.

\vspace{0.3cm}

Conversely, let $\{X_i^* := f_i^*(\mathtt{U})\}_{i=2}^n$ be optimal for \eqref{InnerProb}. Since $(X_i^*)_{i=2}^n\in{\mathcal A}^C_{-1}(\mathtt{w}-{X})$, it follows that
$$\sum_{i=2}^n f_i^* = \sum_{i=2}^n F_{X_i^*}^{-1} = F_{\sum_{i=2}^n X_i^*}^{-1} = F_{\mathtt{w}-{X}}^{-1},$$
and hence $\{f_i^*\}_{i=2}^n$ is feasible for \eqref{InnerProbCom}. Suppose that $\{f_i^*\}_{i=2}^n$ is not optimal for \eqref{InnerProbCom}. Then there exists some $(g_i)_{i=2}^n \in \mathcal{Q}$ such that $\sum_{i=2}^n g_i =  F_{\mathtt{w} - {X}}^{-1}$ and
$$\int \sum_{i=2}^n \lambda_i \, u_i\left(g_i(t)\right) \,  \textnormal{d}t > \int \sum_{i=2}^n \lambda_i \, u_i\left(f^*_i(t)\right) \,  \textnormal{d}t = \int \sum_{i=2}^n \lambda_i \, u_i\left(F^{-1}_{X^*_i}(\mathtt{U})\right) \,  \textnormal{d}\mathbb{P}
= \sum_{i=2}^n \lambda_i \, U_i(X_i^*).$$
Letting $Z_i := g_i(\mathtt{U})$ for each $i \in \{2, \ldots, n\}$, it follows that
{\small{$$\sum_{i=2}^n \lambda_i \, U_i(Z_i)
= \int \sum_{i=2}^n \lambda_i \, u_i\left(F^{-1}_{Z_i}(\mathtt{U})\right) \,  \textnormal{d}\mathbb{P}
= \int \sum_{i=2}^n \lambda_i \, u_i\left(g_i(\mathtt{U})\right) \,  \textnormal{d}\mathbb{P}
=\int \sum_{i=2}^n \lambda_i \, u_i\left(g_i(t)\right) \,  \textnormal{d}t
> \sum_{i=2}^n \lambda_i \, U_i(X_i^*),$$}}a contradiction. Hence, $\{f_i^*\}_{i=2}^n$ is optimal for \eqref{InnerProbCom}.\qed

\bigskip
    

\noindent{\bf{Proof of Lemma \ref{PropRepAg}:}} Note that for all $x$,
$$x = I_\lambda(J_\lambda(x)) = \sum_{i=2}^n I_i\left(\frac{J_\lambda(x)}{\lambda_i}\right) \ \ \hbox{and} \ \ u^\prime_\lambda(x) = J_\lambda(x).$$

\medskip

\noindent Hence, it follows that $ f^\diamond_2,\ldots, f^\diamond_n \in \mathcal{Q}$ and
$\sum_{i=2}^n  f^\diamond_i = \sum_{i=2}^n I_i\left(\lambda_i^{-1}J_\lambda\left(F^{-1}_{\mathtt{w}-{X}}\right)\right) = F^{-1}_{\mathtt{w}-{X}}.$
  That is, $\{ f^\diamond_i\}_{i=2}^n$ is feasible for \eqref{InnerProbCom}. In addition, if $\{f^*_i\}_{i=2}^n$ is optimal for \eqref{InnerProbCom}, then
$$\sum_{i=2}^n  \(f^*_i(t)- f^\diamond_i(t)\) = 0, \, a.s.,$$

\begin{equation}\label{eqleq}
\textnormal{ and }\int u_\lambda\(F^{-1}_{\mathtt{w}-{X}}(t)\)  \, \textnormal{d}t
= \int \sum_{i=2}^n \lambda_i \, u_i\( f^\diamond_i(t)\) \, \textnormal{d}t
\leq  \int \sum_{i=2}^n \lambda_i \, u_i\(f^*_i(t)\) \, \textnormal{d}t.
\end{equation}

\medskip

\noindent Moreover, for a fixed $t \in [0,1]$, the concavity of $u_i$ for $i \in \{2, \ldots, n\}$ yields
\begin{align*}
\sum_{i=2}^n \lambda_i \, u_i\(f^*_i(t)\)
&\leq \sum_{i=2}^n \lambda_i \, u_i\( f^\diamond_i(t)\) +  \sum_{i=2}^n \lambda_i \, \(f^*_i(t)- f^\diamond_i(t)\) \, u_i^\prime\( f^\diamond_i(t)\) \\
&= u_\lambda\(F^{-1}_{\mathtt{w}-{X}}(t)\)  +  \sum_{i=2}^n \lambda_i \, \(f^*_i(t)- f^\diamond_i(t)\) \, \underbrace{u_i^\prime\Big(I_i\Big(\frac{J_\lambda(F^{-1}_{\mathtt{w}-{X}}(t))}{\lambda_i}\Big)\Big)}_{= \lambda_i^{-1} J_\lambda\left(F^{-1}_{\mathtt{w}-{X}}(t)\right)}.
\end{align*}

\medskip

\noindent Hence, for a.e.\ $t \in [0,1]$, the derivation yields $\sum_{i=2}^n \lambda_i \, u_i\(f^*_i(t)\)
\leq u_\lambda\(F^{-1}_{\mathtt{w}-{X}}(t)\).$ Therefore,
\begin{equation}\label{eqgeq}
\int   \sum_{i=2}^n \lambda_i \, u_i\(f^*_i(t)\) \,  \textnormal{d}t \leq  \int  u_\lambda\(F^{-1}_{\mathtt{w}-{X}}(t)\) \textnormal{d}t.
\end{equation}

\medskip

\noindent Consequently, by \eqref{eqleq} and \eqref{eqgeq},
\begin{equation*}
\int \sum_{i=2}^n \lambda_i \, u_i\(f^*_i(t)\) \, \textnormal{d}t
=  \int u_\lambda\(F^{-1}_{\mathtt{w}-{X}}(t)\) \, \textnormal{d}t
= \int \sum_{i=2}^n \lambda_i \, u_i\( f^\diamond_i(t)\) \, \textnormal{d}t,
\end{equation*}

\noindent implying that $\{ f^\diamond_i\}_{i=2}^n$ is optimal for \eqref{InnerProbCom}. Additionally, note that
\begin{align*}
\sup_{\mathbf{X}_{-1}\in{\mathcal A}_{-1}(\mathtt{w}-{X})}\ \sum_{i=2}^n\lambda_i \, U_i(X_i)
&= \sum_{i=2}^n\lambda_i \, U_i\( f^\diamond_i(\mathtt{U})\)
= \sum_{i=2}^n\lambda_i \, \int u_i\( f^\diamond_i(\mathtt{U})\)  \textnormal{d}\p\\
&= \int \sum_{i=2}^n \lambda_i \, u_i\( f^\diamond_i(t)\) \, \textnormal{d}t
=  \int u_\lambda\(F^{-1}_{\mathtt{w}-{X}}(t)\) \, \textnormal{d}t\\
&=  \int u_\lambda\(F^{-1}_{\mathtt{w}-{X}}(\mathtt{U})\) \, \textnormal{d}\p
=  \int u_\lambda\(\mathtt{w}-{X}\) \,  \textnormal{d}\p.
\end{align*}
\qed

\medskip

As a preparation of the  proof of Lemma \ref{LemBBGnew}, define the pointwise aggregate utility of the EU agents by
\begin{equation}
\label{eqAggUt}
u_\lambda(x) := \sum_{i=2}^n \lambda_i \, u_i\left(I_i\left(\tfrac{J_\lambda(x)}{\lambda_i}\right)\right).
\end{equation}
It also follows from Lemma \ref{PropRepAg} that if $\mathtt{U}  \sim Uni(0,1)$ is such that $\mathtt{w}-{X} = F_{\mathtt{w}-{X}}^{-1}(\mathtt{U})$, a.s., then

\begin{enumerate}
\item For all $i \in \{2, \ldots, n\}$, we have $u_i^\prime( f^\diamond_i(\mathtt{U})) = \lambda_i^{-1} \, u_\lambda^\prime(\mathtt{w}-{X}), \, a.s.$

\medskip

\item $\{ f^\diamond_i(\mathtt{U})\}_{i=2}^n$ is optimal for \eqref{InnerProb} and hence $(\mathtt{w}-{X})$-PO, and $ f^\diamond_i(\mathtt{U}) = I_i\(\frac{J_\lambda\(\mathtt{w}-{X}\)}{\lambda_i}\)$, a.s., for each $i \in \{2, \ldots, n\}$. Moreover, with $U_\lambda(\mathtt{w}-{X})$ as in \eqref{InnerProb}, we have
\begin{align*}
U_\lambda(\mathtt{w}-{X})
= \sum_{i=2}^n\lambda_i \, U_i\( f^\diamond_i(\mathtt{U})\)
=  \int u_\lambda(\mathtt{w}-{X}) \, \textnormal{d}\p.
\end{align*}
\end{enumerate}

\bigskip

\noindent{\bf{Proof of Lemma \ref{LemBBGnew}:}}
Recall that Problem \eqref{eq:gen-prob2} is given by $\sup_{X\in \cX} \left[U_1(X)+U_\lambda(\mathtt{w}-X)\right]$, where by Lemma \ref{PropRepAg}, 
$$
U_\lambda(\mathtt{w}-X) = \int u_\lambda(\mathtt{w}-{X}) \, \textnormal{d}\p,$$
and $u_\lambda$ is defined in \eqref{eqAggUt}. Hence, $X^*$ is optimal for Problem \eqref{eq:gen-prob2} if and only if it is optimal for 
$$\sup_{X\in \cX} \, \left[\int u_1(X) \, \textnormal{d}T\circ\p \ + \int u_\lambda(\mathtt{w}-{X}) \, \textnormal{d}\p\right].$$

\noindent Moreover, as in \cite[Lemma A.1]{BBG24}, 
$$\int u_1(X) \, \textnormal{d}T\circ\p = \int u_1(f_t) \, \tilde T^{\prime}_t \, \textnormal{d}t \ \ \hbox{and} \ \ \int u_\lambda(\mathtt{w}-{X}) \, \textnormal{d}\p = \int u_\lambda\left(\mathtt{w}-f_t\right) \, \textnormal{d}t,$$
where $f_t := F_{X}^{-1}(t)$ and $\tilde T^{\prime}_t := \tilde T^{\prime}(t)$, for all $t \in [0,1]$. Hence, using a quantile reformulation approach as in \cite{BBG24}, $X^*$ is optimal for Problem \eqref{eq:gen-prob2} if and only if it is optimal for 
\begin{align}
\sup_{f \in \mathcal{Q}} \, \int \left[u_1(f_t) \, \tilde  T^{\prime}_t + u_\lambda(\mathtt{w}-f_t) \right] \, \textnormal{d}t.
\label{eq:initial-problem}
\end{align} 

\medskip

We solve Problem \eqref{eq:initial-problem} using a pointwise optimization approach. First, it is easy to verify that, for each $t \in [0,1]$,
$$\bar f_t := m^{-1}_\lambda\left(\tilde T^\prime_t\right) = \argmax_{y} \big\{u_1(y) \, \tilde T^\prime_t + u_\lambda(\mathtt{w}-y) \big\},$$ 
where $m_\lambda(x):=\frac{u_\lambda^\prime(\mathtt{w}-x)}{u_1^\prime(x)}$, for all $x \in\mathbb{R}$. Since $\tilde T^\prime$ might fail to be  monotone, $\bar f$ might fail to be monotone and hence might not be an element of $\mathcal{Q}$. To overcome this difficulty (arising from the nonconvexity of $T$), we consider the following relaxation of Problem \eqref{eq:gen-prob2}:
\begin{eqnarray}
&\sup_{f \in \mathcal{Q}}  \ \int \left[ u_1(f_t)\delta^{\prime}_t  +  u_\lambda(\mathtt{w}-f_t) \right]\,\textnormal{d}t,
\label{prob3alt2INNBelHomMod00}
\end{eqnarray}

\noindent where the function $\delta$ is the (smooth) convex envelope of  $\tilde T$. The convexity of $\delta$ yields monotonicity of $\delta^\prime$, which guarantees that the pointwise optimizer $f^*$ of Problem \eqref{prob3alt2INNBelHomMod00}, given by
$$f^*_t := m^{-1}_\lambda\left(\delta^\prime_t\right) = \argmax_{y} \big\{u_1(y)\delta^{\prime}_t + u_\lambda(\mathtt{w}-y)\big\}, $$ is indeed a quantile function. 

\medskip

Now, for any $f \in \mathcal{Q}$, it follows from Lemma \ref{fctQuantSolLemmaIneqAltINNBelHom2} that
\begingroup
\allowdisplaybreaks
\begin{align*}
\int \left[ u_1(f_t)\tilde T^\prime_t + u_\lambda(\mathtt{w}-f_t) \right] \textnormal{d}t
\leq 
\int u_1(f_t) \left[ \delta^\prime_t + u_\lambda(\mathtt{w}-f_t) \right] \textnormal{d}t
\leq \int \left[ u_1(f^ {*}_t)\delta^{\prime}_t + u_\lambda(\mathtt{w}-f^{*}_t) \right]\textnormal{d}t.
\end{align*}
\endgroup

\noindent Letting $\mathcal{D} := \Big\{t \in \left[0,1\right]: \delta_t \neq \tilde T_t\Big\} = \Big\{t \in \left[0,1\right]: \delta_t < \tilde T_t\Big\}$, it follows that
\begin{equation*}
\begin{split}
\int \left[\tilde T_t - \delta_t\right] \textnormal{d} u_1(f^{*}_t)
= \int_{\mathcal{D}} \left[\tilde T_t - \delta_t\right] \textnormal{d} u_1(f^{*}_t).
\end{split}
\end{equation*}
But, since $\delta$ is affine on $\mathcal{D}$, $\textnormal{d}\delta^{\prime} = 0$ on $\mathcal{D}$, and it follows from $\textnormal{d} f^ {*}_t = \left(m^{-1}_\lambda\right)^{\prime}\left(\delta^{\prime}_t\right) \textnormal{d}\delta^{\prime}_t$ that $\textnormal{d} f^{*}_t= 0$ on $\mathcal{D}$. Consequently, $\displaystyle\int \left[\tilde T_t - \delta_t\right] \textnormal{d} u_1(f^{*}_t) = 0$. Therefore, applying Fubini's Theorem yields
$$0= \int (\tilde T_t - \delta_t) \, \textnormal{d} u_1(f^{*}_t)
=\int u_1(f^{*}_t)( \tilde T^{\prime}_t - \delta^{\prime}_t)  \,  \textnormal{d}t.$$

\noindent Hence, $\displaystyle\int u_1(f^{*}_t) \,  \tilde T^\prime_t \, \textnormal{d}t = \int u_1(f^{*}_t) \, \delta^{\prime}_t \, \textnormal{d}t$. Therefore, for all $f \in \mathcal{Q}$,
\begin{equation*}
\begin{split}
\int \left[u_1(f_t)\tilde T^{\prime}_t + u_\lambda\(\mathtt{w}-f_t\)\right] \textnormal{d}t
\leq \int \left[u_1(f^{*}_t)\delta^{\prime}_t + u_\lambda\(\mathtt{w}-f^{*}_t\)\right] \textnormal{d}t 
= \int \left[u_1(f^{*}_t)\tilde T^{\prime}_t  + u_\lambda(\mathtt{w}-f^{*}_t)\right] \textnormal{d}t.
\end{split}
\end{equation*}

\noindent Thus,  $f^{*}$ solves Problem \eqref{eq:initial-problem}, and so
${X}^{*} = f^{*}(\mathtt{U})= m^{-1}_\lambda(\delta^\prime(\mathtt{U}))$ solves Problem \eqref{eq:gen-prob2}. The uniqueness in distribution of ${X}^*$ follows from \cite[Lemma A.8]{BBG24}.\qed

\bigskip

The following lemma was employed  for the proof of Lemma \ref{LemBBGnew}.

\begin{lemma}[Lemma A.5 in \cite{BBG24}]
Let $\delta$ be the convex envelope of $\tilde T$ on $\left[0,1\right]$. Then for any $f \in \mathcal{Q}$, we have 
$\displaystyle\int u_1\(f_t\) \, \tilde T^{\prime}_t \, \textnormal{d}t \leq \int u_1\(f_t\) \, \delta^{\prime}_t \, \textnormal{d}t.$
\label{fctQuantSolLemmaIneqAltINNBelHom2}
\end{lemma}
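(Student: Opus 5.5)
The statement to prove is Lemma A.5 of \cite{BBG24}: for every $f\in\mathcal{Q}$, $\int u_1(f_t)\,\tilde T'_t\,\textnormal{d}t \le \int u_1(f_t)\,\delta'_t\,\textnormal{d}t$. The plan is an integration by parts in the Stieltjes sense. Set $g_t := u_1(f_t)$. Since $u_1$ is increasing and $f$ is nondecreasing and left-continuous, $g$ is nondecreasing and left-continuous on $(0,1)$. Put $h := \tilde T - \delta \ge 0$ on $[0,1]$. Because $\delta$ is the convex envelope and $\tilde T$ is continuous, $h(0)=h(1)=0$: the envelope touches $\tilde T$ at the endpoints, since the affine interpolation between $(0,0)$ and $(1,1)$ lies below any point of the graph only if it is at or below it there, and in any case $\delta(0)=\tilde T(0)=0$ and $\delta(1)=\tilde T(1)=1$. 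The claim is then equivalent to
$$\int_0^1 g_t\, h'_t \,\textnormal{d}t \le 0 .$$

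First I will handle the case where $g$ is bounded, i.e.\ $f$ is bounded; this holds for quantiles of $X\in L^\infty$. On $[0,1]$ we have $h(t)=\int_0^t h'_s\,\textnormal{d}s$, with $h'$ integrable because $\tilde T$ is $C^1$ and $\delta$ is convex with $\delta'$ monotone and bounded between $\tilde T'(0)$ and $\tilde T'(1)$. Apply Fubini, writing $g_t = g_{0+} + \int_{(0,t)} \textnormal{d}g_s$:
$$\int_0^1 g_t h'_t\,\textnormal{d}t = g_{0+}\,h(1) + \int_{(0,1)} \Big(\int_s^1 h'_t\,\textnormal{d}t\Big)\textnormal{d}g_s = -\int_{(0,1)} h(s)\,\textnormal{d}g_s .$$
Here I used $h(1)=0$ and the identity $\int_s^1 h' = h(1)-h(s) = -h(s)$. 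At atoms of $\textnormal{d}g$ the half-open convention does not matter, because $h$ is continuous. Since $h\ge 0$ and $\textnormal{d}g\ge 0$, the right-hand side is $\le 0$, which gives the inequality.

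For unbounded $f$ I will truncate: take $f^{(k)} := (f\wedge k)\vee(-k)$, which is still in $\mathcal{Q}$. Apply the bounded case and pass to the limit by monotone convergence, separating the positive and negative parts of $u_1$ and using that $\tilde T'$ and $\delta'$ are bounded and nonnegative. This limit step is the only delicate point. Since the paper works in $L^\infty$, I expect it to be avoidable by restricting to bounded $f$.

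The main obstacle is justifying that $h$ is absolutely continuous with $h(0)=h(1)=0$, i.e.\ that the envelope agrees with $\tilde T$ at the endpoints. I would argue this directly. The constant function $\tilde T(0)=0$ and the chord construction show that $\delta\ge$ any convex minorant, while $\delta\le\tilde T$. At $t=0$, the convex function $\max(0,\,\ell(t))$, with $\ell$ affine through $(0,0)$ and of slope $\min\tilde T(t)/t$ on a neighborhood, shows $\delta(0)=0$; the same kind of argument gives $\delta(1)=1$. Convexity of $\delta$ then gives local Lipschitz continuity in the interior. Boundedness of $\delta'$ by the derivative bounds of $\tilde T$ yields absolute continuity on the closed interval.
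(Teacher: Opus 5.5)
Your proof is correct and follows the standard argument. The paper only cites \cite{BBG24} for this lemma, but your identity $\int u_1(f_t)(\tilde T'_t-\delta'_t)\,\textnormal{d}t=-\int(\tilde T_t-\delta_t)\,\textnormal{d}u_1(f_t)\le 0$, obtained by Fubini from $\tilde T-\delta\ge 0$ vanishing at $0$ and $1$ and $u_1\circ f$ nondecreasing, is the same device the paper uses for $f^*$ in the proof of Lemma \ref{LemBBGnew}, where the right-hand side vanishes.

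One slip to fix: $\delta'$ need not lie between $\tilde T'(0)$ and $\tilde T'(1)$; in the inverse S-shaped case, $\delta'(0+)=\tilde T(p^*)/p^*$ can be below both. The correct bound is $\min\tilde T'\le\delta'\le\max\tilde T'$, which follows from the affine minorants $t\mapsto t\min\tilde T'$ and $t\mapsto 1-(1-t)\max\tilde T'$ of $\tilde T$. These same minorants also give $\delta(0)=0$, $\delta(1)=1$ and continuity of $\delta$ at the endpoints, so integrability of $h'$ and the rest of your argument go through unchanged.
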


\medskip
\noindent{\bf{Proof of Corollary \ref{corsec2}:}}  This is a direct consequence of Proposition \ref{PropPOMax}, Lemma \ref{PropRepAg}, Lemma \ref{LemBBGnew}, and the observation that if $T$ is convex, then $\tilde T$ is concave and hence $\delta(t)=t$, for all $t\in[0,1]$.\qed

\bigskip
\section{Proofs for Section \ref{Section:noisePO}}

\noindent{\bf{Proof of Proposition \ref{Prop:MG}:}} 
The proof of the first part of this proposition is similar to that of \cite[Lemma A.8]{Ghossoub2019b}. We provide it below for the sake of completeness. 

\medskip

Suppose that $\tilde T$ is S-shaped with inflection point $t_0 \in [0,1]$. First, note that $\tilde T\(0\) = \delta_{\tilde T}\(0\) = 0$ and $\tilde T\(1\) = \delta_{\tilde T}\(1\) = 1$. Moreover, $\tilde T$ is convex on $\left[0,t_{0}\right)$, concave on $\left[t_{0},1\right]$, and increasing on $\left[0,1\right]$. Hence, $\tilde T^{\prime}\(t\) \geq 0$ for all $t \in \left[0,1\right]$, $\tilde T^{\prime}$ is nondecreasing on $\left[0,t_{0}\right)$, $\tilde T^{\prime}$ attains its maximum $\tilde T^{\prime}\(t_{0}\)$ at $t_{0}$, and $\tilde T^{\prime}$ is nonincreasing on $\left[t_{0},1\right]$.  

\medskip

Moreover, since $\tilde T$ is convex on $\left[0,t_{0}\right]$ and concave on $\left[t_{0},1\right]$, there exists some $z^{*} \in \left[0,1\right]$, which is unique by strict monotonicity of $\tilde T$, such that $\delta_{\tilde T}\(t\) = \tilde T\(t\)$ for all $t \in \left[0,z^{*}\right]$ and $\delta_{\tilde T}\(t\) \neq \tilde T\(t\)$ (i.e., $\delta_{\tilde T}\(t\) < \tilde T\(t\)$) for all $t \in \(z^{*},1\)$. Consequently, $z^{*} < t_{0}$, since otherwise $\delta_{\tilde T}$ will not be convex on all of the interval $\left[0,1\right]$. Also, $\delta_{\tilde T}$ is affine on $\left[z^{*},1\right]$, of the form $at+b$. Since $\delta_{\tilde T}\(1\) = 1$ and $\delta_{\tilde T}\(z_{*}\) = \tilde T\(z_{*}\)$, it follows that $a = \frac{1-\tilde T\(z^{*}\)}{1-z^{*}}$ and $b=1-a = \frac{\tilde T\(z^{*}\) - z^{*}}{1-z^{*}}$. Thus, 
\begin{equation}
\delta_{\tilde T}\(t\) = \left\{
\begin{array}{l l}
\tilde T\(t\) & \quad \mbox{if $t \in \left[0,z^{*}\right]$;}\\
\( \frac{1-\tilde T\(z^{*}\)}{1-z^{*}}\) t + \(\frac{\tilde T\(z^{*}\)-z^{*}}{1-z^{*}}\) = \tilde T\(z^{*}\) + \(\frac{\tilde T\(z^{*}\)-1}{z^{*}-1}\) \(t-z^{*}\) & \quad \mbox{if $t \in \left[z^{*},1\right]$.}\\
\end{array} \right.
\label{eqDeltaN}
\end{equation}

\medskip

\noindent Let $z_{0} \in \left[0,1\)$ be defined by $$z_{0} := \inf \left\{z \in \left[0,1\): \tilde T\(t\) > \tilde T\(z\) + \(\frac{1-\tilde T\(z\)}{1-z}\)\(t-z\), \hbox{ for all } t \in (z,1)\right\}.$$ Since $\tilde T$ is strictly concave on $\left(t_{0},1\right)$, the line segment connecting any two points $(t_{1},\tilde T\(t_{1}\))$ and $(t_{2},\tilde T\(t_{2}\))$, for $t_{1}, t_{2} \in \left[t_{0},1\right]$ with $t_{1} < t_{2}$, lies below the graph of $\tilde T$ (the epigraph of  $-\tilde T$ on the interval $\left[t_{0},1\right]$ is a convex set). In particular, for $t_{1} = t_{0}$ and $t_{2} = 1$, we have that for all $t \in (t_{0},1)$, $$\tilde T\(t\) > \tilde T\(t_{0}\) + \(\frac{\tilde T\(1\)-\tilde T\(t_{0}\)}{1-t_{0}}\)\(t-t_{0}\) = \tilde T\(t_{0}\) + \(\frac{1-\tilde T\(t_{0}\)}{1-t_{0}}\)\(t-t_{0}\).$$ Therefore, $t_{0} \geq z_{0}$, by definition of $z_{0}$. 

\medskip

Now, let $z_{1} := \inf \left\{z \in [0,1): \tilde T^{\prime}(z) \geq \frac{1-\tilde T\(z\)}{1-z}\right\}.$ Define the subsets $\mathcal{A},\mathcal{B}\subset \left[0,1\right]$ by 
\begin{eqnarray*}
\mathcal{A} := \left\{z : \tilde T\(t\) > \tilde T\(z\) + \frac{1-\tilde T\(z\)}{1-z}\(t-z\), \forall t \in (z,1)\right\}, \quad \textnormal{ and }  \quad   
\mathcal{B} := \left\{z : \tilde T^{\prime}\(z\) \geq \frac{1-\tilde T\(z\)}{1-z}\right\}.
\end{eqnarray*}
 Since $T$ is continuously differentiable on [0,1], so is $\tilde T$, and so for every $z \in \mathcal{A}$,
$$\tilde T^{\prime}\(z\) = \underset{t \to z}\lim \, \frac{\tilde T\(t\) - \tilde T\(\tilde z\) }{t-z} = \underset{t \downarrow z}\lim \, \frac{\tilde T\(t\) - \tilde T\(\tilde z\) }{t-z}  \geq \frac{1-\tilde T\(z\)}{1-z},$$
implying that $\mathcal{A} \subseteq \mathcal{B}$. If $\mathcal{A} \subsetneq \mathcal{B}$, there exists $\tilde z$ such that $\tilde T\(t\) > \tilde T(\tilde z) + (\frac{1-\tilde T\(\tilde z\)}{1-\tilde z})\(t-\tilde z\)$, for all $t \in (\tilde z,1)$, and $\tilde T^{\prime}\(\tilde z\) < \frac{1-\tilde T\(\tilde z\)}{1-\tilde z}$, implying that $\tilde T^{\prime}\(\tilde z\) \geq \frac{1 - \tilde T\(\tilde z\)}{1 - \tilde z}$ and $\tilde T^{\prime}\(\tilde z\) < \frac{1 - \tilde T\(\tilde z\)}{1 - \tilde z}$, a contradiction. Therefore, $\mathcal{A} = \mathcal{B}$, and so $z_{0} = z_{1}$. 

\medskip

We next show that $z^{*} = z_{0} = z_{1}$. Suppose, by way of contradiction, that $z^{*} \neq z_{0}$. If $z^{*} < z_{0}$, then there exists some $m \in (z^{*},1)$ such that $\tilde T\(m\) \leq \tilde T\(z^{*}\) + (\frac{1 - \tilde T(z^{*})}{1 - z^{*}}) \(m-z^{*}\)  = \delta_{\tilde T}^{*}(m)$, contradicting the fact that $\delta_{\tilde T}\(t\) < \tilde T\(t\)$, for all $t \in \(z^{*},1\)$. Now, suppose that $z^{*} > z_{0} = z_{1}$, and fix some $\bar z \in (z_0,z^*)$. Then $\tilde T(\bar z) = \delta_{\tilde T}(\bar z)$, $\tilde T^{\prime}\(\bar z\) \geq \frac{1 - \tilde T\(\bar z\)}{1-\bar z}$, and $\tilde T\(t\) > \tilde T\(\bar z\) + (\frac{1-\tilde T\(\bar z\)}{1-\bar z})(t-\bar z)$, for all  $t \in (\bar z,1)$. In particular, 
$$
\delta_{\tilde T}(z^*) = \tilde T\(z^{*}\) 
> \tilde T\(\bar z\) + \(\frac{1-\tilde T\(\bar z\)}{1-\bar z}\)\(z^{*}-\bar z\)
= \delta_{\tilde T}(\bar z) + \(\frac{1-\delta_{\tilde T}(\bar z)}{1-\bar z}\)\(z^{*}-\bar z\).
$$

\noindent However, since $\delta_{\tilde T}$ is convex on $[0,1]$, it lies below the line segment connecting the points $\(\bar z, \delta_{\tilde T}(\bar z)\)$ and $\(1, \delta_{\tilde T}(1)\) = \(1, 1\)$. Therefore, 
$$\delta_{\tilde T}(z^*) 
\leq
\delta_{\tilde T}(\bar z) + \(\frac{1-\delta_{\tilde T}(\bar z)}{1-\bar z}\)\(z^{*}-\bar z\),$$
a contradiction. Hence, $z^* = z_0 = z_1$, and so $\delta_{\tilde T}$ is given by:
\begin{equation}
\delta_{\tilde T}\(t\) = \left\{
\begin{array}{l l}
\tilde T\(t\) & \quad \mbox{if $t \in \left[0,z_{0}\right]$;}\\
\tilde T\(z_{0}\) + \(\frac{1 - \tilde T\(z_{0}\)}{1 - z_{0}}\) \(t-z_{0}\) & \quad \mbox{if $t \in \left[z_{0},1\right]$.}\\
\end{array} \right.
\label{eqDeltaz0bN}
\end{equation}

\medskip

Finally, since $\delta_{\tilde T}$ is convex on the interval $\left[0,1\right]$, the line segment connecting the two points $\(0,\delta_{\tilde T}\(0\)\) = \(0,0\)$ and $\(0,\delta_{\tilde T}\(1\)\) = \(1,1\)$ lies above the graph of $\delta_{\tilde T}$. However, this line segment is the graph of the identity function on $\left[0,1\right]$. Consequently, $\delta_{\tilde T}\(t\) \leq t$, for all $t \in \left[0,1\right]$. In particular, since $\tilde T\(z_{0}\) = \delta_{\tilde T}\(z_{0}\)$ by eq.\ \eqref{eqDeltaz0bN}, we have $ \tilde T\(z_{0}\) \leq z_{0}$.

\medskip

The proof of the second part of Proposition \ref{Prop:MG} follows by a symmetric argument.\qed
\medskip

\bigskip

\noindent{\textbf{Proof of Proposition  \ref{prop:pdf}:}}     
Let $\mathtt{U}  \sim Uni(0,1)$. By Theorem \ref{ThMain}, we have ${X}_1 =m_\lambda^{-1}( \delta^\prime(\mathtt{U}))$, and by Proposition \ref{Prop:MG}, since $T$ is inverse S-shaped, the convex envelope $\delta$ of $\tilde T$ satisfies
$$
\delta^\prime(U)
=
\frac{\tilde T(p^*)}{p^*}\,\mathbf 1_{\{U <  p^*\}}
+
\tilde T^\prime(p)\,\mathbf 1_{\{U \geq p^*\}}.
$$
Consequently,
$$
X_1 = x_0 \, \mathbf 1_{\{U < p^*\}}
+
m_\lambda^{-1}\left(\tilde T^\prime(U)\right)\,\mathbf 1_{\{U \geq p^*\}},
$$
where
$$
x_0 := m_\lambda^{-1}\,\left(\frac{\tilde T(p^*)}{p^*}\right).
$$
Hence, $X_1$ has an atom of mass $p^* = \p[U < p^*]$ at $x_0$. On the event $\{U \geq p^*\}$, define the transformation
$$I(p):=m_\lambda^{-1}\left(\tilde T^\prime(p)\right),$$
for $p\in [p^*,1]$, so that
$$
X_1 = x_0 \, \mathbf 1_{\{U < p^*\}}
+
I\left(U\right)\,\mathbf 1_{\{U \geq p^*\}}.
$$

Since $\tilde T^\prime$ and $m_\lambda$ are strictly increasing, $I$ is strictly increasing and continuously differentiable on $[p^*,1]$. Moreover, for $x\in I([p^*,1])$, 
$$
f_1(x) := f_{X_1}(x) = f_{I(U)}(x) = \frac{d}{dx} \p[I(U) \leq x] =
\left|(I^{-1})^\prime(x)\right|.
$$
Since $I^{-1}(x) = (\tilde T')^{-1}\left(m_\lambda(x)\right)$, we have
$$(I^{-1})^\prime(x) = \frac{d}{dx}(\tilde T')^{-1}\left(m_\lambda(x)\right).
$$

\noindent Now, strict convexity of $\tilde T$ on $[p^*,1]$ implies $\tilde T''>0$ on $[p^*,1]$. Monotonicity of $m_\lambda$ gives $m_\lambda'>0$. Therefore,
\begin{align*}
f_1(x) 
&= \left| \frac{d}{dx}(\tilde T')^{-1}\left(m_\lambda(x)\right)\right|
= \left|\frac{m_\lambda^\prime(x)}
{\tilde T''\left((\tilde T')^{-1}(m_\lambda(x))\right)}\right|
= \frac{m_\lambda^\prime(x)}
{\tilde T''\left((\tilde T')^{-1}(m_\lambda(x))\right)}.
\end{align*}

\noindent Since $\tilde T'(t)=T'(1-t)$, we have $(\tilde T')^{-1}(y)=1-(T')^{-1}(y)$. Therefore,
$$\tilde T''\left((\tilde T')^{-1}(m_\lambda(x))\right)
=
- T''\left(1-(\tilde T')^{-1}(m_\lambda(x))\right)
=
-T''\left((T')^{-1}(m_\lambda(x))\right),
$$
and hence 
$$f_1(x)=
\frac{- m_\lambda'(x)}
{T''\left((T')^{-1}(m_\lambda(x))\right)},$$

\noindent for $x\in I([p^*,1])$. \qed

\bigskip

\noindent{\bf{Proof of Corollary \ref{cor:palpha}:}}
\

(1)    By Proposition \ref{Prop:MG}.1, for $S$-shaped Prelec distortion functions ($\alpha > 1$), we have
for the first deviation point, $p^*(\alpha)$, of the convex envelope with respect to  $ \tilde T$ is given by
$$p^*(\alpha) = \inf \left\{ p\in [0, 1) : \tilde T'(p) \geq \frac{\tilde T(p) - 1}{p - 1} \right\},$$ 
where $T\(p\) = \exp\(-\(-\ln\(p\)\)^{\alpha}\), \ \forall p \in [0,1],$ 
with $\alpha> 1$. Then
\begin{align*}
&T^\prime(p) = \frac{-\alpha}{p \, \ln(p)} \, \exp\(-\(-\ln\(p\)\)^{\alpha}\) \, \(-\ln(p)\)^\alpha \geq 0,\\
&\tilde T(p) = 1 - \exp\(-\(-\ln\(1-p\)\)^{\alpha}\),\\
&\tilde T^\prime(p) =  \frac{-\alpha}{(1-p) \, \ln(1-p)} \, \exp\(-\(-\ln\(1-p\)\)^{\alpha}\) \, \(-\ln(1-p)\)^\alpha \\
&\quad\quad =  \frac{\alpha}{1-p} \, \exp\(-\(-\ln\(1-p\)\)^{\alpha}\) \, \(-\ln(1-p)\)^{\alpha-1} \geq 0.
\end{align*}

\noindent Hence,
\begin{align*}
p^*(\alpha) 
&= \inf \left\{ p\in [0, 1) : \frac{\alpha}{1-p} \, \exp\(-\(-\ln\(1-p\)\)^{\alpha}\) \, \(-\ln(1-p)\)^{\alpha-1} \geq \frac{\exp\(-\(-\ln\(1-p\)\)^{\alpha}\)}{1 - p} \right\}.
\end{align*}

\bigskip

\noindent For the S-shaped Prelec function that we consider here, the functions $\tilde T'(p)$ and $\frac{1 - \tilde T(p)}{1 - p}$ cross once, and the point at which they cross is precisely $p^*(\alpha)$, so that
$$\tilde T^\prime\(p^*(\alpha)\) = \frac{1 - \tilde T\(p^*(\alpha)\)}{1 - p^*(\alpha)},$$
that is
$$\frac{\alpha}{1-p^*(\alpha)} \, \exp\(-\(-\ln\(1-p^*(\alpha)\)\)^{\alpha}\) \, \(-\ln(1-p^*(\alpha))\)^{\alpha-1}
=
\frac{\exp\(-\(-\ln\(1-p^*(\alpha)\)\)^{\alpha}\)}{1 - p^*(\alpha)},$$

\bigskip

\noindent which yields the desired closed form 
$p^*(\alpha)
= 1 - 
\exp(-\(\frac{1}{\alpha}\)^{\frac{1}{\alpha-1}}).$

\bigskip
   (2) 
By Proposition \ref{Prop:MG}(2)  for an inverse S-shaped Prelec distortion function, with $\alpha < 1$, we have 
\begin{equation*}
\begin{split}
p^*(\alpha) 
&= \sup \left\{p \in \left[0,1\): \tilde T^{\prime}\(p\) \leq \frac{\tilde T\(p\)}{p}\right\}\\
&= \sup \left\{p \in \left[0,1\): \frac{\alpha}{1-p} \, \exp\(-\(-\ln\(1-p\)\)^{\alpha}\) \, \(-\ln(1-p)\)^{\alpha-1} \leq \frac{1 - \exp\(-\(-\ln\(1-p\)\)^{\alpha}\)}{p}\right\}.\\
\end{split}
\end{equation*}

\noindent The functions $\tilde T'(p)$ and $\frac{\tilde T(p)}{p}$ cross once, and the point at which they cross is precisely $p^*(\alpha)$, so that
$\tilde T^\prime\(p^*(\alpha)\) = \frac{\tilde T\(p^*(\alpha)\)}{p^*(\alpha)},$
that is
\begin{eqnarray}\label{palpha}&&
\frac{\alpha}{1-p^*(\alpha)} \, \exp\(-\(-\ln\(1-p^*(\alpha)\)\)^{\alpha}\) \, \(-\ln(1-p^*(\alpha))\)^{\alpha-1} 
\\&=&
\frac{1 - \exp\(-\(-\ln\(1-p^*(\alpha)\)\)^{\alpha}\)}{p^*(\alpha)}.\nonumber
\end{eqnarray}

\medskip

Now  define  $
x = -\ln(1 - p^*(\alpha))$. 
Then we can rewrite $ 1 - p^*(\alpha)$ as 
$
1 - p^*(\alpha) = e^{-x}$. 
Rewriting \eqref{palpha}  in terms of $x$ yields
\[
\frac{\alpha}{e^{-x}} e^{  - x^\alpha } x^{\alpha-1}
=
\frac{1 - e^{-x^\alpha}}{1 - e^{-x}}  \Longleftrightarrow  \alpha e^{-x^\alpha} x^{\alpha-1} = \frac{1 - e^{-x^\alpha}}{e^x(1 - e^{-x})}.
\]
Multiplying both sides by $ e^x (1 - e^{-x})$ yields $
\alpha x^{\alpha-1} e^{-x^\alpha} e^x (1 - e^{-x}) = 1 - e^{-x^\alpha}
$. 
Rearrange to make $ e^{-x^\alpha}$ dividing both sides by $ \alpha x^{\alpha-1} e^x (1 - e^{-x}) + 1 $ gives
\[
e^{-x^\alpha} = \frac{1}{\alpha x^{\alpha-1} e^x (1 - e^{-x}) + 1},
\]
and so: 
\[
x^\alpha = -\ln \left( \frac{1}{\alpha x^{\alpha-1} e^x (1 - e^{-x}) + 1} \right) = \ln \left( \alpha x^{\alpha-1} e^x (1 - e^{-x}) + 1 \right).\]
This gives an implicit definition for $p^*(\alpha) $, via $x = -\ln(1 - p^*(\alpha))$, for $
p^*(\alpha) = 1 - e^{-x}$. \qed

\bigskip

\noindent{\bf{Proof of Corollary  \ref{Cor_explict}:}} 
Note that $X_1
= m^{-1}_\lambda\(\delta^\prime(\mathtt{U})\)$, by Theorem \ref{ThMain}. In that case, it holds that $u_\lambda(x)=-\frac{1}{\overline \beta}e^{-\overline\beta x}\sum_{i=2}^n\lambda_i$.
We then have  $m_\lambda(x) =e^{\(\beta_1+\overline\beta\) x}\sum_{i=2}^n\lambda_i$ for $x\in\mathbb{R}$, and so 
$$
m^{-1}_\lambda(y) = \frac{1}{\beta_1+\overline \beta} \, \left(\ln(y)-\ln(\sum_{i=2}^n\lambda_i)\right),
$$
for $y>0$. In general, $m_\lambda^{-1}$ is  increasing  with $\lim_{y\rightarrow 0}m_\lambda^{-1}(y)=-\infty$ and $\lim_{y\rightarrow \infty}m_\lambda^{-1}(y)=\infty$.\qed

\bigskip

\section{Proofs for Section \ref{Sec:nud}}

\noindent{\bf{Proof of Lemma \ref{L:TM}:}}  
By the definition of $T_M$, we have 
 \begin{align*}
\tilde T_M\(p\)
= 1 - T_M\(1-p\)
&= 1 - \left((1-f(M))T(1-p)+f(M) (1-p)\right)\\
&= 1 - T(1-p) + f(M)T(1-p) - f(M) (1-p)\\
&= \tilde T(p) - f(M) \left[\tilde T(p) - p\right],
\end{align*}

and thus $
\tilde T_M^\prime\(p\)
= \tilde T^\prime(p) - f(M) \left[\tilde T^\prime(p) - 1\right]
= \tilde T^\prime(p) \left[1 - f(M)\right] + f(M)$. Therefore,
\begin{align*}
U_1(X, T_M)
&=\displaystyle\int_0^1 u_1\(F_{X}^{-1}\(p\)\) \tilde T_M^\prime\(p\) \, \textnormal{d}p \\
&=\displaystyle\int_0^1 u_1\(F_{X}^{-1}\(p\)\)\left\{ \tilde T^\prime(p) \left[1 - f(M)\right] + f(M) \right\} \, \textnormal{d}p \\
&= \left[1 - f(M)\right] \displaystyle\int_0^1 u_1\(F_{X}^{-1}\(p\)\) \tilde T^\prime(p) \, \textnormal{d}p 
+
f(M) \, \displaystyle\int_0^1 u_1\(F_{X}^{-1}\(p\)\)   \, \textnormal{d}p \\
&= \left[1 - f(M)\right] \, U_1(X, T) + f(M) \, \int u_1(X) \, \textnormal{d}\p.
\end{align*}
\qed

\bigskip

{\noindent{\bf{Proof of Lemma \ref{L:delta}:}}  
Note that $T:[0,1]\to[0,1]$ is increasing, with $T(0)=0$ and $T(1)=1$. Define $\tilde T: [0,1] \to [0,1]$ by $\tilde T(t):=1-T(1-t)$. For $f(M)\in[0,1]$, recall that $T_M: [0,1] \to [0,1]$ is defined by $T_M (t) := \left(1-f(M)\right) \, T(t) + f(M)\, t$, and $\tilde T_M: [0,1] \to [0,1]$ is given  by
$$
\tilde T_M(t) 
:= 1 - T_M(1-t) 
=\left(1-f(M)\right) \, \tilde T(t) + f(M)\,t.
$$

Recall that the convex envelope $\delta$ of $\tilde T$ on $[0,1]$ is defined as the largest convex function that satisfies $\delta(t) \leq \tilde T(t)$, for all  $t \in [0,1]$: 
$$
\delta(t) 
= \sup \, \left\{h(t) : h \hbox{ is convex on } [0,1], \  h(t) \leq \tilde T(t), \ \forall \, t \in [0,1]\right\}.
$$

\noindent Similarly, $\delta_M$ is the convex envelope of $\tilde T_M$. Define the function $g$ on $[0,1]$ by
$$
g(t) := \left(1-f(M)\right)\, \delta(t) + f(M) \, t.
$$

\noindent We show that $g = \delta_M$ by showing that $g$ satisfies the definition of the convex envelope of $\tilde T_M$.

\medskip

 \textbf{\bf $g$ is convex:} Since $f(M) \in [0,1]$, $g$ is a convex combination of the convex function $\delta$, and the affine (hence convex) function $t \mapsto t$. A convex combination of convex functions is convex, and therefore $g$ is convex on $[0,1]$.

\medskip

 {\bf $g \leq \tilde T_M$ on $[0,1]$:} For any $t \in [0,1]$, we have $\delta(t) \leq \tilde T(t)$, by the definition of the convex envelope. Since $f(M) \in [0,1]$, it follows that
$$
g(t) 
= \left(1-f(M)\right)\,\delta(t) + f(M)\,t 
\leq \left(1-f(M)\right)\,\tilde T(t) + f(M)\,t 
= \tilde T_M(t).
$$

\medskip
 {\bf $g$ is maximal:} For any convex function  $h:[0,1] \to \mathbb{R}$ with $h \leq \tilde T_M$ on $[0,1]$, we have $h \leq g$.

\medskip
(1) {\bf If $f(M) < 1$:} Define the function $h_0$ by  
$$
h_0(t) := \frac{h(t) - f(M)\,t}{1-f(M)}.
$$

\medskip

\noindent We claim that $h_0$ is convex and $h_0 \leq \tilde T$ on $[0,1]$.

\medskip

\begin{enumerate}
\item[(a)] {\bf $h_0$ is convex:}
For any $t_1, t_2 \in [0,1]$ and $\lambda \in [0,1]$, the convexity of $h$ gives
$$
h(\lambda t_1 + (1-\lambda)t_2) \leq \lambda \, h(t_1) + (1-\lambda) \, h(t_2).
$$

\noindent Let $t_\lambda := \lambda t_1 + (1-\lambda)t_2$. Then
\begin{align*}
h_0(t_\lambda) 
= \frac{h(t_\lambda) - f(M)\,t_\lambda}{1-f(M)} 
&\leq \frac{\lambda \, h(t_1) + (1-\lambda)\,h(t_2) - f(M)(\lambda t_1 + (1-\lambda)\,t_2)}{1-f(M)} \\
&= \frac{\lambda \, [h(t_1) - f(M) \, t_1] + (1-\lambda)\,[h(t_2) - f(M)\, t_2]}{1-f(M)} \\
&= \lambda \, h_0(t_1) + (1-\lambda) \, h_0(t_2),
\end{align*}

\noindent and hence $h_0$ is convex.

\bigskip

\item[(b)] {\bf $h_0 \leq \tilde T$ on $[0,1]$:}
Since $h \leq \tilde T_M$ by hypothesis, we have for all $t \in [0,1]$,
$$
h(t) 
\leq \tilde T_M(t) = \left(1-f(M)\right)\,\tilde T(t) + f(M)\, t,
$$

\noindent and hence
$
h(t) - f(M)\,t \leq \left(1-f(M)\right)\,\tilde T(t).
$ Since $1-f(M) > 0$, it follows that 
$$
h_0(t) = \frac{h(t) - f(M)t}{1-f(M)} \leq \tilde T(t).
$$
\end{enumerate}

\medskip

By (a) and (b), the maximality of the convex envelope $\delta$ implies that $h_0 \leq \delta$ on $[0,1]$. Therefore,
$$
h(t) 
= \left(1-f(M)\right) \, h_0(t) + f(M) \,t 
\leq \left(1-f(M)\right) \, \delta(t) + f(M) \, t 
= g(t).
$$
\medskip

(2) {\bf If $f(M) = 1$:}
In this case, $\tilde T_M(t) = t$ and $g(t) = t$. Since $t \mapsto t$ is convex and equals $\tilde T_M$, we have $\delta_M(t) = t = g(t)$.

\medskip

\noindent Consequently, $g$ is the largest convex function bounded above by $\tilde T_M$, which means that $g = \delta_M$.

\medskip

Finally, since $\widetilde T$ is bounded on $[0,1]$ and $\delta$ is its convex envelope,
$\delta$ is finite and convex on $[0,1]$, with $\delta(0)=\widetilde T(0)=0$ and $\delta(1)=\widetilde T(1)=1$. Any finite convex function on a compact interval is locally Lipschitz continuous on the interior of its domain, and hence absolutely continuous on every compact subinterval of $(0,1)$. In particular, $\delta$ is absolutely continuous on $(0,1)$ and therefore differentiable a.e.\ on $(0,1)$. Now, since $\delta_M(t)=(1-f(M))\,\delta(t)+f(M)\,t$, $\delta_M$ is also finite and convex on $[0,1]$, and it is absolutely continuous on $(0,1)$. Linearity of differentiation under scaling and addition then yields $\delta_M'(t)=(1-f(M))\,\delta'(t)+f(M)$, for a.e.\ $t\in(0,1)$. \hfill $\square$
}

\bigskip

\noindent\textbf{Proof of Proposition} \ref{ThMain2}: \ 

(1) For a given effort level $M$, applying Theorem \ref{ThMain} to the two-agent economy with aggregate endowment $\mathtt{w}-M$, the allocation $\mathbf{X}^M$ solves Problem \eqref{eqgen-prob2} if and only if
\begin{align*}
X_1^M =m^{-1}_{\lambda_2}\( \delta_M^\prime(\mathtt{U})\) \ \ \hbox{and} \ \  X_2^M= \mathtt{w}-M-{X}_1^M,
\end{align*}

\noindent where $m_{\lambda_2}(x):=\lambda_2 \, \frac{u_2^\prime\(\mathtt{w}-M-x\)}{u_1^\prime\(x\)}$. By Lemma \ref{L:delta}, we have
$$\delta_M'(t) 
= (1-f(M)) \, \delta'(t)+f(M)
=\delta'(t) - f(M) \left[\delta'(t)-1\right],$$ 

\noindent for a.e.\ $t\in(0,1)$. Consequently,
$$X_1^M = m_{\lambda_2}^{-1}\left(\delta'(U) - f(M)\left[\delta'(U) - 1\right]\right).$$

\medskip

(2) Note that the function $m_{\lambda_2}(x)=\lambda_2\,\frac{u_2^\prime(\mathtt{w}-M-x)}{u_1^\prime(x)}$ depends explicitly on $M$. Consequently, its inverse $m_{\lambda_2}^{-1}$ also depends on $M$. To emphasize this dependence, we write $m_{\lambda_2}\left(X_1^M;M\right)=\phi(M)$, where $\phi(M):=\delta^\prime(U)-f(M)\left(\delta^\prime(U)-1\right)$. The chain rule gives
$$
\phi^\prime(M)
=
\frac{\partial}{\partial X_1^M} \, m_{\lambda_2}\left(X_1^M;M\right) \, \frac{\partial X_1^M}{\partial M}
+
\frac{\partial}{\partial M} \, m_{\lambda_2}\left(X_1^M;M\right).
$$
Hence,
$$
\frac{\partial X_1^M}{\partial M}
=\frac{\phi^\prime(M) - \frac{\partial}{\partial M} \, m_{\lambda_2}\left(X_1^M;M\right)}
{\frac{\partial}{\partial X_1^M} \, m_{\lambda_2}\left(X_1^M;M\right)}.
$$
Since
$$
\phi^\prime(M)=-f^\prime(M)\left(\delta^\prime(U)-1\right)
\ \ \hbox{and} \ \ 
\frac{\partial}{\partial M} \, m_{\lambda_2}\left(X_1^M;M\right)
=\lambda_2\,\frac{-u_2^{\prime\prime}(\mathtt{w}-M-x)}{u_1^\prime(x)},
$$
we obtain
$$
\frac{\partial X_1^M}{\partial M}
=
\frac{-f^\prime(M)\left(\delta^\prime(U)-1\right)
+\lambda_2\,\dfrac{u_2^{\prime\prime}(\mathtt{w}-M-X_1^M)}{u_1^\prime(X_1^M)}}
{\frac{\partial}{\partial X_1^M} \, m_{\lambda_2}\left(X_1^M;M\right)}.
$$

\noindent Writing $X_1^M=m_{\lambda_2}^{-1}(\phi(M))$ and $\Lambda(x):=\frac{d}{dx}m_{\lambda_2}\big(m_{\lambda_2}^{-1}(x)\big)$, we obtain
\begin{align*}
\frac{\partial X_1^M}{\partial M}
&=
\frac{-f^\prime(M)\left(\delta^\prime(U)-1\right)
+\lambda_2\,\dfrac{u_2^{\prime\prime}\left(\mathtt{w}-M-m_{\lambda_2}^{-1}(\phi(M))\right)}{u_1^\prime\left(m_{\lambda_2}^{-1}(\phi(M))\right)}}
{\Lambda(\phi(M))} \\
&=
\frac{-f^\prime(M)\left(\delta^\prime(U)-1\right)
+\lambda_2\,\dfrac{u_2^{\prime\prime}\left(\mathtt{w}-M-m_{\lambda_2}^{-1}\left(\delta^\prime(U)-f(M)\left(\delta^\prime(U)-1\right)\right)\right)}{u_1^\prime\left(m_{\lambda_2}^{-1}\left(\delta^\prime(U)-f(M)\left(\delta^\prime(U)-1\right)\right)\right)}}
{\Lambda\left(\delta^\prime(U)-f(M)\left(\delta^\prime(U)-1\right)\right)}.
\end{align*}

\noindent Finally, since
$$
\frac{d }{dx} m_{\lambda_2}(x)
= \lambda_2 \left[
\frac{-u_2''(\mathtt{w} - M - x)}{u_1'(x)}
- \frac{u_2'(\mathtt{w} - M - x)\,u_1''(x)}{(u_1'(x))^2}
\right],
$$
it follows that
\begin{align*}
\Lambda(x)
&=\lambda_2\left[
\frac{-u_2''\,\big(\mathtt{w}-M-m_{\lambda_2}^{-1}(x)\big)}{u_1'\,\big(m_{\lambda_2}^{-1}(x)\big)}
-\frac{u_2'\,\big(\mathtt{w}-M-m_{\lambda_2}^{-1}(x)\big)\,u_1''\,\big(m_{\lambda_2}^{-1}(x)\big)}
{\Big(u_1'\,\big(m_{\lambda_2}^{-1}(x)\big)\Big)^2}
\right]\\
&=\frac{- \lambda_2}{u_1'\left(m_{\lambda_2}^{-1}(x)\right)}
\left[
u_2''\left(\mathtt{w} - M - m_{\lambda_2}^{-1}(x)\right)
+
u_2'\left(\mathtt{w} - M - m_{\lambda_2}^{-1}(x)\right)
\frac{u_1''\left(m_{\lambda_2}^{-1}(x)\right)}
{u_1'\left(m_{\lambda_2}^{-1}(x)\right)}
\right].
\end{align*}
\qed

\bigskip
\noindent{\bf{Proof of  Theorem \ref{welfareTHM}:}} 
 The welfare maximization with a control on $M$ is formally given by: 
\begin{align}\label{nudge2}
\max_{M\in [0,\mathtt{w}]} \, \sup_{X_1+X_2=\mathtt{w}-M} \left[ U_1(X_1, T_M)+U_2(X_2)\right]
=\max_{M\in [0,\mathtt{w}]} \, \sup_{X \in \cX} \, \left[ U_1(X, T_M)+U_2(\mathtt{w}-M-X)\right].
\end{align}

\medskip

\noindent For a given  effort level $M$, the optimal risk sharing rule  $(X_M, \mathtt{w}-M-X_M )$ satisfies    
\begin{align*}
 W(X_M,M)
&=\max_{X} \left\{U_1(X, T_M) + U_2(\mathtt{w}-M-X)\right\} \\     
&=U_1(X_M, T_M) + U_2(\mathtt{w}-M - X_M)\\
&=\left(1 - f(M)\right) \, U_1(X_M, T) +  \,E^\p\left[f(M)  u_1(X_M) +u_2(\mathtt{w}-M-X_M)\right].
\end{align*}

\medskip

\noindent To find the optimal monetary value $M^*$, we need to find the maximum of  $M \mapsto W (X_M, M)$, by taking first order conditions $\frac{\partial}{\partial M} \, W(X^M,M)=0$. To do so, we first rewrite $W(X^M,M)$ and then compute $\frac{\partial}{\partial M} \, W(X^M,M)$. Recall that $T_M(t)= (1-f(M)) \, T(t) + f(M) \, t,$ 
so that
\begin{align*}
\tilde T_M(t) = 1 - (1-f(M)) \, T(1-t) - f(M) \, (1-t)
\quad \textnormal{and} \quad 
\tilde T_M^\prime(t) = f(M) + (1-f(M)) \, \tilde T^\prime(t).
\end{align*}

\noindent Therefore,
\begin{align*}
W(X^M,M)
&=\left(1 - f(M)\right) \, U_1(X_M, T) +  \,E^\p[f(M)  u_1(X_M) +u_2(\mathtt{w}-M-X_M)]\\
&=\left(1 - f(M)\right) \, \int u_1(X_M) \, d T \circ \p 
\,+\,  
\int[f(M) \, u_1(X_M) +u_2(\mathtt{w}-M-X_M)] \,  d\p \\
&=\left(1 - f(M)\right) \, \int_0^1 u_1(F_{X_M}^{-1}(t)) \, T^\prime(1-t) \, dt 
\,+\,  
f(M) \, 
\int_0^1 u_1(F_{X_M}^{-1}(t)) \, dt 
+ \int_0^1 u_2(F^{-1}_{\mathtt{w}-M-X_M}(t)) \, dt\\
&=\int_0^1 \left[(1 - f(M)) \, u_1(x_M(t)) \, \tilde T^\prime(t) + f(M) \, u_1(x_M(t))
 + \ u_2(\mathtt{w}-M - x_M(1-t)) \right]\, dt.\\
&=\int_0^1 u_1(x_M(t)) \, \left[(1 - f(M)) \,  \tilde T^\prime(t) + f(M)\right] \, dt
 + \int_0^1  u_2(\mathtt{w}-M - x_M(1-t)) \, dt\\
&=\int_0^1 u_1\left(x_M(t)\right) \, \tilde T^\prime_M(t) \, dt
 + \int_0^1  u_2(\mathtt{w}-M - x_M(1-t)) \, dt,
\end{align*}

\noindent where we used 
$F^{-1}_{X_M}(t) = x_M(t)$, $F^{-1}_{\mathtt{w}-M-X_M}(t) = \mathtt{w} - M - F^{-1}_{X_M}(1-t)$, 
and $T^\prime(1-t) = \tilde T^\prime(t)$. Hence, letting $x^\prime_M (t) : =\frac{\partial }{\partial M} \, x_M(t)$, we have

\begin{align*}
\frac{\partial}{\partial M} \, W(X^M,M)
&=\int_0^1 {\frac{\partial}{\partial M} \, \left[u_1(x_M(t)) \, \tilde T^\prime_M(t) \right]} \, dt
+ \int_0^1   \frac{\partial}{\partial M}    u_2(\mathtt{w}-M - x_M(1-t))  dt\\
&=\int_0^1 \left\{u_1'(x_M(t)) x_M'(t) \tilde{T}_M'(t) + u_1(x_M(t)) f^\prime(M) \left[ 1 - \tilde{T}^\prime(t) \right]\right\}
 dt \\
&\qquad\qquad\qquad
- \int_0^1 u_2^\prime(\mathtt{w}-M - x_M(1-t)) \, \left[1 + x^\prime_M (1-t)\right] \, dt.
\end{align*}

\medskip

\noindent Thus, the FOC is given by:
$$\int_0^1 \frac{\partial}{\partial M} \, \left[u_1(x_M(t)) \, \tilde T^\prime_M(t) \right] \, dt =  -\int_0^1  \frac{\partial}{\partial M} \, u_2(\mathtt{w}-M - x_M(1-t))  \, dt.$$

\medskip

\noindent Of course, a sufficient condition for the FOC to hold is 
\begin{eqnarray}
\label{FOC}
\frac{\partial}{\partial M} \, \left[u_1(x_M(t)) \, \tilde T^\prime_M(t) \right]
= -  \frac{\partial}{\partial M} \, u_2(\mathtt{w}-M - x_M(1-t)) , \ \ \, \forall \, t \in (0,1).
\end{eqnarray}

\noindent Note that \eqref{FOC} can be written as
$$
u_1'(x_M(t)) x_M'(t) \tilde{T}_M'(t) + u_1(x_M(t)) f^\prime(M) \left[ 1 - \tilde{T}^\prime(t) \right] = u_2'(\mathtt{w} - M - x_M(1-t)) \left[ 1 + x_M'(1-t) \right].\qed
$$

\end{appendices}

\vspace{0.5cm}
\bibliographystyle{ecta}
\bibliography{Bib}
\vspace{0.6cm}

\end{document}